\crefname{equation}{}{} % Do not show 'eq.' for \crefs
\newtheorem{theorem}{Theorem}[section]
\newtheorem{lemma}[theorem]{Lemma}
\def\thm@space@setup{%
  \thm@preskip=0.35cm plus 0.1cm minus 0.1cm
  \thm@postskip=\thm@preskip % or whatever, if you don't want them to be equal
}
\algnewcommand\AND{\textbf{\ and\ }}
\algnewcommand{\algorithmicor}{\textbf{ or }}
\algnewcommand{\OR}{\algorithmicor}
\DeclareBoldMathCommand{\balpha}{$\alpha$}
\newcommand{\alpharank}{$\alpha$-Rank\xspace}
\newcommand{\balpharank}{\balpha\textbf{-Rank}\xspace}
\newcommand{\responsegraphucb}{ResponseGraphUCB\xspace}
\newcommand{\rulesep}{\unskip\ \vrule\ }
\newcommand{\ssmone}{\widetilde{\mathbf{C}}}
\newcommand{\ssmtwo}{\widetilde{\mathbf{D}}}
\newcommand{\multiline}[1]{%
  \begin{tabularx}{\dimexpr\linewidth-\ALG@thistlm}[t]{@{}X@{}}
    #1
  \end{tabularx}
}
\title{Multiagent Evaluation under Incomplete Information}
\author{%
    Mark Rowland$^{1,*}$\\
    {\small \texttt{markrowland@google.com}} \\
    \And 
    Shayegan Omidshafiei$^{2,*}$\\
    {\small \texttt{somidshafiei@google.com}} \\
    \And
    Karl Tuyls$^{2}$\\
    {\small \texttt{karltuyls@google.com}} \\
    \AND
    Julien P{\'e}rolat$^{1}$\\
    {\small \texttt{perolat@google.com}} \\
    \And
    Michal Valko$^{2}$\\
    {\small \texttt{valkom@deepmind.com}} \\
    \And
    Georgios Piliouras$^{3}$\\
    {\small \texttt{georgios@sutd.edu.sg}} \\
    \And
    R{\'e}mi Munos$^{2}$\\
    {\small \texttt{munos@google.com}} \\
    \AND
    {\normalfont $^1$DeepMind London $\hspace{5pt}$ $^2$DeepMind Paris $\hspace{5pt}$ $^3$ Singapore University of Technology and Design} \\
    \AND
    {\normalfont $^*$Equal contributors}
}
\begin{document}

\maketitle

%\begin{bibunit}
\begin{abstract}
    This paper investigates the evaluation of learned multiagent strategies in the incomplete information setting, which plays a critical role in ranking and training of agents. 
    Traditionally, researchers have relied on Elo ratings for this purpose, with recent works also using methods based on Nash equilibria. 
    Unfortunately, Elo is unable to handle intransitive agent interactions, and other techniques are restricted to zero-sum, two-player settings or are limited by the fact that the Nash equilibrium is intractable to compute. 
    Recently, a ranking method called \alpharank, relying on a new graph-based game-theoretic solution concept, was shown to tractably apply to general games.
    However, evaluations based on Elo or \alpharank typically assume noise-free game outcomes, despite the data often being collected from noisy simulations, making this assumption unrealistic in practice. 
    This paper investigates multiagent evaluation in the incomplete information regime, involving general-sum many-player games with noisy outcomes.
    We derive sample complexity guarantees required to confidently rank agents in this setting. 
    We propose adaptive algorithms for accurate ranking, provide correctness and sample complexity guarantees, then introduce a means of connecting uncertainties in noisy match outcomes to uncertainties in rankings. We evaluate the performance of these approaches in several domains, including Bernoulli games, a soccer meta-game, and Kuhn poker.
\end{abstract}

\section{Introduction}

This paper investigates evaluation of learned multiagent strategies given noisy game outcomes.
The Elo rating system is the predominant approach used to evaluate and rank agents that learn through, e.g., reinforcement learning \cite{elo1978rating,silver2017mastering,silver2018general,mnih:15}. Unfortunately, 
the main caveat with Elo is that it cannot handle intransitive relations between interacting agents, and as such its predictive power is too restrictive to be useful in non-transitive situations (a simple example being the game of \emph{Rock-Paper-Scissors}).
Two recent empirical game-theoretic approaches are \emph{Nash Averaging} \cite{balduzzi2018re} and \emph{\alpharank} \cite{omidshafiei2019alpha}. 
Empirical Game Theory Analysis (EGTA) can be used to evaluate learning agents that interact in large-scale multiagent systems, as it remains largely an open question as to how such agents can be evaluated in a principled manner \citep{Tuyls18,TuylsSym,omidshafiei2019alpha}. 
EGTA has been used to investigate this evaluation problem by deploying empirical or meta-games \citep{Walsh02,Walsh03,Wellman06,Wellman13,PhelpsPM04,PhelpsCMNPS07,TuylsP07}. 
Meta-games abstract away the atomic decisions made in the game and instead focus on interactions of high-level agent strategies, enabling the analysis of large-scale games using game-theoretic techniques. Such games are typically constructed from large amounts of data or simulations.
An \emph{evaluation} of the meta-game then gives a means of comparing the strengths of the various agents interacting in the original game (which might, e.g., form an important part of a training pipeline \citep{jaderberg2018human,jaderberg2017population,silver2017mastering}) or of selecting a final agent after training has taken place (see \cref{fig:noisypayoffs}). 

Both Nash Averaging and \alpharank assume noise-free (i.e., complete) information, and while \alpharank applies to general games, Nash Averaging is restricted to 2-player zero-sum settings. 
Unfortunately, we can seldom expect to observe a noise-free specification of a meta-game in practice, as in large multiagent systems it is unrealistic to expect that the various agents under study will be pitted against all other agents a sufficient number of times to obtain reliable statistics about the meta-payoffs in the empirical game.
While there have been prior inquiries into approximation of equilibria (e.g., Nash) using noisy observations \citep{fearnley2015learning,jordan2008searching}, few have considered evaluation or \emph{ranking} of agents in meta-games with incomplete information \citep{Wellman06,Prakash15}. 
Consider, for instance, a meta-game based on various versions of AlphaGo and prior state-of-the-art agents (e.g., Zen) \cite{Tuyls18,DSilverHMGSDSAPL16}; 
the game outcomes are noisy, and due to computational budget not all agents might play against each other. 
These issues are compounded when the simulations required to construct the empirical meta-game are inherently expensive. 

Motivated by the above issues, this paper contributes to multiagent evaluation under incomplete information. 
As we are interested in general games that go beyond dyadic interactions, we focus on \alpharank.
Our contributions are as follows:
first, we provide sample complexity guarantees describing the number of interactions needed to confidently rank the agents in question; % (\cref{sec:bounds});
second, we introduce adaptive sampling algorithms for selecting agent interactions for the purposes of accurate evaluation; % (\cref{sec:adaptive_sampling_algs});
third, we develop means of propagating uncertainty in payoffs to uncertainty in agent rankings. % (\cref{sec:uncertainty}).
These contributions enable the principled evaluation of agents in the incomplete information regime.

\begin{figure}[t]
    \centering
    \begin{subfigure}[b]{.3\textwidth}
        \includegraphics[width=\textwidth]{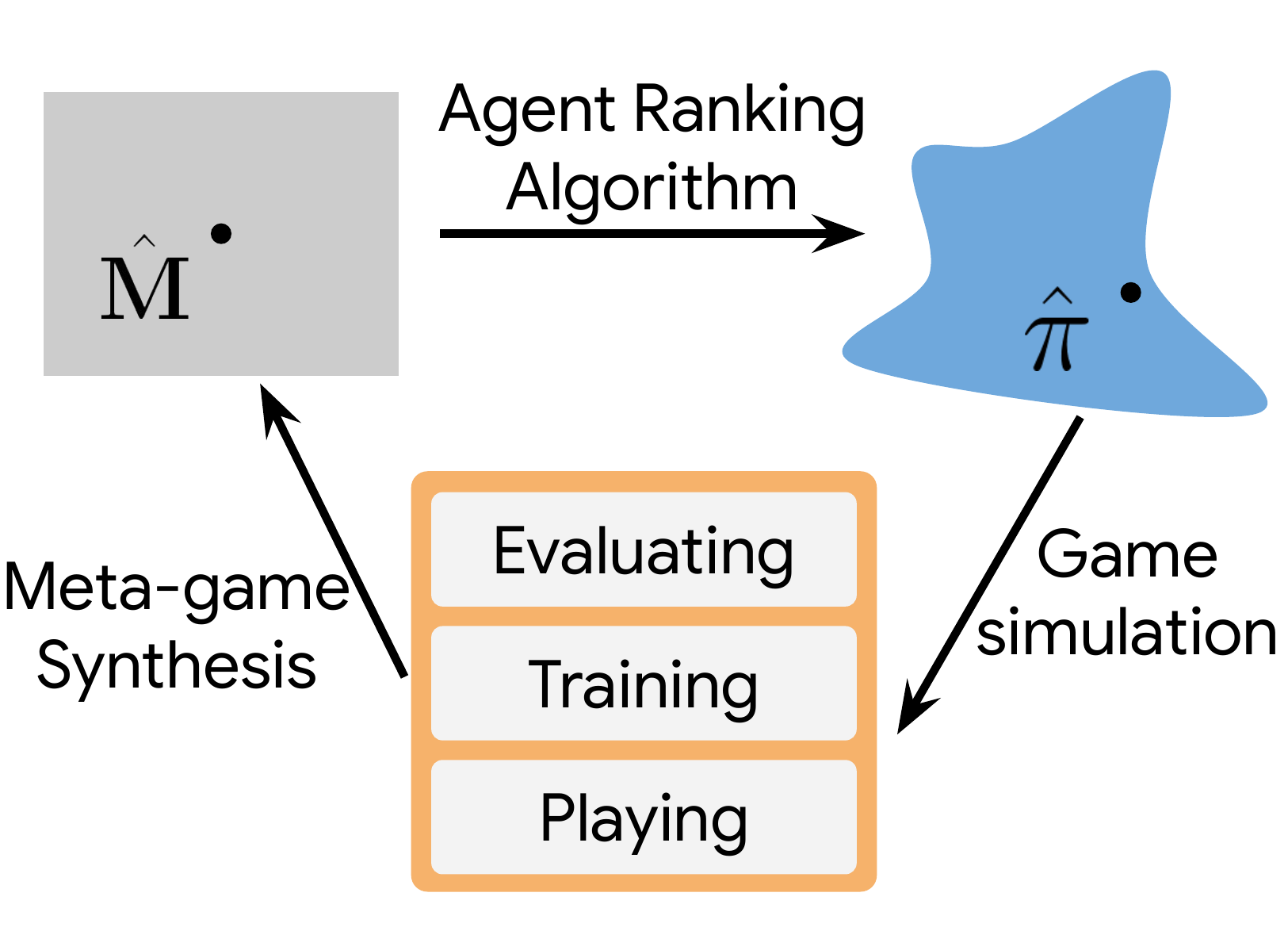}
        \vspace{-16pt}
        \caption{}
        \label{fig:noisypayoffs}
    \end{subfigure}
    \hfill
    \begin{subfigure}[b]{.68\textwidth}
        \includegraphics[width=\textwidth]{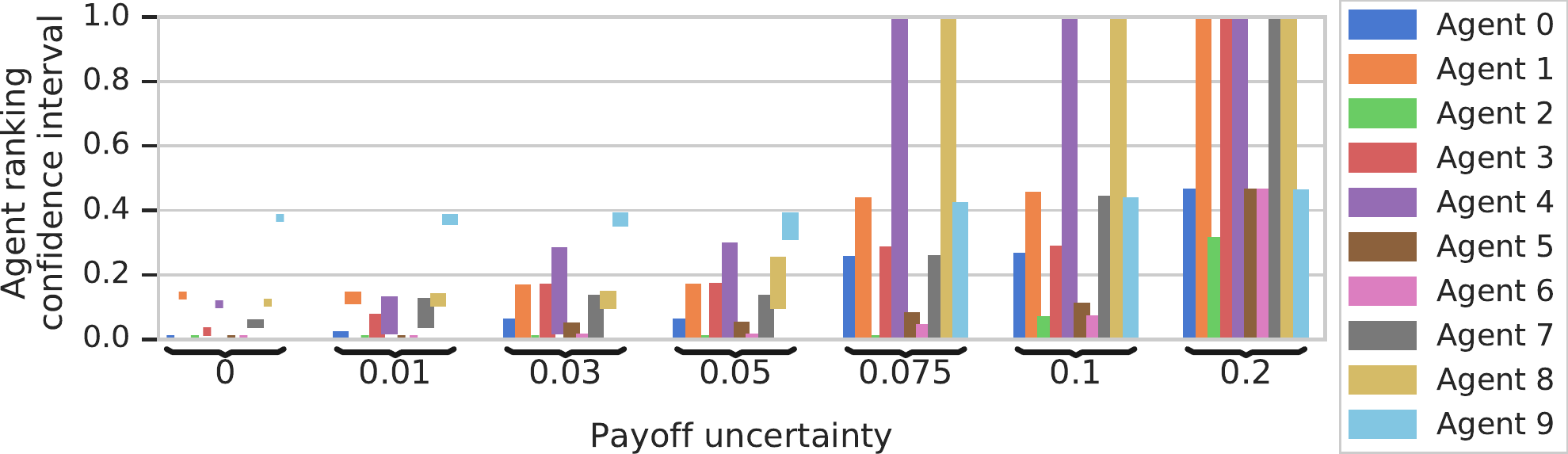}
        \vspace{-16pt}
        \caption{}
        \label{fig:uncertainty_example_soccer}
    \end{subfigure}
    \vspace{-4pt}
    \caption{\subref{fig:noisypayoffs} Illustration of converting plausible payoff matrices consistent with an empirical estimate $\hat{\mathbf{M}}$ to empirical rankings $\hat{\boldsymbol{\pi}}$. The set of plausible payoff matrices and plausible rankings are shown, respectively, in grey and blue. \subref{fig:uncertainty_example_soccer} Ranking uncertainty vs. payoff uncertainty for a soccer meta-game involving 10 agents. Each cluster of bars shows confidence intervals over ranking weights given an observed payoff matrix with a particular uncertainty level; payoff uncertainty here corresponds to the mean confidence interval size of payoff matrix entries.
    This example illustrates the need for careful consideration of payoff uncertainties when computing agent rankings.}
    \label{fig:uncertaintyillustration}
    \vspace{-10pt}
\end{figure}

\section{Preliminaries}\label{sec:preliminaries}
We review here preliminaries in game theory and evaluation.
See Appendix~\ref{sec:related_work} for related work.

\textbf{Games and meta-games. }
Consider a $K$-player game, where each player $k \in [K]$ has a finite set $S^k$ of pure strategies. 
Denote by $S = \prod_k S^k$ the space of pure strategy profiles. 
For each tuple $s = (s^1, \ldots, s^K) \in S$ of pure strategies, the game specifies a joint probability distribution $\nu(s)$ of payoffs to each player. 
The vector of expected payoffs is denoted $\mathbf{M}(s) = (\mathbf{M}^1(s),\ldots,\mathbf{M}^K(s)) \in \mathbb{R}^K$.  
In empirical game theory, we are often interested in analyzing interactions at a higher meta-level, wherein a strategy profile $s$ corresponds to a tuple of machine learning agents and the matrix $\mathbf{M}$ captures their expected payoffs when played against one another in some domain. 
Given this, the notions of `agents' and `strategies' are considered synonymous in this paper. 

\textbf{Evaluation. }
Given payoff matrix $\mathbf{M} \in (\mathbb{R}^K)^{S}$, a key task is to evaluate the strategies in the game. 
This is sometimes done in terms of a game-theoretic solution concept (e.g., Nash equilibria), but may also consist of rankings or numerical scores for strategies.
We focus particularly on the evolutionary dynamics based \alpharank method \citep{omidshafiei2019alpha}, which applies to general many-player games, but also provide supplementary results for the Elo ranking system \citep{elo1978rating}.
There also exist Nash-based evaluation methods, such as Nash Averaging in two-player, zero-sum settings \cite{balduzzi2018re,Tuyls18}, but these are not more generally applicable as the Nash equilibrium is intractable to compute and select \cite{DaskalakisGP06,Harsan88}. 

The exact payoff table $\mathbf{M}$ is rarely known; instead, an empirical payoff table $\hat{\mathbf{M}}$ is typically constructed from observed agent interactions (i.e., samples from the distributions $\nu(s)$). 
Based on collected data, practitioners may associate a set of \emph{plausible} payoff tables with this point estimate, either using a frequentist confidence set, or a Bayesian high posterior density region. 
\Cref{fig:noisypayoffs} illustrates the application of a ranking algorithm to a set of plausible payoff matrices, where rankings can then be used for evaluating, training, or prescribing strategies to play.
\Cref{fig:uncertainty_example_soccer} visualizes an example demonstrating the sensitivity of computed rankings to estimated payoff uncertainties (with ranking uncertainty computed as discussed in \cref{sec:uncertainty}).
This example highlights the importance of propagating payoff uncertainties through to uncertainty in rankings, which can play a critical role, e.g., when allocating training resources to agents based on their respective rankings during learning.

\textbf{\balpharank.} The Elo ranking system (reviewed in Appendix~\ref{sec:elo_rating}) is designed to estimate win-loss probabilities in two-player, symmetric, constant-sum games \citep{elo1978rating}.
Yet despite its widespread use for ranking \citep{lai2015giraffe,arneson2010monte,silver2017mastering,gruslys2017reactor}, Elo has no predictive power in intransitive games (e.g., Rock-Paper-Scissors) \citep{balduzzi2018re}. 
By contrast, \alpharank is a ranking algorithm inspired by evolutionary game theory models, and applies to $K$-player, general-sum games \citep{omidshafiei2019alpha}.
At a high level, \alpharank defines an irreducible Markov chain over strategy set $S$, called the \emph{response graph} of the game \cite{lanctot2017unified}.
The ordered masses of this Markov chain's unique invariant distribution $\boldsymbol{\pi}$ yield the strategy profile rankings.
The Markov transition matrix, $\mathbf{C}$, is defined in a manner that establishes a link to a solution concept called Markov-Conley chains (MCCs).
MCCs are critical for the rankings computed, as they capture agent interactions even under intransitivities and are tractably computed in general games, unlike Nash equilibria \citep{DaskalakisGP06}.

In more detail, the underlying transition matrix over $S$  is defined by \alpharank as follows. 
Let $s=(s^1,\ldots,s^K) \in S$ be a pure strategy profile, and let $\sigma=(\sigma^k, s^{-k})$ be the pure strategy profile which is equal to $s$, except for player $k$, which uses strategy $\sigma^k \in S^k$ instead of $s^k$. 
Denote by $\eta$ the reciprocal of the total number of valid profile transitions from a given strategy profile (i.e., where only a single player deviates in her strategy), so that $\eta = (\sum_{l=1}^K (|S^l| - 1))^{-1}$.
Let $\mathbf{C}_{s, \sigma}$ denote the transition probability from $s$ to $\sigma$, and $\mathbf{C}_{s, s}$ the self-transition probability of $s$, with each defined as:
\begin{align}\label{eq:alpharanktransition1}
    \mkern-28mu \mathbf{C}_{s, \sigma} = 
    \begin{cases}
        \eta \frac{1- \exp\left( -\alpha  \left( \mathbf{M}^k(\sigma) - \mathbf{M}^k(s) \right) \right) }{1- \exp\left( -\alpha m \left( \mathbf{M}^k(\sigma) - \mathbf{M}^k(s) \right) \right)}  & \text{if } \mathbf{M}^k(\sigma) \not= \mathbf{M}^k(s)\\
        \frac{\eta}{m} & \text{otherwise}\, ,
    \end{cases}
    \mkern12mu
    \text{ and }
    \mkern12mu
    \mathbf{C}_{s, s} = 1 - \mkern-32mu \sum_{\substack{k \in [K] \\ \sigma | \sigma^k \in S^k \setminus \{s^k\} } } \mkern-30mu \mathbf{C}_{s , \sigma} \, ,
\end{align}
where if two strategy profiles $s$ and $s^\prime$ differ in more than one player's strategy, then $\mathbf{C}_{s,s^\prime} = 0$. 
Here $\alpha \geq 0$ and $m \in \mathbb{N}$ are parameters to be specified; the form of this transition probability is informed by particular models in evolutionary dynamics and is explained in detail by \citet{omidshafiei2019alpha}, with large values of $\alpha$ corresponding to higher \emph{selection pressure} in the evolutionary model considered.
A key remark is that the correspondence of \alpharank to the MCC solution concept occurs in the limit of infinite $\alpha$. 
In practice, to ensure the irreducibility of $\mathbf{C}$ and the existence of a unique invariant distribution $\boldsymbol{\pi}$, $\alpha$ is either set to a large but finite value, or a perturbed version of $\mathbf{C}$ under the infinite-$\alpha$ limit is used. 
We theoretically and numerically analyze both the finite- and infinite-$\alpha$ regimes in this paper, and provide more details on \alpharank, response graphs, and MCCs in Appendix~\ref{sec:alpharanktheory}.

\section{Sample complexity guarantees}\label{sec:sample_complexity}
This section provides sample complexity bounds, stating the number of strategy profile observations needed to obtain accurate \alpharank rankings with high probability. 
We give two sample complexity results, the first for rankings in the finite-$\alpha$ regime, and the second an instance-dependent guarantee on the reconstruction of the transition matrix in the infinite-$\alpha$ regime.
All proofs are in Appendix~\ref{sec:supp-samplecomplexityproofs}.

\begin{restatable}[Finite-$\alpha$]{theorem}{SampleComplexity}
\label{thm:sample_complexity}
    Suppose payoffs are bounded in the interval $[-M_\mathrm{max}, M_\mathrm{max}]$, and define $L(\alpha, M_{\max}) = 2 \alpha \exp(2\alpha M_\mathrm{max})$ and $g(\alpha, \eta, m, M_\mathrm{max}) = \eta \frac{\exp(2\alpha M_{\mathrm{max}}) - 1}{\exp(2\alpha m M_{\mathrm{max}}) - 1}$.
    Let $\varepsilon \in (0,18\times2^{-|S|}\sum_{n=1}^{|S|-1} \binom{|S|}{n} n^{|S|})$, $\delta \in (0,1)$. 
    Let $\hat{\mathbf{M}}$ be an empirical payoff table constructed by taking $N_s$ i.i.d. interactions of each strategy profile $s \in S$. 
    Then the invariant distribution $\hat{\boldsymbol{\pi}}$ derived from the empirical payoff matrix $\hat{\mathbf{M}}$ satisfies $\max_{s \in \prod_k S^k} |\pi(s) - \hat{\pi}(s) | \leq \varepsilon$ with probability at least $1-\delta$, if
    \begin{align*}
        N_s > \frac{648 M_{\mathrm{max}}^2 \log(2|S|K/\delta) L(\alpha, M_\mathrm{max})^2 \left( \sum_{n=1}^{|S|-1} \binom{|S|}{n} n^{|S|} \right)^2 }{\varepsilon^2 g(\alpha, \eta, m, M_\mathrm{max})^2}  \qquad \forall s \in S \, .
    \end{align*}
\end{restatable}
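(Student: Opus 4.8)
The plan is to bound the ranking error in terms of the payoff error through three stages: (i) concentration of the empirical payoffs around the true payoffs; (ii) Lipschitz dependence of the \alpharank transition matrix on the payoff table; and (iii) a perturbation bound for the invariant distribution of a finite irreducible Markov chain in terms of its transition probabilities. Stage (ii) turns a uniform payoff-error bound $\Delta$ into an entrywise bound $\epsilon_{\mathbf{C}}$ on $\max_{s,\sigma}|\mathbf{C}_{s,\sigma} - \hat{\mathbf{C}}_{s,\sigma}|$, where $\hat{\mathbf{C}}$ is the transition matrix built from $\hat{\mathbf{M}}$; stage (iii) turns $\epsilon_{\mathbf{C}}$ into a bound on $\max_s|\pi(s) - \hat\pi(s)|$; chaining the three and solving for $N_s$ gives the theorem.

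For stage (i): each $\hat{\mathbf{M}}^k(s)$ is an average of $N_s$ i.i.d.\ payoffs in $[-M_\mathrm{max}, M_\mathrm{max}]$, so Hoeffding's inequality and a union bound over the $|S|K$ pairs $(s,k)$ show that, with probability at least $1-\delta$, all payoff entries are simultaneously estimated to within $\Delta$ of order $M_\mathrm{max}\sqrt{\log(2|S|K/\delta)/N_s}$; condition on this event. For stage (ii): each structurally nonzero off-diagonal entry $\mathbf{C}_{s,\sigma}$ (with $\sigma$ a single-player deviation of $s$) is a function of the single payoff difference $x = \mathbf{M}^k(\sigma) - \mathbf{M}^k(s) \in [-2M_\mathrm{max}, 2M_\mathrm{max}]$ via the smooth map $x \mapsto \eta(1 - e^{-\alpha x})/(1 - e^{-\alpha m x})$; differentiating this map and bounding its derivative over $[-2M_\mathrm{max}, 2M_\mathrm{max}]$ produces the Lipschitz factor $L(\alpha, M_\mathrm{max})$, while minimising the map over the same interval (the minimum being at $x = -2M_\mathrm{max}$, where a geometric sum appears) gives exactly the lower bound $\mathbf{C}_{s,\sigma} \ge g(\alpha, \eta, m, M_\mathrm{max})$. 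Hence $\epsilon_{\mathbf{C}} \lesssim \eta L(\alpha, M_\mathrm{max})\Delta$, the zero entries remain zero, and as soon as $\epsilon_{\mathbf{C}} < g$ (which the upper restriction on $\varepsilon$ eventually enforces) the matrix $\hat{\mathbf{C}}$ is still irreducible on the same, payoff-independent support as $\mathbf{C}$, so $\hat{\boldsymbol\pi}$ is well defined.

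For stage (iii), the natural tool is the Markov chain tree theorem: $\pi(s) = \tau_s / Z$ with $\tau_s = \sum_T \prod_{(u\to v)\in T} \mathbf{C}_{u,v}$ summing over spanning in-arborescences of the response graph rooted at $s$ (only off-diagonal entries appear in these trees, so stage (ii) suffices), and $Z = \sum_{s'} \tau_{s'}$; similarly for $\hat{\boldsymbol\pi}$. Each $\tau_s$ is a sum of degree-$(|S|-1)$ monomials in entries of $[0,1]$, so telescoping bounds $|\tau_s - \hat\tau_s|$ by $(|S|-1)\epsilon_{\mathbf{C}}$ times the number of such arborescences; bounding this count (and the associated two-component spanning forests) by $\sum_{n=1}^{|S|-1}\binom{|S|}{n}n^{|S|}$ produces the combinatorial factor appearing in the statement, and lower-bounding the normalisers $Z$, $\hat Z$ via $\mathbf{C}_{s,\sigma} \ge g$ and carefully combining the ratios $\tau_s / Z$ and $\hat\tau_s / \hat Z$ controls $\max_s|\pi(s) - \hat\pi(s)|$. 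Tracking the constants through stages (i)--(iii) and rearranging then yields the displayed lower bound on $N_s$ (the numerical factor $648$ collecting the Hoeffding constant and the squared perturbation constant), and the hypothesis $\varepsilon < 18 \times 2^{-|S|}\sum_{n=1}^{|S|-1}\binom{|S|}{n}n^{|S|}$ is used exactly so that the resulting $N_s$ is large enough for the various smallness requirements invoked above.

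I expect stage (iii) to be the main obstacle. Because $\boldsymbol\pi$ is the left eigenvector of $\mathbf{C}$ for eigenvalue $1$, a naive eigenvector-perturbation argument would pick up an eigenvalue gap that can be arbitrarily small for \alpharank chains; the tree-theorem representation avoids this, but at the cost of (a) careful combinatorial bookkeeping of arborescences and spanning forests, which is where $\sum_{n=1}^{|S|-1}\binom{|S|}{n}n^{|S|}$ originates, and (b) a lower bound on the normaliser $Z$ that does not collapse, for which the uniform lower bound $g$ on the nonzero transition probabilities — available precisely because $\alpha$ is finite and the payoffs are bounded — is the crucial input. Keeping the dependence on $g$, on $|S|$, and on $(\alpha, M_\mathrm{max})$ simultaneously under control, and verifying that the perturbation estimates stay valid in the regime forced by the restriction on $\varepsilon$, is the delicate part.
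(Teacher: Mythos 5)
Your stages (i) and (ii) coincide with the paper's argument: Hoeffding plus a union bound over the $|S|K$ entries, and then the two lemmas showing that each transition probability is Lipschitz in the payoffs with constant $L(\alpha,M_\mathrm{max})$ and that every structurally nonzero entry of $\mathbf{C}$ is at least $g(\alpha,\eta,m,M_\mathrm{max})$. Where you diverge is stage (iii): the paper does not prove a perturbation bound itself, but converts the entrywise \emph{additive} error into a \emph{relative} one, $|\mathbf{C}_{ij}-\hat{\mathbf{C}}_{ij}| \leq \varepsilon L \leq (\varepsilon L/g)\,\mathbf{C}_{ij}$, and then invokes a known Markov-chain perturbation theorem (stated as Theorem~\ref{thm:perturbedmarkovchains}), which is precisely where the factor $18\sum_{n=1}^{|S|-1}\binom{|S|}{n}n^{|S|}$ and the requirement $\varepsilon L/g < 2^{-|S|}$ (hence the upper restriction on $\varepsilon$) come from; the constant $648 = 2\cdot 18^2$ then falls out of Hoeffding. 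You propose instead to re-derive a perturbation bound from scratch via the Markov chain tree theorem.

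The tree-theorem route can work, but not in the form you describe, and this is a genuine gap. You bound $|\tau_s-\hat\tau_s|$ by \emph{additive} telescoping, $(|S|-1)\epsilon_{\mathbf{C}}$ per arborescence, and then control the ratios by lower-bounding the normalisers $Z,\hat Z$ through $\mathbf{C}_{s,\sigma}\geq g$. The only lower bound on $Z$ that $g$ gives you is of order $g^{|S|-1}$ (a single arborescence, each of whose $|S|-1$ edges is at least $g$), while the additive numerator error is of order (number of arborescences)$\times(|S|-1)\epsilon_{\mathbf{C}}$, with $\epsilon_{\mathbf{C}}$ unrelated to the (possibly tiny) monomials themselves. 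The resulting bound on $\max_s|\pi(s)-\hat\pi(s)|$ therefore carries a factor $g^{-(|S|-1)}$ rather than the single $g^{-1}$ appearing in the theorem, so the stated $N_s$ (which scales as $g^{-2}$) would not suffice under your estimate and the theorem as stated would not follow. The fix is exactly the step the paper takes before citing the black-box result: use $g$ \emph{entrywise}, i.e.\ pass to the relative perturbation $\beta = \varepsilon L/g$ so that each arborescence monomial is distorted only by a multiplicative factor between $(1-\beta)^{|S|-1}$ and $(1+\beta)^{|S|-1}$, and then both $\tau_s$ and $Z$ inherit the same multiplicative control and the normaliser never needs to be lower-bounded at all. (Done this way your argument would in fact yield a constant at least as good as the cited one, so the stated $N_s$ suffices; note also that your conjecture that $\sum_{n=1}^{|S|-1}\binom{|S|}{n}n^{|S|}$ arises as an arborescence/forest count in your own bookkeeping is not something you verify --- in the paper that factor is simply imported from the cited perturbation theorem, and your proof must produce a constant no worse than it, not merely an analogous-looking one.)
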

The dependence on $\delta$ and $\varepsilon$ are as expected from typical Chernoff-style bounds, though Markov chain perturbation theory introduces a dependence on the \alpharank parameters as well, most notably $\alpha$.

\begin{restatable}[Infinite-$\alpha$]{theorem}{ExactRecoverySampleComplexity}\label{thm:sample_complexityinfinite}
    Suppose all payoffs are bounded in $[-M_\mathrm{max}, M_\mathrm{max}]$, and that $\forall k \in [K]$ and $\forall s^{-k} \in S^{-k}$, we have $|\mathbf{M}^k(\sigma, s^{-k}) - \mathbf{M}^k(\tau, s^{-k})| \geq \Delta$ for all distinct $\sigma, \tau \in S^k$, for some $\Delta > 0$. Let $\delta > 0$. Suppose we construct an empirical payoff table $(\hat{\mathbf{M}}^k(s)\ |\ k \in [K], s \in S)$ through $N_s$ i.i.d\,games for each strategy profile $s \in S$. 
    Then the transition matrix $\hat{\mathbf{C}}$ computed from payoff table $\hat{\mathbf{M}}$ is exact (and hence all MCCs are exactly recovered) with probability at least $1-\delta$, if
    \begin{align*}
        N_s >  8\Delta^{-2}M_\mathrm{max}^2 \log(2|S|K/\delta) \qquad \forall s \in S \, .
    \end{align*}
\end{restatable}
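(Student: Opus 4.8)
The plan is to reduce exact recovery of the transition matrix $\hat{\mathbf{C}}$ to the event that, for every player $k$ and every pair of distinct strategies $\sigma, \tau \in S^k$ sharing the same opponent profile $s^{-k}$, the empirical payoff estimates $\hat{\mathbf{M}}^k(\sigma, s^{-k})$ and $\hat{\mathbf{M}}^k(\tau, s^{-k})$ are correctly ordered relative to the true payoffs, i.e. $\operatorname{sign}(\hat{\mathbf{M}}^k(\sigma,s^{-k}) - \hat{\mathbf{M}}^k(\tau,s^{-k})) = \operatorname{sign}(\mathbf{M}^k(\sigma,s^{-k}) - \mathbf{M}^k(\tau,s^{-k}))$. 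The key observation is that in the infinite-$\alpha$ limit the transition matrix $\mathbf{C}$ depends on the payoffs only through these pairwise comparisons: an edge $s \to \sigma$ (with $\sigma = (\sigma^k, s^{-k})$) carries mass $\eta$ when $\mathbf{M}^k(\sigma) > \mathbf{M}^k(s)$, carries mass $0$ when $\mathbf{M}^k(\sigma) < \mathbf{M}^k(s)$, and the self-loop absorbs the remainder. Hence if all the relevant sign comparisons are preserved by $\hat{\mathbf{M}}$, then $\hat{\mathbf{C}} = \mathbf{C}$ exactly, and in particular the recurrent classes of the chain — the MCCs — are recovered exactly.

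Next I would bound the probability of a single comparison being wrong. Fix $k$, $s^{-k}$, and distinct $\sigma, \tau \in S^k$. By the separation assumption, $|\mathbf{M}^k(\sigma,s^{-k}) - \mathbf{M}^k(\tau,s^{-k})| \geq \Delta$, so for the empirical difference to have the wrong sign, at least one of the two empirical estimates must deviate from its mean by at least $\Delta/2$. Since payoffs lie in $[-M_\mathrm{max}, M_\mathrm{max}]$, each $\hat{\mathbf{M}}^k(s)$ is an average of $N_s$ i.i.d.\ bounded random variables, and Hoeffding's inequality gives $\Pr(|\hat{\mathbf{M}}^k(s) - \mathbf{M}^k(s)| \geq \Delta/2) \leq 2\exp(-N_s \Delta^2 / (8 M_\mathrm{max}^2))$. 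A union bound over the (at most) $|S|K$ distinct pairs $(k, s)$ — noting that controlling $|\hat{\mathbf{M}}^k(s) - \mathbf{M}^k(s)| < \Delta/2$ for every such pair simultaneously suffices to preserve every sign comparison — shows that if $N_s > 8\Delta^{-2} M_\mathrm{max}^2 \log(2|S|K/\delta)$ for all $s$, then all estimates are within $\Delta/2$ of their means with probability at least $1-\delta$, and on that event $\hat{\mathbf{C}} = \mathbf{C}$.

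The routine part is the Hoeffding bound and the union-bound arithmetic; the only genuine content — and the step I would state most carefully — is the reduction in the first paragraph: verifying that the infinite-$\alpha$ transition matrix is a function solely of the pairwise payoff sign pattern (including the degenerate case $\mathbf{M}^k(\sigma) = \mathbf{M}^k(s)$, which the separation hypothesis rules out among distinct own-strategies but which can still occur, harmlessly, when $\sigma^k = s^k$), and that exact recovery of $\mathbf{C}$ entails exact recovery of all MCCs. I would draw on the description of the infinite-$\alpha$ regime and the response-graph/MCC correspondence deferred to Appendix~\ref{sec:alpharanktheory} to justify this cleanly.
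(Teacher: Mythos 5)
Your proposal is correct and matches the paper's argument essentially step for step: the paper likewise first shows (via the triangle inequality) that $|\hat{\mathbf{M}}^k(s) - \mathbf{M}^k(s)| < \Delta/2$ for all $k,s$ forces every pairwise sign comparison, and hence the infinite-$\alpha$ transition matrix $\hat{\mathbf{C}}$, to coincide with the truth, and then applies Hoeffding with a union bound over the $|S|K$ payoff entries to get exactly the stated threshold on $N_s$. No gaps; your remark about the harmless degenerate case $\sigma^k = s^k$ is a fine point the paper leaves implicit.
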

A consequence of the theorem is that exact infinite-$\alpha$ rankings are recovered with probability at least $1-\delta$.
We also provide theoretical guarantees for Elo ratings in Appendix~\ref{sec:elo_rating} for completeness.

\section{Adaptive sampling-based ranking}\label{sec:adaptive_sampling_algs}

Whilst instructive, the bounds above have limited utility as the payoff gaps that appear in them are rarely known in practice. 
We next introduce algorithms that compute accurate rankings with high confidence without knowledge of payoff gaps, focusing on \alpharank due to its generality.

\textbf{Problem statement.} Fix an error tolerance $\delta > 0$. 
We seek an algorithm which specifies (i) a sampling scheme $\mathcal{S}$ that selects the next strategy profile $s \in S$ for which a noisy game outcome is observed, and (ii) a criterion $\mathcal{C}(\delta)$ that stops the procedure and outputs the estimated payoff table used for the infinite-$\alpha$ \alpharank rankings, which is exactly correct with probability at least $1-\delta$. 

The assumption of infinite-$\alpha$ simplifies this task; it is sufficient for the algorithm to determine, for each $k \in [K]$ and pair of strategy profiles $(\sigma, s^{-k})$, $(\tau, s^{-k})$, whether $\mathbf{M}^k(\sigma, s^{-k}) > \mathbf{M}^k(\tau, s^{-k})$ or $\mathbf{M}^k(\sigma, s^{-k}) < \mathbf{M}^k(\tau, s^{-k})$ holds. If all such pairwise comparisons are correctly made with probability at least $1-\delta$, the correct rankings can be computed. 
Note that we consider only instances for which the third possibility, $\mathbf{M}^k(\sigma, s^{-k}) = \mathbf{M}^k(\tau, s^{-k})$, does not hold; 
in such cases, it is well-known that it is impossible to design an adaptive strategy that always stops in finite time~\citep{even2006action}.

This problem can be described as a related collection of \emph{pure exploration} bandit problems \citep{bubeck2011pure}; 
each such problem is specified by a  player index $k \in [K]$ and set of two strategy profiles $\{ s, (\sigma^k, s^{-k})\}$ (where $s \in S$, $\sigma^k \in S^k$) that differ only in player $k$; the aim is to determine whether player $k$ receives a greater payoff under strategy profile $s$ or $(\sigma^{k}, s^{-k})$.
Each individual best-arm identification problem can be solved to the required confidence level by maintaining empirical means and a confidence bound for the payoffs concerned. 
Upon termination, an evaluation technique such as \alpharank can then be run on the resulting response graph to compute the strategy profile (or agent) rankings.

\subsection{Algorithm: ResponseGraphUCB}\label{sec:ucb-ue}

We introduce a high-level adaptive sampling algorithm, called \responsegraphucb, for computing accurate rankings in \cref{alg:response-graph-ucb}. 
Several variants of \responsegraphucb are possible, depending on the choice of sampling scheme $\mathcal{S}$ and stopping criterion $\mathcal{C(\delta})$, which we detail next.
\begin{algorithm}[h]
    \caption{\responsegraphucb{}$(\delta,\mathcal{S},\mathcal{C}(\delta))$}
    % \small
    \label{alg:response-graph-ucb}
    \begin{algorithmic}[1]
            \State{Construct list $L$ of pairs of strategy profiles to compare}
            \State{Initialize tables $\hat{\mathbf{M}}, \mathbf{N}$ to store empirical means and interaction counts}
            \While{$L$ is not empty}
                \State \multiline{Select a strategy profile $s$ appearing in an edge in $L$ using sampling scheme $\mathcal{S}$}
                \State \multiline{Simulate one interaction for $s$ and update $\hat{\mathbf{M}}, \mathbf{N}$ accordingly}
                \State \multiline{Check whether any edges are resolved according to $\mathcal{C}(\delta)$, remove them from $L$ if so}
            \EndWhile % disabled in package options to save space
            \State{\Return empirical table $\hat{\mathbf{M}}$}
    \end{algorithmic}
\end{algorithm}

\textbf{Sampling scheme $\mathcal{S}$. } 
\Cref{alg:response-graph-ucb} keeps track of a list of pairwise strategy profile comparisons that \alpharank requires, removing pairs of profiles for which we have high confidence that the empirical table is correct (according to $\mathcal{C}(\delta)$), and selecting a next strategy profile for simulation.
There are several ways in which strategy profile sampling can be conducted in Algorithm \ref{alg:response-graph-ucb}. 
\textbf{Uniform (U):} A strategy profile is drawn uniformly from all those involved in an unresolved pair. 
\textbf{Uniform-exhaustive (UE):} A pair of strategy profiles is selected uniformly from the set of unresolved pairs, and both strategy profiles are queried until the pair is resolved.
\textbf{Valence-weighted (VW):} As each query of a profile informs multiple payoffs and has impacts on even greater numbers of pairwise comparisons, there may be value in first querying profiles that may resolve a large number of comparisons. Here we set the probability of sampling $s$ proportional to the squared valence of node $s$ in the graph of unresolved comparisons. 
\textbf{Count-weighted (CW):} The marginal impact on the width of a confidence interval for a strategy profile with relatively few queries is greater than for one with many queries, motivating preferential sampling of strategy profiles with low query count.
Here, we preferentially sample the strategy profile with lowest count among all strategy profiles with unresolved comparisons.

\textbf{Stopping condition $\mathcal{C}(\delta)$. } The stopping criteria we consider are based on confidence-bound methods, with the intuition that the algorithm stops only when it has high confidence in all pairwise comparisons made. 
To this end, the algorithm maintains a confidence interval for each of the estimates, and judges a pairwise comparison to be resolved when the two confidence intervals concerned become disjoint. 
There are a variety of confidence bounds that can be maintained, depending on the specifics of the game; we consider \textbf{Hoeffding (UCB)} and \textbf{Clopper-Pearson (CP-UCB)} bounds, along with relaxed variants of each (respectively, \textbf{R-UCB} and \textbf{R-CP-UCB});
full descriptions are given in Appendix~\ref{sec:supp-responsegraphucb}.

\begin{figure}[t]
    \centering
    \begin{subtable}[c]{0.24\textwidth}
        \begin{align*}
            \begin{array}{cc|cc}
            & & \multicolumn{2}{c}{\text{II}} \\
            & & \text{0} & \text{1} \\ \hline
            \multirow{2}{*}{I}  & \text{0} & 0.50,0.50 & 0.85,0.15 \\
             & \text{1} & 0.15,0.85 & 0.50,0.50
            \end{array} \, 
        \end{align*}
        \vspace{-7pt}
        \caption{Players I and II payoffs.}
        \label{table:ucb-ue-payoff-table}
    \end{subtable}
    \begin{subfigure}[c]{.32\textwidth}
        \centering
        \includegraphics[keepaspectratio,width=\textwidth]{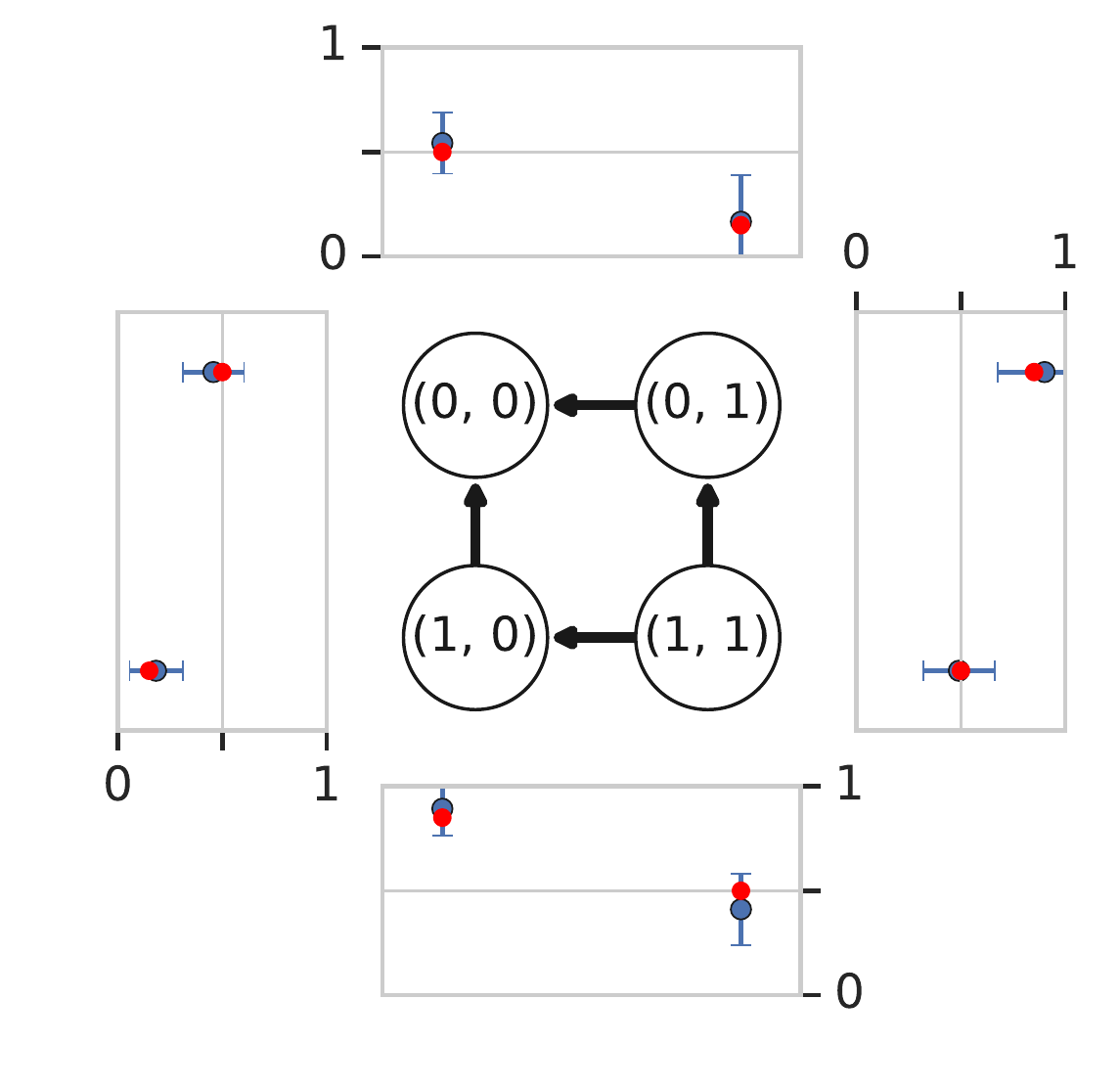}
        \vspace{-17pt}
        \caption{Reconstructed response graph.}
        \label{fig:ucb-ue-result}
    \end{subfigure}
    \hfill%
    \begin{subfigure}[c]{.32\textwidth}
        \centering
        \includegraphics[keepaspectratio,width=\textwidth]{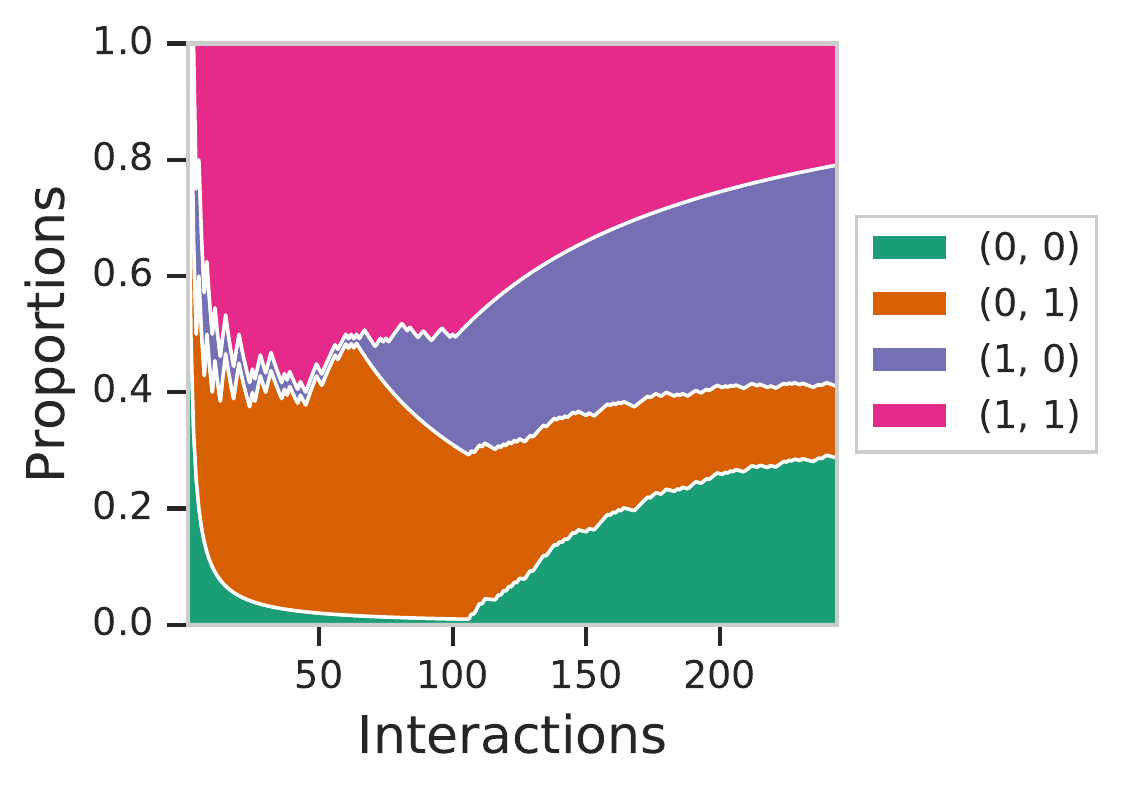}
        \vspace{-15pt}
        \caption{Strategy-wise sample counts.}
        \label{fig:ucb-ue-result-queries}
    \end{subfigure}
    \caption{\responsegraphucb{}($\delta:$ 0.1, $\mathcal{S}$: UE, $\mathcal{C}$: UCB) run on a two-player game. 
    \subref{table:ucb-ue-payoff-table} The payoff tables for both players. 
    \subref{fig:ucb-ue-result} Reconstructed response graph, together with final empirical payoffs and confidence intervals (in blue) and true payoffs (in red).
    \subref{fig:ucb-ue-result-queries} Strategy-wise sample proportions.}
    \label{fig:ucb-ue}
    \vspace{-15pt}
\end{figure}

We build intuition by evaluating \responsegraphucb{}($\delta: 0.1$, $\mathcal{S}:$ UE, $\mathcal{C}:$ UCB), i.e., with a 90\% confidence level, on a two-player game with payoffs shown in \cref{table:ucb-ue-payoff-table}; 
noisy payoffs are simulated as detailed in \cref{sec:experiments}.
The output is given in \cref{fig:ucb-ue-result}; 
the center of this figure shows the estimated response graph, which matches the ground truth in this example.
Around the response graph, mean payoff estimates and confidence bounds are shown for each player-strategy profile combination in blue; 
in each of the surrounding four plots, \responsegraphucb aims to establish which of the true payoffs (shown as red dots) is greater for the deviating player, with directed edges pointing towards estimated higher-payoff deviations. 
\Cref{fig:ucb-ue-result} reveals that strategy profile $(0,0)$ is the sole sink of the response graph, thus would be ranked first by \alpharank.
Each profile has been sampled a different number of times, with running averages of sampling proportions shown in \cref{fig:ucb-ue-result-queries}.
Exploiting knowledge of game symmetry (e.g., as in \cref{table:ucb-ue-payoff-table}) can reduce sample complexity; see Appendix~\ref{sec:symmetric_responsegraphucb}.

We now show the correctness of {\responsegraphucb} and bound the number samples required for it to terminate.
Our analysis depends on the choice of confidence bounds used in stopping condition $\mathcal{C}(\delta)$; we describe the correctness proof in a manner agnostic to these details, and give a sample complexity result for the case of Hoeffding confidence bounds.
See Appendix~\ref{sec:supp-adaptivesampling}
for proofs.

\begin{restatable}{theorem}{thmAdaptiveCorrectness}\label{thm:adaptivecorrectness}
    The {\responsegraphucb} algorithm is correct with high probability: Given $\delta \in (0,1)$, for any particular sampling scheme there is a choice of confidence levels such that ResponseGraphUCB outputs the correct response graph with probability at least $1-\delta$.
\end{restatable}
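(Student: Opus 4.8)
The plan is to observe that correctness of \responsegraphucb{} is entirely a property of the stopping rule $\mathcal{C}(\delta)$ together with the validity of the confidence intervals it maintains, and is insensitive to the sampling scheme $\mathcal{S}$: the sampling scheme only governs \emph{which} profile is queried next, never \emph{when} an edge is declared resolved. So I would fix an arbitrary sampling scheme. Recall that \responsegraphucb{} maintains, for each of the (at most $|S|K$) pairs $(s,k)$ with $s \in S$, $k \in [K]$, an empirical mean $\hat{\mathbf{M}}^k(s)$ and an associated confidence interval whose width shrinks with the current count $N_s$, and that it resolves the edge between $s$ and $(\sigma^k, s^{-k})$ --- recording a directed edge towards the profile with the larger empirical payoff for player $k$ --- exactly when the intervals for $\mathbf{M}^k(s)$ and $\mathbf{M}^k(\sigma^k, s^{-k})$ become disjoint.

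Next I would fix the confidence levels. Because the stopping time is data-dependent, the per-interval bounds must be \emph{anytime}-valid, so I would allocate a failure budget $\delta/(|S|K)$ to each pair $(s,k)$ and split it over the random number of pulls: at the $n$-th pull of profile $s$, use a Hoeffding (or Clopper--Pearson) bound at level $\delta_n = \tfrac{6}{\pi^2 n^2} \cdot \tfrac{\delta}{|S|K}$, so that $\sum_{n \ge 1} \delta_n = \delta/(|S|K)$. A union bound over $n \ge 1$ gives that, for each fixed $(s,k)$, the true payoff $\mathbf{M}^k(s)$ lies in every confidence interval the algorithm ever forms for it, outside an event of probability at most $\delta/(|S|K)$; a further union bound over the $|S|K$ pairs yields that the ``good event'' $\mathcal{E}$ --- every confidence interval the algorithm ever forms contains the corresponding true payoff --- satisfies $\Pr(\mathcal{E}) \ge 1-\delta$. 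Any of the confidence-bound families used in $\mathcal{C}(\delta)$ (UCB, CP-UCB, or their relaxed variants) can be calibrated this way; this is the ``choice of confidence levels'' asserted in the statement.

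It then remains to argue that on $\mathcal{E}$ the output is the true response graph. Whenever the algorithm resolves an edge between $s$ and $(\sigma^k, s^{-k})$ and orients it by the empirical means, on $\mathcal{E}$ the true payoffs $\mathbf{M}^k(s)$ and $\mathbf{M}^k(\sigma^k, s^{-k})$ lie inside their respective intervals, which at that moment are disjoint; hence the order of the true payoffs matches the order of the intervals, which matches the order of the empirical means, so the recorded orientation is correct. Since $\mathcal{C}(\delta)$ empties $L$ only after \emph{every} such pair has been resolved, on $\mathcal{E}$ the returned table $\hat{\mathbf{M}}$ induces exactly the true collection of directed comparisons, i.e.\ the true response graph; by the infinite-$\alpha$ reduction noted in \cref{sec:adaptive_sampling_algs} (the rankings depend on $\mathbf{M}$ only through these pairwise comparisons), \alpharank{} run on $\hat{\mathbf{M}}$ returns the correct rankings. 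Termination on $\mathcal{E}$ is automatic: interval widths tend to $0$ as a profile is pulled more often, and under the standing strict-gap assumption $\mathbf{M}^k(\sigma^k, s^{-k}) \neq \mathbf{M}^k(\tau^k, s^{-k})$ the intervals for each comparison eventually separate, provided the sampling scheme keeps querying every profile incident to an unresolved edge --- which U, UE, VW and CW all do.

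The main obstacle I anticipate is exactly the anytime-validity point: a direct appeal to a fixed-$n$ Hoeffding inequality is invalid because $L$, and hence the set of interval pairs tested for disjointness, is examined at a stopping time correlated with the samples, so the time-union bound (and the observation that it costs only a benign $\log$-type factor, which matters for the companion sample-complexity statement) is where the care is needed. Everything else is a union bound plus the elementary fact that two disjoint intervals order their contents.
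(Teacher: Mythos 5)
Your proposal is correct and follows essentially the same route as the paper: a union bound over all the pairwise-comparison/estimation events combined with anytime-valid confidence levels (a summable sequence of per-count failure probabilities summing to the budget $\delta$), so that on the resulting good event disjoint intervals force the correct orientation of every resolved edge. The only differences are bookkeeping --- you allocate $\delta/(|S|K)$ per profile--player estimate and sum $n^{-2}$ over pull counts, while the paper allocates per pairwise comparison and sums a bound proportional to $t^{-3}$ over a double time--count index --- plus your explicit termination remark, which the paper defers to its sample-complexity theorem.
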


\begin{restatable}{theorem}{thmAdaptiveSampleComplexity}
    The {\responsegraphucb} algorithm, using confidence parameter $\delta$ and Hoeffding confidence bounds, run on an evaluation instance with $\Delta = \min_{(s^k, s^{-k}), (\sigma^k, s^{-k})} |\mathbf{M}^k(s^k, s^{-k}) - \mathbf{M}^k(\sigma^k, s^{-k})|$ requires at most $\mathcal{O}(\Delta^{-2} \log(1/(\delta\Delta)))$ samples with probability at least $1-2\delta$.
\end{restatable}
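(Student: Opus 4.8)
The plan is to control each individual pairwise comparison separately using a union bound, and then argue that each such comparison is resolved quickly. Fix a player $k$ and a pair of strategy profiles $(s^k, s^{-k})$, $(\sigma^k, s^{-k})$ differing only in player $k$'s strategy, and let $\Delta_e \geq \Delta$ denote the corresponding payoff gap. The algorithm maintains empirical means and Hoeffding confidence intervals of half-width roughly $\sqrt{2 M_{\max}^2 \log(c/\delta_e) / n}$ (up to logarithmic-in-$n$ corrections, depending on whether an anytime bound such as $\log(n)/n$-style inflation is used) after $n$ samples of a profile, where $\delta_e$ is the per-comparison confidence budget. The comparison is declared resolved once the two intervals become disjoint. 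First I would fix the confidence budgets: by \cref{thm:adaptivecorrectness} there is an allocation of per-edge confidence levels summing to at most $\delta$ (e.g. $\delta_e = \delta / |L|$, with $|L| = \mathcal{O}(\sum_k |S^k| \prod_{l \neq k} |S^l|)$ a game-dependent constant) such that all comparisons are simultaneously correct with probability at least $1-\delta$; absorb this into the $\log(1/\delta)$ term.

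Second, I would do the standard pure-exploration stopping-time calculation for a single edge. On the good event, both empirical means lie within their confidence intervals, so once the half-width of each interval drops below $\Delta_e / 4$, the two intervals are necessarily disjoint and the edge is resolved. Setting the Hoeffding half-width (with the $\log n$ correction) below $\Delta_e/4$ and solving for $n$ gives $n = \mathcal{O}\big(\Delta_e^{-2} \log(1/(\delta_e \Delta_e))\big)$ samples on each of the two profiles involved. Since $\Delta_e \geq \Delta$ and $\delta_e$ differs from $\delta$ only by a game-dependent constant factor, this is $\mathcal{O}\big(\Delta^{-2} \log(1/(\delta \Delta))\big)$. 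Summing over all $\mathcal{O}(1)$ edges in $L$, and noting that each simulated interaction of a profile advances the counts for all edges incident to that profile, the total number of samples before $L$ empties is $\mathcal{O}\big(\Delta^{-2}\log(1/(\delta\Delta))\big)$; the extra $\delta$ in the $1-2\delta$ probability accounts for the event that some empirical mean leaves its confidence band, handled by a second union bound over the (finitely many) relevant sample sizes via a peeling/anytime argument.

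The main obstacle is the $\log(1/(\delta\Delta))$ rather than $\log(1/\delta)$ dependence, which is exactly the symptom of needing an \emph{anytime} (uniform-in-$n$) confidence bound: a fixed-$n$ Hoeffding bound is not enough because the stopping time is random and data-dependent. I would handle this with the usual trick of either inflating the per-step failure probability to $\delta_e / n^2$ (or $6\delta_e/(\pi^2 n^2)$) so the union bound over all $n$ converges, or a geometric peeling argument over dyadic blocks of sample sizes; either way the price is an additional $\log n$ inside the square root, and substituting the resulting stopping time $n \asymp \Delta^{-2}\log(1/(\delta\Delta))$ back in (a mild self-referential inequality of the form $n \gtrsim \Delta^{-2}\log(n/\delta)$, solved by standard $x \geq a\log x \Rightarrow x = \mathcal{O}(a\log a)$ reasoning) yields the stated bound. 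The remaining steps — the union bound over edges, the translation from "interval half-width $< \Delta/4$" to "edge resolved", and the bookkeeping that one profile query serves several edges — are routine.
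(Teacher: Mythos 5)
Your proposal is correct and follows essentially the same route as the paper's proof: an action-elimination/pure-exploration analysis in the style of Even-Dar et al., with per-edge confidence budgets inherited from the correctness theorem, anytime (polynomially inflated) Hoeffding widths, the observation that the edge resolves once the gap exceeds the sum of confidence widths, and the standard inversion of the resulting self-referential inequality to obtain $\mathcal{O}(\Delta^{-2}\log(1/(\delta\Delta)))$ with probability $1-2\delta$. The only cosmetic differences are your $\Delta/4$ half-width threshold versus the paper's $2\bar{U}+2\bar{L}$ formulation and the $n^2$ versus $t^3$ inflation, which do not change the argument.
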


\section{Ranking uncertainty propagation}\label{sec:uncertainty}
This section considers the remaining key issue of efficiently computing uncertainty in the ranking weights, given remaining uncertainty in estimated payoffs.
We assume known element-wise upper- and lower-confidence bounds $\mathbf{U}$ and $\mathbf{L}$ on the unknown true payoff table $\mathbf{M}$, e.g., as provided by \responsegraphucb. 
The task we seek to solve is, given a particular strategy profile $s \in S$ and these payoff bounds, to output the confidence interval for $\pi(s)$, the ranking weight for $s$ under the true payoff table $\mathbf{M}$;
i.e., we seek $[\inf_{\mathbf{L} \leq \hat{\mathbf{M}} \leq \mathbf{U}} \pi_{\hat{\mathbf{M}}}(s), \sup_{\mathbf{L} \leq \hat{\mathbf{M}} \leq \mathbf{U}} \pi_{\hat{\mathbf{M}}}(s)]$, where $\pi_{\hat{\mathbf{M}}}$ denotes the output of infinite-$\alpha$ \alpharank under payoffs $\hat{\mathbf{M}}$.
This section proposes an efficient means of solving this task.

At the very highest level, this essentially involves finding plausible response graphs (that is, response graphs that are compatible with a payoff matrix $\hat{\mathbf{M}}$ within the confidence bounds $\mathbf{L}$ and $\mathbf{U}$) that minimize or maximize the probability $\pi(s)$ given to particular strategy profiles $s \in S$ under infinite-$\alpha$ \alpharank. Considering the maximization case, intuitively this may involve directing as many edges adjacent to $s$ towards $s$ as possible, so as to maximize the amount of time the corresponding Markov chain spends at $s$. It is less clear intuitively what the optimal way to set the directions of edges not adjacent to $s$ should be, and how to enforce consistency with the constraints $\mathbf{L} \leq \hat{\mathbf{M}} \leq \mathbf{U}$. In fact, similar problems have been studied before in the PageRank literature for search engine optimization \citep{hollanders2014tight,como2015robustness,csaji2014pagerank,csaji2010pagerank,fercoq2013ergodic}, and have been shown to be reducible to constrained dynamic programming problems.

More formally, the main idea is to convert the problem of obtaining bounds on $\boldsymbol{\pi}$ to a constrained stochastic shortest path (CSSP) policy optimization problem which optimizes \emph{mean return time} for the strategy profile $s$ in the corresponding .
In full generality, such constrained policy optimization problems are known to be NP-hard \citep{csaji2014pagerank}. 
Here, we show that it is sufficient to optimize an \emph{unconstrained} version of the \alpharank CSSP, hence yielding a tractable problem that can be solved with standard SSP optimization routines. Details of the algorithm are provided in Appendix~\ref{sec:uncertaintysupp}; here, we provide a high-level overview of its structure, and state the main theoretical result underlying the correctness of the approach.

The first step is to convert the element-wise confidence bounds $\mathbf{L} \leq \hat{\mathbf{M}} \leq \mathbf{U}$ into a valid set of constraints on the form of the underlying response graph. Next, a reduction is used to encode the problem as policy optimization in a constrained shortest path problem (CSSP), as in the PageRank literature \citep{csaji2014pagerank}; we denote the corresponding problem instance by $\texttt{CSSP}(S, \mathbf{L}, \mathbf{U}, s)$. Whilst solution of CSSPs is in general hard, we note here that it is possible to remove the constraints on the problem, yielding a stochastic shortest path problem that can be solved by standard means.

\begin{restatable}{theorem}{thmCSSP}\label{thm:cssp}
    The unconstrained SSP problem given by removing the action consistency constraints of $\texttt{CSSP}(S, \mathbf{L}, \mathbf{U}, s)$ has the same optimal value as $\texttt{CSSP}(S, \mathbf{L}, \mathbf{U}, s)$.
\end{restatable}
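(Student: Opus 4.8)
The plan is to exploit the special structure of the \alpharank CSSP: although the constraints of $\texttt{CSSP}(S, \mathbf{L}, \mathbf{U}, s)$ couple the direction of edges within a given "comparison class" (pairs of profiles $(\sigma^k, s^{-k})$, $(\tau^k, s^{-k})$ that differ in one player's strategy), the objective --- mean return time to $s$ in the infinite-$\alpha$ Markov chain --- decomposes over local edge choices in a way that makes the constrained optimum attainable by a consistent (unconstrained) optimal policy. Concretely, I would first set up the unconstrained SSP: the state space is $S$ with an absorbing copy of $s$ as the terminal state, the "actions" at each non-terminal profile correspond to choosing, for each adjacent edge, whether it points inward or outward (subject only to the box constraints $\mathbf{L} \le \hat{\mathbf{M}} \le \mathbf{U}$ being satisfiable for that single edge), and the cost is $1$ per step, so that the optimal value at the initial distribution equals the extremal mean return time, hence (by the standard renewal identity $\pi(s) = 1/\mathbb{E}[\text{return time to } s]$) the extremal value of $\pi(s)$. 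The constrained problem $\texttt{CSSP}$ additionally forces the edge directions to be globally consistent: if the optimizer "decides" that $\mathbf{M}^k(\sigma^k, s^{-k}) > \mathbf{M}^k(\tau^k, s^{-k})$, that decision must be honoured at every profile where this comparison appears.

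The key step is to show that an optimal stationary deterministic policy $\mu^\star$ for the unconstrained SSP can be taken to be consistent, so that it is also feasible --- and hence optimal --- for $\texttt{CSSP}$. The inequality in one direction is immediate: the constrained feasible set is contained in the unconstrained one, so the unconstrained optimum is at least as good (for maximization) / at most as large (for minimization) as the constrained one. For the reverse, I would argue that the edges whose direction is still ambiguous under the box constraints (i.e.\ where both $<$ and $>$ are consistent with $[\mathbf{L}, \mathbf{U}]$) can be oriented by a single global rule without degrading the optimal value. The cleanest route is a local-improvement / exchange argument on the policy: given any optimal unconstrained policy, if it orients the same underlying comparison in two different ways at two different profiles, I would show that re-orienting all copies of that comparison to whichever of the two directions is locally preferable (in terms of the SSP Bellman operator) does not increase return time to $s$ --- this uses that the transition structure at each profile depends on the comparison outcomes only through the adjacent edges, so the Bellman update at each profile is separately monotone in each edge's orientation. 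Iterating over all ambiguous comparison classes yields a consistent optimal policy.

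The main obstacle I anticipate is making the exchange argument rigorous in the presence of the coupling: flipping one comparison simultaneously changes the transition probabilities at potentially many profiles, so I cannot treat it as a purely local one-state policy improvement. I would handle this by appealing to the analogous reduction already carried out in the PageRank literature (\citet{csaji2014pagerank,hollanders2014tight}), where the relevant observation is that the return-time objective, viewed as a function of the edge-orientation vector, is separately monotone (in fact, in the PageRank case, it enjoys a submodular/"coordinate-wise" optimizability property) in each coordinate corresponding to a comparison class; monotone separability lets one optimize each coordinate independently and then paste the choices together without loss. I would verify that the infinite-$\alpha$ \alpharank transition matrix inherits exactly this separable-monotone dependence on comparison outcomes --- each edge $s \to (\sigma^k, s^{-k})$ carries probability $\eta$ if $\mathbf{M}^k(\sigma^k, s^{-k}) > \mathbf{M}^k(s)$ and $0$ otherwise (with self-loops absorbing the remainder), so a single comparison affects the chain monotonically --- and then invoke the CSSP-reduction lemma to conclude that removing the consistency constraints leaves the optimal value unchanged. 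Finally, I would note the minor technical caveat that in the infinite-$\alpha$ regime with ties excluded (as assumed throughout \cref{sec:uncertainty}), the chain restricted to any feasible orientation has $s$ reachable, so the SSP is well posed and the mean return time is finite.
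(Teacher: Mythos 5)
Your proposal follows essentially the same route as the paper: both note the easy direction (the constrained feasible set is contained in the unconstrained one) and then repair an optimal unconstrained policy one inconsistent edge at a time, using monotonicity to show each repair does not degrade the value. Where you defer the key step to the PageRank literature, the paper carries it out directly: if the actions at $u$ and $v$ disagree on their shared edge, it re-orients only the action at $u$ (toward consistency with $v$, having taken w.l.o.g.\ $\boldsymbol{\varphi}(v)\geq\boldsymbol{\varphi}(u)$), so exactly one row of the substochastic matrix changes, and the identity $\boldsymbol{\varphi}-\boldsymbol{\mu}=(\mathbf{I}-\widetilde{\mathbf{D}})^{-1}(\widetilde{\mathbf{C}}-\widetilde{\mathbf{D}})\boldsymbol{\varphi}$, with the difference vector supported on the single coordinate $u$ where it equals $\eta(\boldsymbol{\varphi}(v)-\boldsymbol{\varphi}(u))\geq 0$ and with $(\mathbf{I}-\widetilde{\mathbf{D}})^{-1}$ entrywise nonnegative, yields $\boldsymbol{\varphi}\geq\boldsymbol{\mu}$ elementwise; your worry that flipping a comparison changes transitions at many profiles is moot, since each comparison appears at exactly its two endpoint profiles and the repair touches only one of them. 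One small correction: the mean return time need not be finite for every feasible orientation (indeed $\pi_{\hat{\mathbf{M}}}(s)=0$ is possible), and the paper handles this case by observing that if $\mathbf{I}-\widetilde{\mathbf{D}}$ is singular then no edges flow into $s$ and the minimal $\boldsymbol{\varphi}(s)$ is $+\infty$, so the comparison still goes through.
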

See Appendix~\ref{sec:uncertaintysupp} for the proof. 
Thus, the general approach for finding worst-case upper and lower bounds on infinite-$\alpha$ \alpharank ranking weights $\pi(s)$ for a given strategy profile $s \in S$ is to formulate the unconstrained SSP described above, find the optimal policy (using, e.g., linear programming, policy or value iteration), and then use the inverse relationship between mean return times and stationary distribution probabilities in recurrent Markov chains to obtain the bound on the ranking weight $\pi(s)$ as required; full details are given in Appendix~\ref{sec:uncertaintysupp}.
This approach, when applied to the soccer domain described in the sequel, yields \cref{fig:uncertainty_example_soccer}.

\section{Experiments}\label{sec:experiments}
We consider three domains of increasing complexity, with experimental procedures detailed in Appendix~\ref{sec:experiment_procedures}.
First, we consider randomly-generated two-player zero-sum \textbf{Bernoulli games}, with the constraint that payoffs $\mathbf{M}^k(s,\sigma)\in[0,1]$ cannot be too close to $0.5$ for all pairs of distinct strategies $s,\sigma \in S$ where $\sigma=(\sigma^k, s^{-k})$ (i.e., a single-player deviation from $s$).
This constraint implies that we avoid games that require an exceedingly large number of interactions for the sampler to compute a reasonable estimate of the payoff table.
Second, we analyze a \textbf{Soccer meta-game} with the payoffs in \citet[Figure 2]{liu2018emergent}, wherein agents learn to play soccer in the MuJoCo simulation environment \citep{todorov:12} and are evaluated against one another. 
This corresponds to a two-player symmetric zero-sum game with 10 agents, but with empirical (rather than randomly-generated) payoffs.
Finally, we consider a \textbf{Kuhn poker meta-game} with asymmetric payoffs and 3 players with access to 3 agents each, similar to the domain analyzed in \cite{omidshafiei2019alpha}; 
here, only \alpharank (and not Elo) applies for evaluation due to more than two players being involved.
In all domains, noisy outcomes are simulated by drawing the winning player i.i.d. from a Bernoulli($\mathbf{M}^k(s)$) distribution over payoff tables $\mathbf{M}$. 

\begin{figure}[t]
    \centering
    \null\hfill
    \begin{subfigure}{.33\textwidth}
        \centering
        \includegraphics[width=\textwidth]{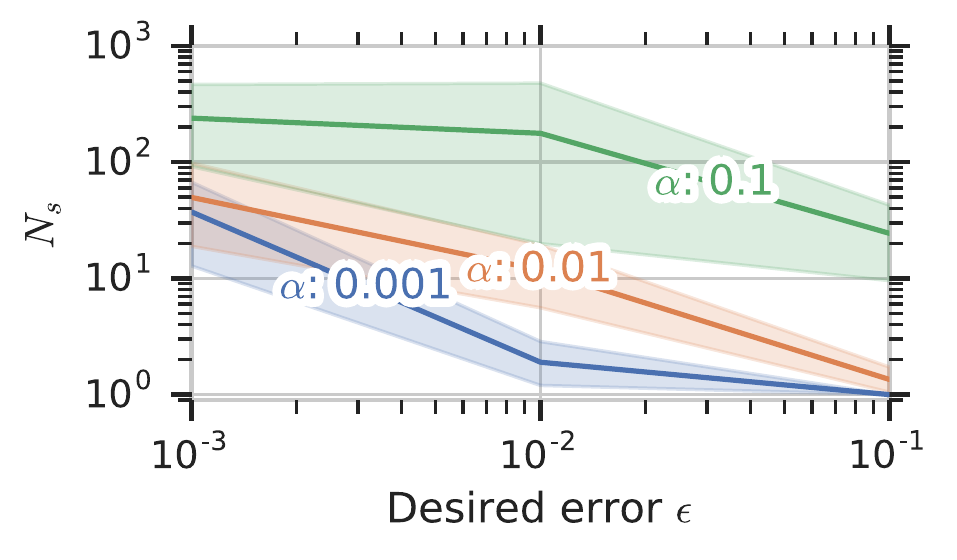}
        \vspace{-17pt}
        \caption{Bernoulli games.}
        \label{fig:alpharank_N_s_empirical_bernoulli}
        \vspace{-3pt}
    \end{subfigure}%
    \hfill
    \begin{subfigure}{.33\textwidth}
        \centering
        \includegraphics[width=\textwidth]{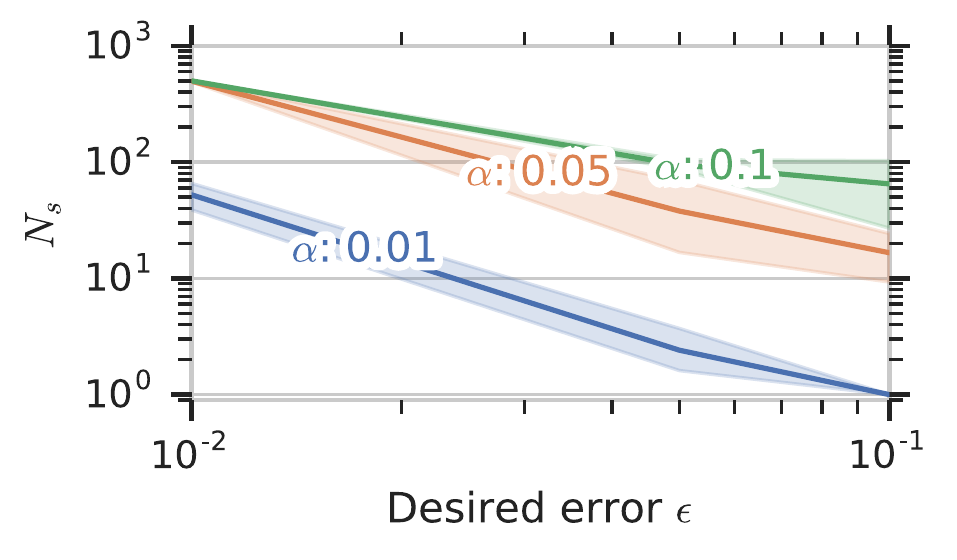}
        \vspace{-17pt}
        \caption{Soccer meta-game.}
        \label{fig:alpharank_N_s_empirical_soccer}
        \vspace{-3pt}
    \end{subfigure}%
    \hfill
    \begin{subfigure}{.33\textwidth}
        \centering
        \includegraphics[width=\textwidth]{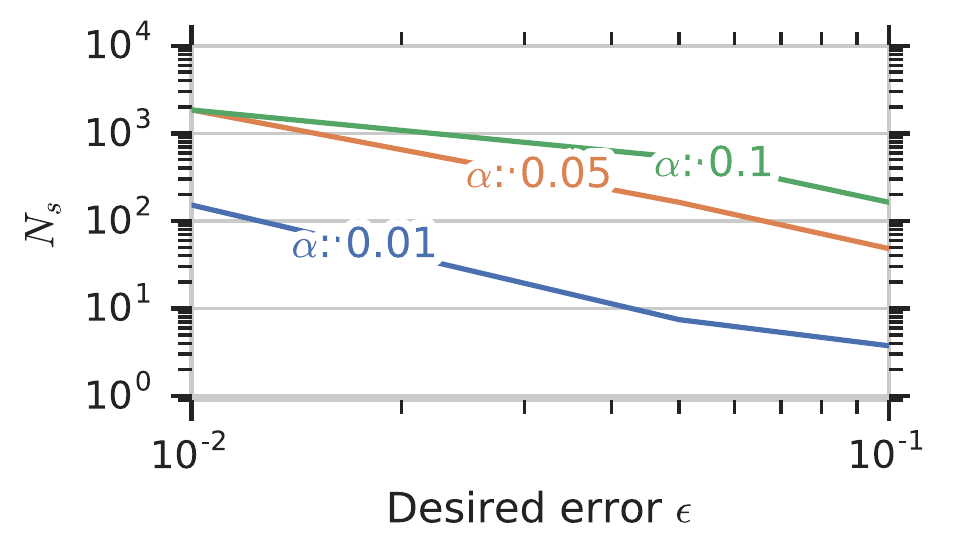}
        \vspace{-17pt}
        \caption{Poker meta-game.}
        \label{fig:alpharank_N_s_empirical_poker}
        \vspace{-3pt}
        % \rulesep
    \end{subfigure}
    \caption{Samples needed per strategy profile ($N_s$) for finite-$\alpha$ \alpharank, without adaptive sampling.}
    \label{fig:N_s_empirical}
    \vspace{-10pt}
\end{figure}
        
We first consider the empirical sample complexity of \alpharank in the finite-$\alpha$ regime.
\Cref{fig:N_s_empirical} visualizes the number of samples needed per strategy profile to obtain rankings given a desired invariant distribution error $\epsilon$, where $\max_{s \in \prod_k S^k} |\pi(s) - \hat{\pi}(s) | \leq \varepsilon$.
As noted in \cref{thm:sample_complexity}, the sample complexity increases with respect to $\alpha$, with the larger soccer and poker domains requiring on the order of $1\mathrm{e}3$ samples per strategy profile to compute reasonably accurate rankings.  
These results are also intuitive given the evolutionary model underlying \alpharank, where lower $\alpha$ induces lower selection pressure, such that strategies perform almost equally well and are, thus, easier to rank.

As noted in \cref{sec:adaptive_sampling_algs}, sample complexity and ranking error under adaptive sampling are of particular interest. 
To evaluate this, we consider variants of \responsegraphucb in \cref{fig:sweep_sampling_and_confidence}, with particular focus on the UE sampler ($\mathcal{S}$: UE) for visual clarity;
complete results for all combinations of $\mathcal{S}$ and $\mathcal{C}(\delta)$ are presented in Appendix~\cref{sec:full_comparison_plots}.
Consider first the results for the Bernoulli games, shown in \cref{fig:sweep_comparison_bernoulli};
the top row plots the number of interactions required by \responsegraphucb to accurately compute the response graph given a desired error tolerance $\delta$, while the bottom row plots the number of response graph edge errors (i.e., the number of directed edges in the estimated response graph that point in the opposite direction of the ground truth graph). 
Notably, the CP-UCB confidence bound is guaranteed to be tighter than the Hoeffding bounds used in standard UCB, thus the former requires fewer interactions to arrive at a reasonable response graph estimate with the same confidence as the latter;
this is particularly evident for the relaxed variants R-CP-UCB, which require roughly an order of magnitude fewer samples compared to the other sampling schemes, despite achieving a reasonably low response graph error rate.

Consider next the \responsegraphucb results given noisy outcomes for the soccer and poker meta-games, respectively in \cref{fig:sweep_comparison_soccer,fig:sweep_comparison_poker}. 
Due to the much larger strategy spaces of these games, we cap the number of samples available at $1\mathrm{e}5$.
While the results for poker are qualitatively similar to the Bernoulli games, the soccer results are notably different; 
in \cref{fig:sweep_comparison_soccer} (top), the non-relaxed samplers use the entire budget of $1\mathrm{e}5$ interactions, which occurs due to the large strategy space cardinality. 
Specifically, the player-wise strategy size of 10 in the soccer dataset yields a total of 900 two-arm bandit problems to be solved by \responsegraphucb. 
We note also an interesting trend in \cref{fig:sweep_comparison_soccer} (bottom) for the three \responsegraphucb variants ($\mathcal{S}$: UE, $\mathcal{C(\delta)}$: UCB), ($\mathcal{S}$: UE, $\mathcal{C(\delta)}$: R-UCB), and ($\mathcal{S}$: UE, $\mathcal{C(\delta)}$: CP-UCB). 
In the low error tolerance ($\delta$) regime, the uniform-exhaustive strategy used by these three variants implies that \responsegraphucb spends the majority of its sampling budget observing interactions of an extremely small set of strategy profiles, and thus cannot resolve the remaining response graph edges accurately, resulting in high error. 
As error tolerance $\delta$ increases, while the probability of correct resolution of \emph{individual} edges decreases by definition, the earlier stopping time implies that the \responsegraphucb allocates its budget over a larger set of strategies to observe, which subsequently lowers the \emph{total} number of response graph errors.

\begin{figure}[t]
    \newcommand{\figHeight}{8.5\baselineskip}
    \subcaptionbox*{}[\textwidth]{
        \includegraphics[width=0.9\textwidth]{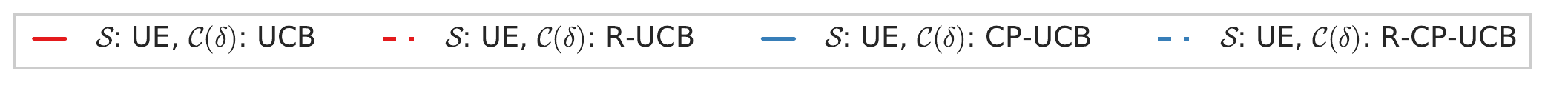}
        \vspace{-19pt}
    }\\
    \subcaptionbox{Bernoulli games.\label{fig:sweep_comparison_bernoulli} 
        }[0.32\textwidth]{ \includegraphics[height=\figHeight]{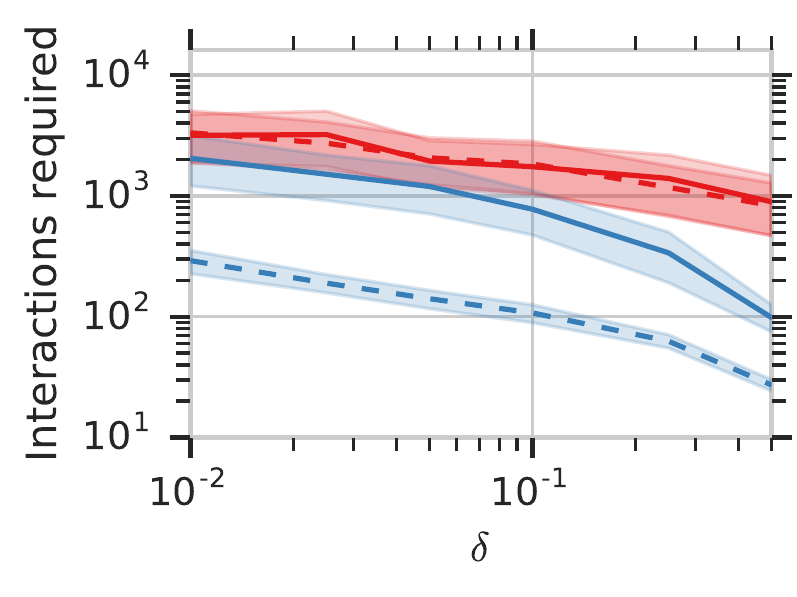}\\[-6pt]
        \includegraphics[height=\figHeight]{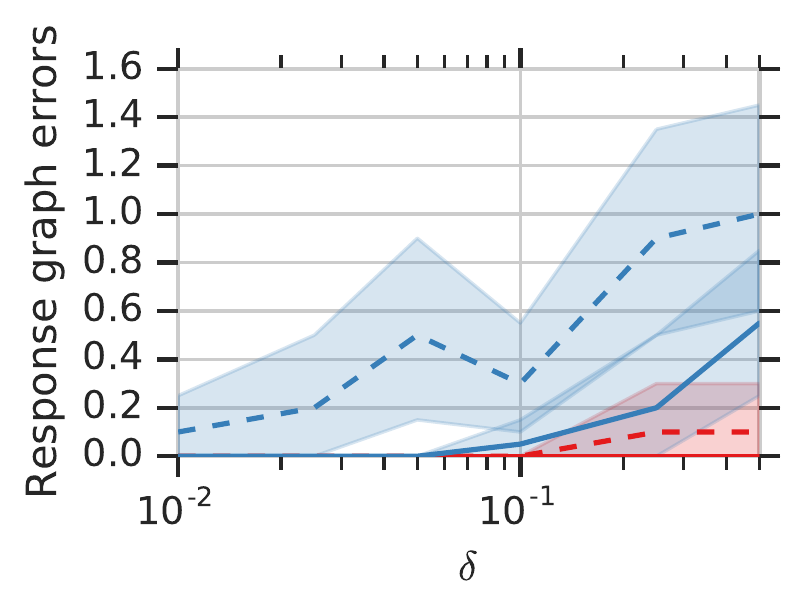}
        \vspace{-8pt}
    }%
    \rulesep%
    \subcaptionbox{Soccer meta-game.\label{fig:sweep_comparison_soccer}
    }[0.32\textwidth]{
        \includegraphics[height=\figHeight]{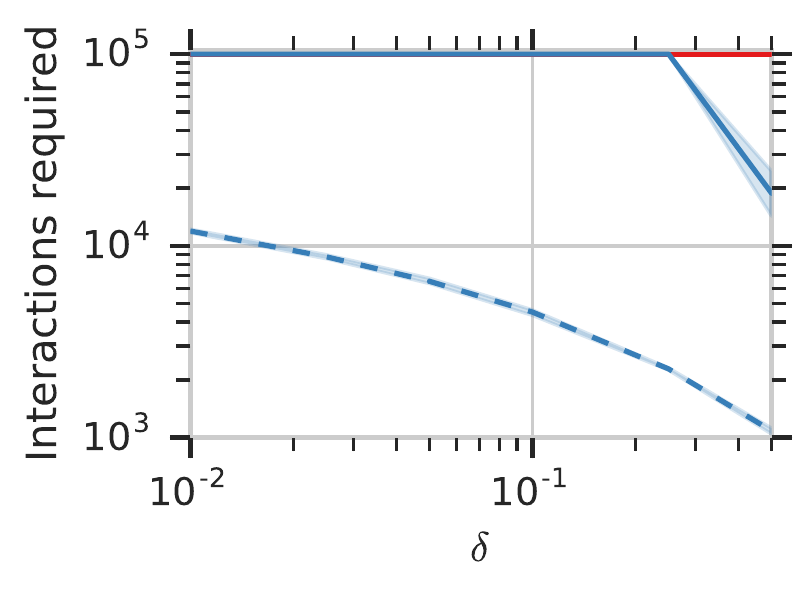}\\[-6pt]
        \includegraphics[height=\figHeight]{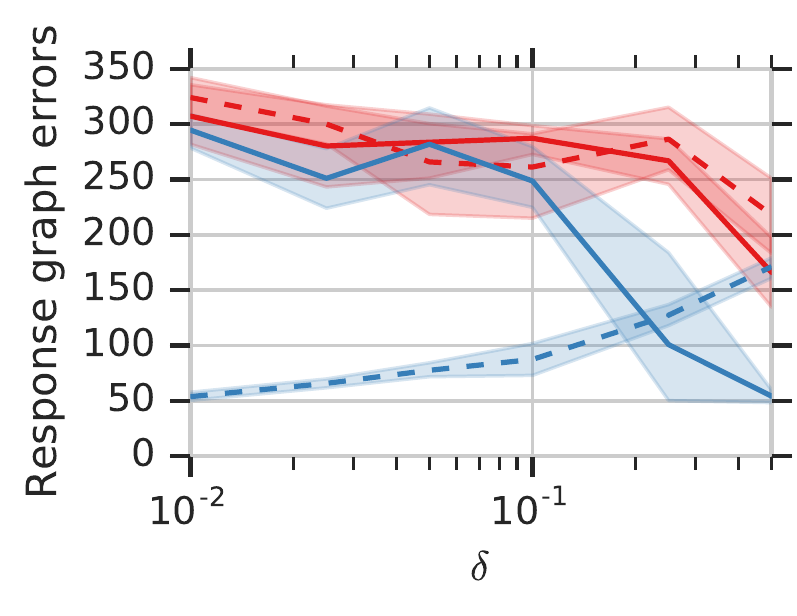}
        \vspace{-8pt}
    }%
    \rulesep%
    \subcaptionbox{Poker meta-game.\label{fig:sweep_comparison_poker}
    }[0.32\textwidth]{
        \includegraphics[height=\figHeight]{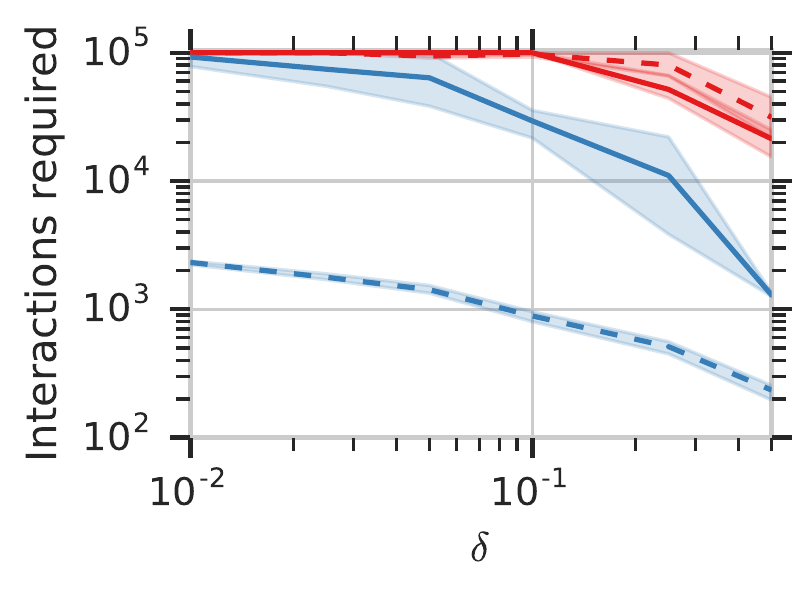}\\[-6pt]
        \includegraphics[height=\figHeight]{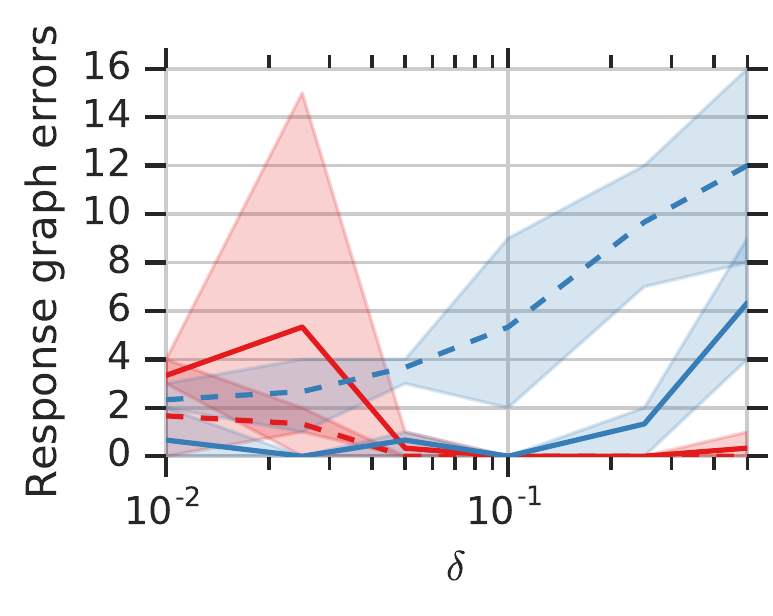}
        \vspace{-8pt}
    }
    \caption{\responsegraphucb performance metrics versus error tolerance $\delta$ for all games. First and second rows, respectively, show the \# of interactions required and response graph edge errors.}
    \label{fig:sweep_sampling_and_confidence}
\end{figure}

% Ranking errors
\begin{figure}[t]
    \centering
    \includegraphics[width=0.8\linewidth]{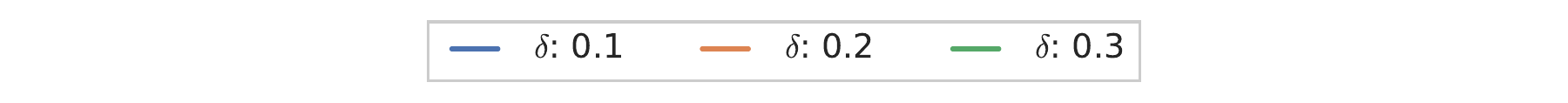}\\[-5pt]
    % For subfigures of same height, but different widths:
    \subcaptionbox{Soccer meta-game.\label{fig:ranking_error_soccer}}{%
        \vspace{-3pt}%
        \includegraphics[height=7.8\baselineskip]{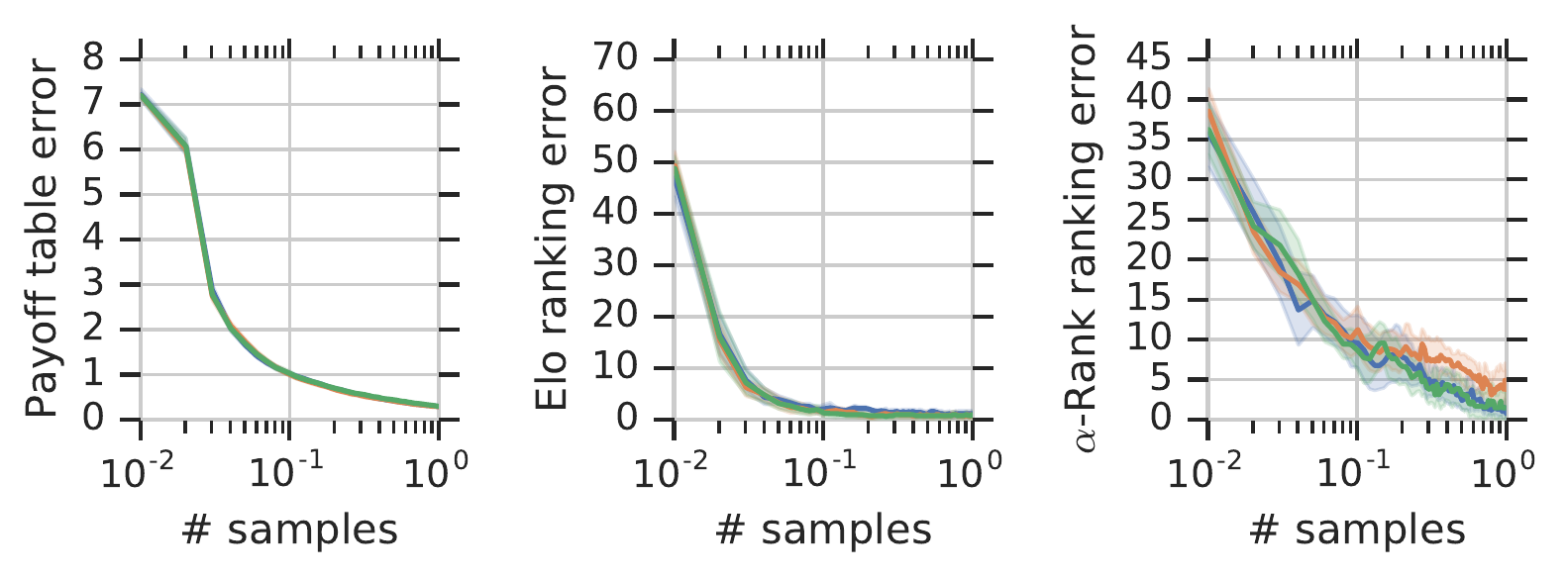}%
        \vspace{-5pt}%
    }%
    \rulesep%
    \subcaptionbox{Poker meta-game.\label{fig:ranking_error_poker}}{%
        \vspace{-3pt}%
        \includegraphics[height=7.8\baselineskip]{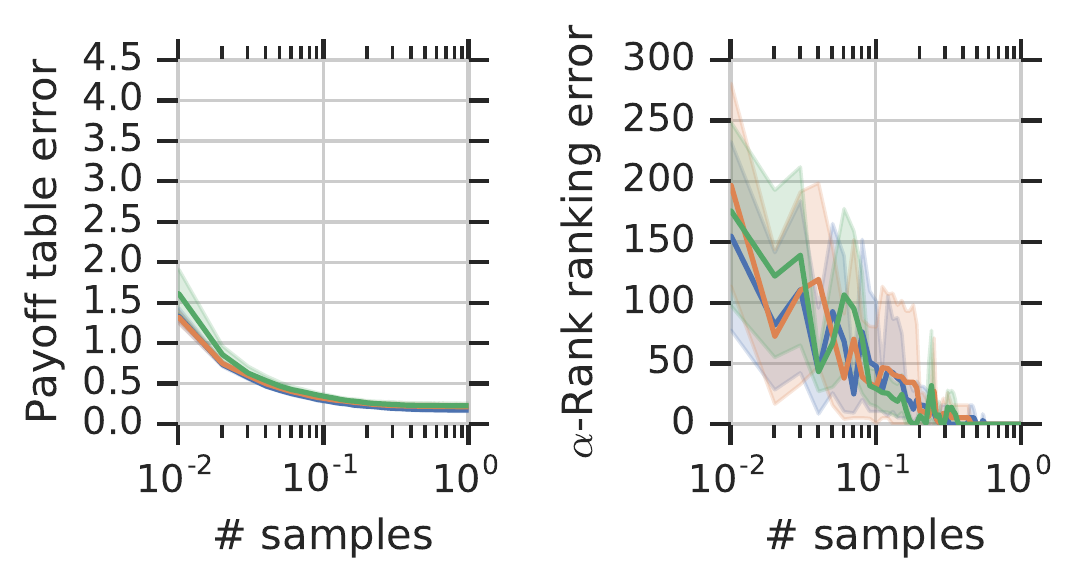}%
        \vspace{-5pt}%
        }
    \vspace{-3pt}
    \caption{Payoff table Frobenius error and ranking errors for various \responsegraphucb confidence levels $\delta$. Number of samples is normalized to $[0,1]$ on the $x$-axis.
    }
    \label{fig:ranking_error_comparisons}
    \vspace{-10pt}
\end{figure}

\begin{figure}
    \centering
    \begin{subfigure}[b]{.5\textwidth}
        \centering
        \includegraphics[width=0.7\textwidth]{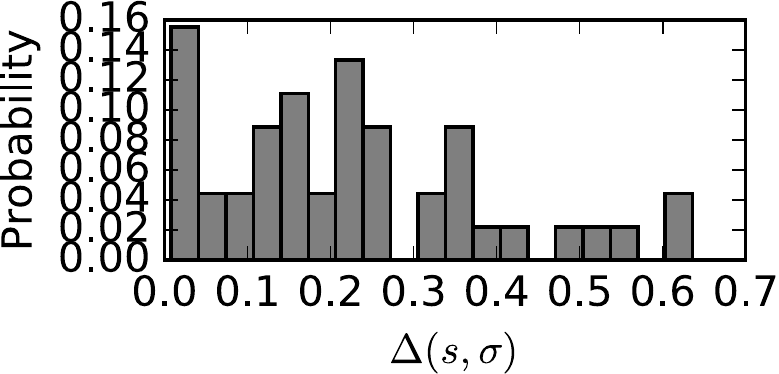}
        \caption{Soccer meta-game.}
        \label{fig:Delta_distribution_soccer}
    \end{subfigure}%
    \hfill%
    \begin{subfigure}[b]{.5\textwidth}
        \centering
        \includegraphics[width=0.7\textwidth]{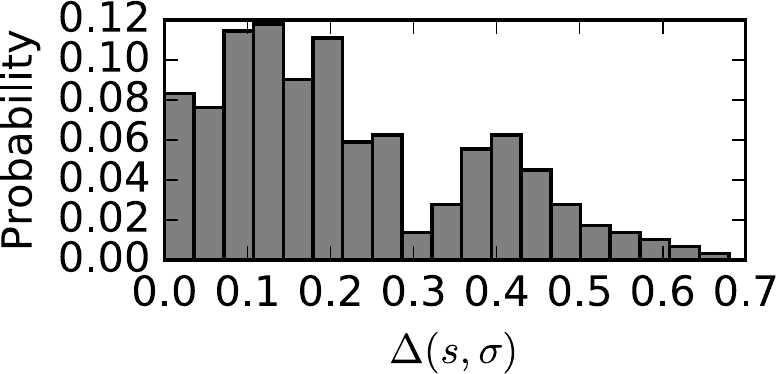}
        \caption{Poker meta-game.}
        \label{img/Delta_distribution_kuhn_poker_3p.pdf}
    \end{subfigure}
    \caption{The ground truth distribution of payoff gaps for all response graph edges in the soccer and poker meta-games.
    We conjecture that the higher ranking variance may be explained by these gaps tending to be more heavily distributed near 0 for poker, making it difficult for \responsegraphucb to sufficiently capture the response graph topology given a high error tolerance $\delta$.}
    \label{fig:Delta_distributions}
\end{figure}

\Cref{fig:ranking_error_soccer} visualizes the ranking errors for Elo and infinite-$\alpha$ \alpharank given various \responsegraphucb error tolerances $\delta$ in the soccer domain.
Ranking errors are computed using the Kendall partial metric (see Appendix~\ref{sec:ranking_metrics}). 
Intuitively, as the estimated payoff table error decreases due to added samples, so does the ranking error for both algorithms.
\Cref{fig:ranking_error_poker} similarly considers the \alpharank ranking error in the poker domain.
Ranking errors again decrease gracefully as the number of samples increases.
Interestingly, while errors are positively correlated with respect to the error tolerances $\delta$ for the poker meta-game, this tolerance parameter seems to have no perceivable effect on the soccer meta-game. 
Moreover, the poker domain results appear to be much higher variance than the soccer counterparts.
To explore this further, we consider the distribution of payoff gaps, which play a key role in determining the response graph reconstruction errors.
Let $\Delta(s, \sigma) = |\mathbf{M}^k(s)-\mathbf{M}^k(\sigma)|$, the payoff difference corresponding to the edge of the response graph where player $k$ deviates, causing a transition between strategy profiles $s,\sigma \in S$.
\Cref{fig:Delta_distributions} plots the ground truth distribution of these gaps for all response graph edges in soccer and poker.
We conjecture that the higher ranking variance may be explained by these gaps tending to be more heavily distributed near 0 for poker, making it difficult for \responsegraphucb to distinguish the `winning' profile and thereby sufficiently capture the response graph topology.

Overall, these results indicate a need for careful consideration of payoff uncertainties when ranking agents, and quantify the effectiveness of the algorithms proposed for multiagent evaluation under incomplete information.
We conclude by remarking that the pairing of bandit algorithms and \alpharank seems a natural means of computing rankings in settings where, e.g., one has a limited budget for adaptively sampling match outcomes.
Our use of bandit algorithms also leads to analysis which is flexible enough to be able to deal with $K$-player general-sum games.
However, approaches such as collaborative filtering may also fare well in their own right.
We conduct a preliminary analysis of this in Appendix~\ref{sec:collab_filtering}, specifically for the case of two-player win-loss games, leaving extensive investigation for follow-up work.

\section{Conclusions}\label{sec:conclusions}
This paper conducted a rigorous investigation of multiagent evaluation under incomplete information.
We focused particularly on \alpharank due to its applicability to general-sum, many-player games.
We provided static sample complexity bounds quantifying the number of interactions needed to confidently rank agents, then introduced several sampling algorithms that adaptively allocate samples to the agent match-ups most informative for ranking. 
We then analyzed the propagation of game outcome uncertainty to the final rankings computed, providing sample complexity guarantees as well as an efficient algorithm for bounding rankings given payoff table uncertainty.
Evaluations were conducted on domains ranging from randomly-generated two-player games to many-player meta-games constructed from real datasets.
The key insight gained by this analysis is that noise in match outcomes plays a prevalent role in determination of agent rankings.
Given the recent emergence of training pipelines  that rely on the evaluation of hundreds of agents pitted against each other in noisy games (e.g., Population-Based Training \citep{jaderberg2017population,jaderberg2018human}), we strongly believe that consideration of these uncertainty sources will play an increasingly important role in multiagent learning.

\section*{Acknowledgements}

We thank Daniel Hennes and Thore Graepel for extensive feedback on an earlier version of this paper, and the anonymous reviewers for their comments and suggestions to improve the paper.
Georgios Piliouras acknowledges MOE AcRF Tier 2 Grant 2016-T2-1-170, grant PIE-SGP-AI-2018-01 and NRF 2018 Fellowship NRF-NRFF2018-07.

\newpage

\bibliographystyle{plainnat}
\bibliography{references}

%\putbib

%\end{bibunit}

\newpage

\renewcommand{\thepage}{}

\begin{appendices}

%\begin{bibunit}

% \SuspendCounters{page}
% \setcounter{equation}{0}
\setcounter{section}{0}
\renewcommand{\thesection}{\Alph{section}}

\begin{center}
    \textbf{\Large{Appendices:\\Multiagent Evaluation under Incomplete Information}}
\end{center}

We provide here supplementary material that may be of interest to the reader.
Note that section and figures in the main text that are referenced here are clearly indicated via numerical counters (e.g., \cref{fig:uncertaintyillustration}), whereas those in the appendix itself are indicated by alphabetical counters (e.g., \cref{fig:ucb-ue-symmetric}).

\section{Related Work}\label{sec:related_work}

% Uncertainty for Nash equilibria
Originally, Empirical Game Theory was introduced to reduce and study the complexity of large economic problems in electronic commerce, e.g., continuous double auctions~\cite{Walsh02,Walsh03,Wellman06}, and later it has been also applied in various other domains and settings  \cite{PhelpsPM04,PhelpsCMNPS07,PonsenTKR09,Hennes13,Tuyls18}.
Empirical game theoretic analysis and the effects of uncertainty in payoff tables (in the form of noisy payoff estimates and/or missing table elements) on the computation of Nash equilibria have been studied for some time \citep{jordan2008searching,WiedenbeckW12,Vorobeychik10Probabilistic,sakota2019learning,Aghassi2006,fearnley2015learning,zhou2017identify}, with contributions including sample complexity bounds for accurate equilibrium estimation \citep{Vorobeychik10Probabilistic}, adaptive sampling algorithms \citep{zhou2017identify}, payoff query complexity results of computing approximate Nash equilibria in various types of games \citep{fearnley2015learning}, and the formulation of particular varieties of equilibria robust to noisy payoffs \citep{Aghassi2006}. These earlier methods are mainly based on the Nash equilibrium concept and use, amongst others, information-theoretic ideas (value of information) and regression techniques to generalize payoffs of strategy profiles. By contrast, in this paper we focus on both Elo ratings and an approach inspired by response graphs, evolutionary dynamics and Markov-Conley Chains, capturing the underlying dynamics of the multiagent interactions and providing a rating of players on their long-term behavior \cite{omidshafiei2019alpha}.

The Elo rating system was originally introduced to rate chess players and named after Arpad Elo \cite{elo1978rating}. It defines a measure to express the relative strength of a player, and as such has also been widely adopted in machine learning to evaluate the strength of agents or strategies \citep{DSilverHMGSDSAPL16,silver2017mastering,Tuyls18}. Unfortunately, when applying Elo rating in machine learning, and multiagent learning particular, Elo is problematic: it is restricted to 2-player interactions, it is unable to capture intransitive behaviors and an Elo score can potentially be artificially inflated \citep{balduzzi2018re}. A Bayesian skill rating system called TrueSkill, which handles player skill uncertainties and generalized Elo rating, was introduced in \citet{herbrich:07}. For an introduction and discussion of extensions to Elo rating see, e.g., \citet{Coulom08}. Other researchers have also introduced a method based on a fuzzy pair-wise comparison matrix that uses a cosine similarity measure for ratings, but this approach is also limited to two-player interactions \citep{CHAO2018}.

Another recent work that inherently uses response graphs as its underlying dynamical model is the PSRO algorithm (Policy-Space Response Oracles) \cite{lanctot2017unified}. The Deep Cognitive Hierarchies model relates PSRO to cognitive hierarchies, and is equivalent to a response graph. The algorithm is essentially a generalization of the Double Oracle algorithm \citep{McMahanGB03} and Fictitious Self-Play \citep{Heinrich15FSP}, iteratively computing approximate best responses to the meta-strategies of other agents. 

\section{\alpharank: Additional Background}\label{sec:alpharanktheory}

This section provides additional background on the \alpharank ranking algorithm.

Given match outcomes for a $K$-player game, \alpharank computes rankings as follows:
\begin{enumerate}
    \item Construct meta-payoff tables $\mathbf{M}^k$ for each player $k \in \{1,\ldots,K\}$ (e.g., by using the win/loss ratios for the different strategy/agent match-ups as payoffs)
    \item Compute the transition matrix $\mathbf{C}$, as detailed in \cref{sec:preliminaries}
    \item Compute the stationary distribution, $\boldsymbol{\pi}$, of $\mathbf{C}$
    \item Compute the agent rankings by ordering the masses of $\boldsymbol{\pi}$ 
\end{enumerate}

In the transition structure outlined in \cref{sec:preliminaries} Eq.~\cref{eq:alpharanktransition1}, the factor $(\sum_{l=1}^K (|S^l| - 1))^{-1}$ normalizes across the different strategy profiles that $s$ may transition to, whilst the second factor represents the relative fitness of the two profiles $s$ and $\sigma$
In practice, $m \in \mathbb{N}$ is typically fixed and one considers the invariant distribution $\pi_\alpha$ as a function of the parameter $\alpha$.
\Cref{fig:fixprob} illustrates the fixation probabilities in the \alpharank model, for various values of $m$ and $\alpha$.

\paragraph{Finite-$\alpha$ limit.} In general, the invariant distribution tends to converge as $\alpha \rightarrow \infty$, and we take $\alpha$ to be sufficiently large such that $\pi_\alpha$ has effectively converged and corresponds to the MCC solution concept.

\paragraph{Infinite-$\alpha$ limit.} An alternative approach is to set $\alpha$ infinitely large, then introduce a small perturbation along every edge of the response graph, such that transitions can occur from dominated strategies to dominant ones. This perturbation enforces irreducibility of the Markov transition matrix $\mathbf{C}$, yielding a unique stationary distribution and corresponding ranking. 

\begin{figure}[h!]
    \centering
    \null\hfill
    \begin{subfigure}{.33\textwidth}
        \includegraphics[width=\textwidth]{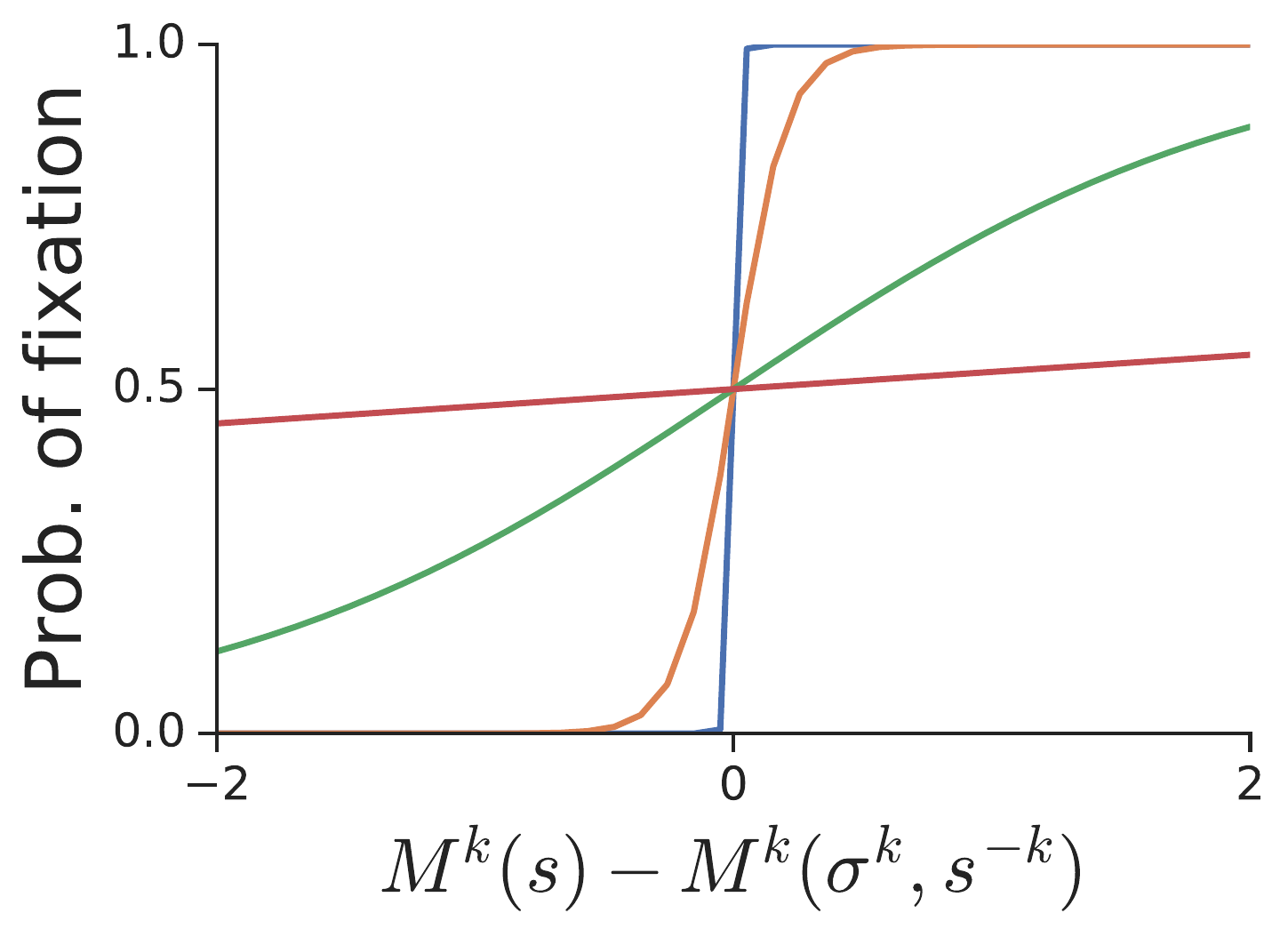}
        \caption{\alpharank population size $m=2$.}
    \end{subfigure}
    \hfill
    \begin{subfigure}{.33\textwidth}
        \includegraphics[width=\textwidth]{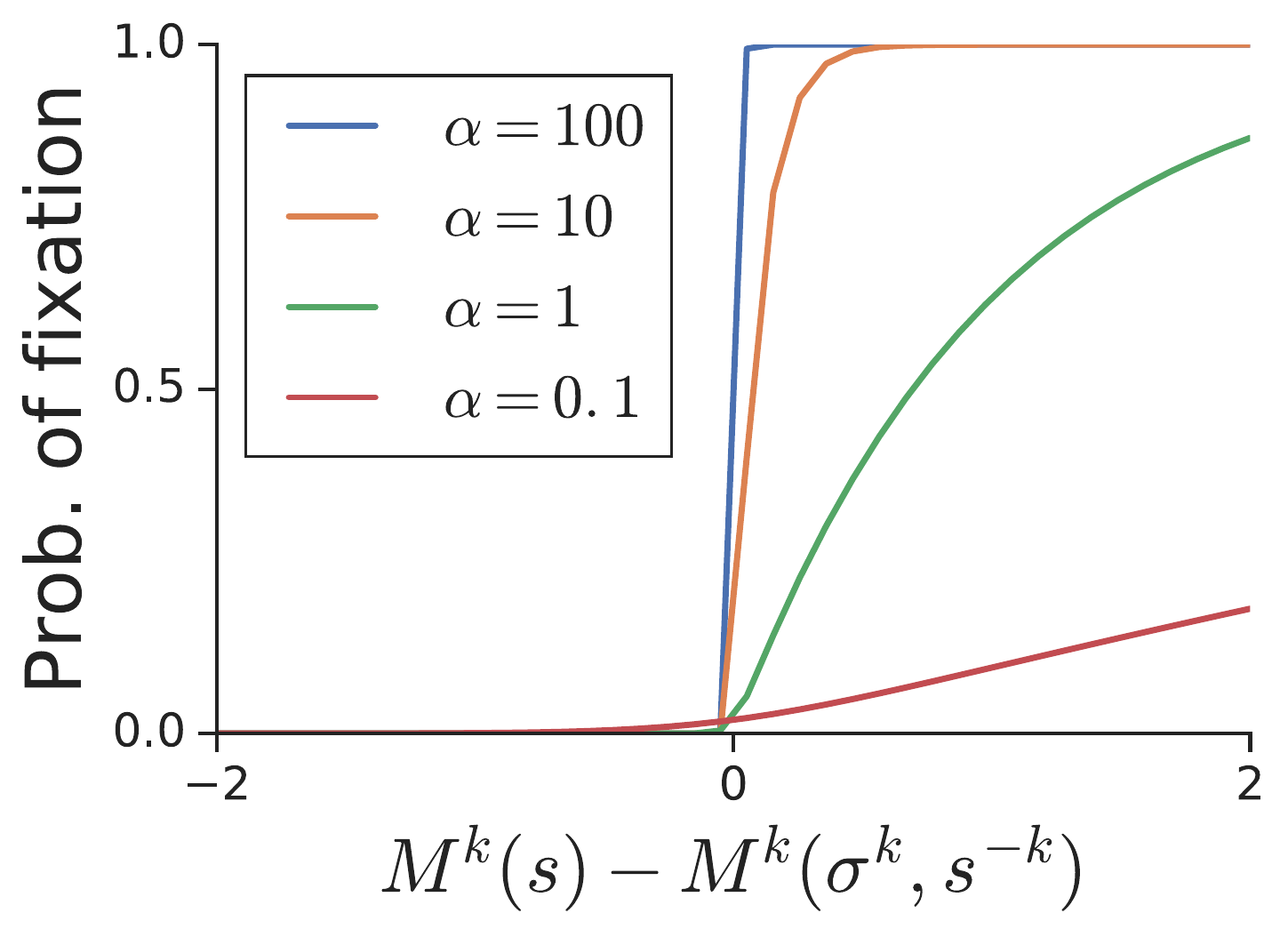}
        \caption{\alpharank population size $m=50$.}
    \end{subfigure}
    \hfill\null
    \caption{Illustrations of fixation probabilities in the \alpharank model.}
    \label{fig:fixprob}
\end{figure}

\begin{figure}[h!]
    \centering
    % \begin{subtable}[c]{0.24\textwidth}
    %     \begin{align*}
    %         \begin{array}{cc|ccc}
    %         & & \multicolumn{2}{c}{\text{II}} \\
    %         & & \text{L} & \text{C} & \text{R} \\ \hline
    %         \multirow{3}{*}{I}  & \text{U} & 2,0 & 0,2 & 0,0 \\
    %          & \text{M} & 0,2 & 2,0 & 0,0\\
    %          & \text{D} & 0,0 & 0,0 & 1,1\\
    %         \end{array} \, 
    %     \end{align*}
    %     \vspace{-7pt}
    %     \caption{}
    %     \label{table:example_mcc_payoffs}
    % \end{subtable}
    \begin{subtable}[c]{0.24\textwidth}
        \begin{align*}
            \begin{array}{cc|ccc}
            & & \multicolumn{2}{c}{\text{II}} \\
            & & \text{L} & \text{C} & \text{R} \\ \hline
            \multirow{3}{*}{I}  & \text{U} & 2,1 & 1,2 & 0,0 \\
             & \text{M} & 1,2 & 2,1 & 1,0\\
             & \text{D} & 0,0 & 0,1 & 2,2\\
            \end{array} \, 
        \end{align*}
        \vspace{-7pt}
        \caption{}
        \label{table:example_mcc_payoffs}
    \end{subtable}
    \begin{subfigure}[c]{.32\textwidth}
        \centering
        \includegraphics[keepaspectratio,width=0.8\textwidth,page=1]{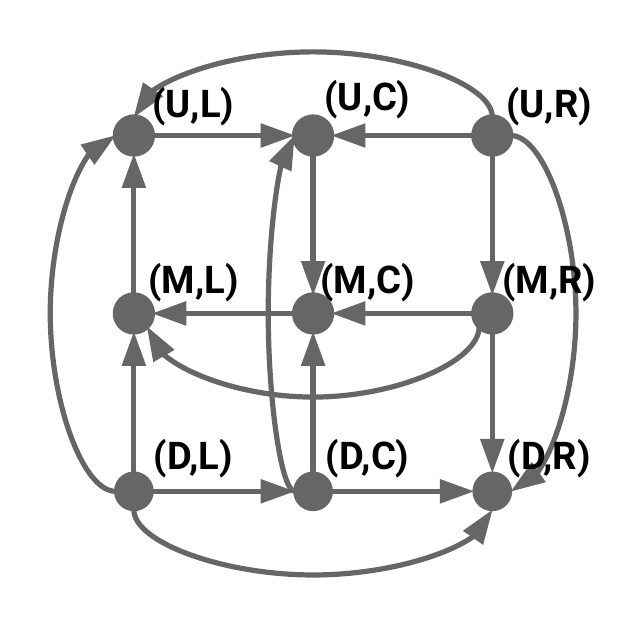}
        \caption{}
        \label{fig:example_mcc_response_graph}
    \end{subfigure}
    \hfill%
    \begin{subfigure}[c]{.32\textwidth}
        \centering
        \includegraphics[keepaspectratio,width=0.8\textwidth,page=2]{img/mcc_example.pdf}
        \vspace{-15pt}
        \caption{}
        \label{fig:example_mcc_response_graph_mccs}
    \end{subfigure}
    \caption{
    The response graph associated to payoffs shown in  \subref{table:example_mcc_payoffs} is visualized in  \subref{fig:example_mcc_response_graph}.
    \subref{fig:example_mcc_response_graph_mccs} MCCs associated with the response graph highlighted in blue. 
    }
\end{figure}

The \textbf{response graph} of a game is a directed graph where nodes correspond to pure strategy profiles, and directed edges if the deviating player’s new strategy is a better-response. The response graph for the game specified in \cref{table:example_mcc_payoffs} is illustrated in \cref{fig:example_mcc_response_graph}.  

\textbf{Markov-Conley Chains (MCCs)} are defined as the sink strongly connected components of the response graph. The MCCs associated with the payoffs specified in \cref{table:example_mcc_payoffs} are illustrated in \cref{fig:example_mcc_response_graph_mccs}.
The stationary distribution computed by \alpharank corresponds to a ranking of strategy profiles in the MCCs of the game response graph, indicating the average amount of time individuals in the underlying evolutionary model spend playing each strategy profile. 

\section{Elo Rating System: Overview and Theoretical Results}\label{sec:elo_rating}

This section provides an overview of the Elo rating system, along with theoretical guarantees on the number of samples needed to construct accurate payoff matrices using Elo.

\subsection{Elo Evaluation}
Consider games involving two players with shared strategy set $S^1$.
Elo computes a vector $\mathbf{r} \in \mathbb{R}^{S^1}$ quantifying the strategy ratings.
Let $\phi(x) = (1 + \exp(-x))^{-1}$, then the probability of $s^1 \in S^1$ beating $s^2 \in S^1$ predicted by Elo is $\mathbf{q}_{s^1,s^2}(\mathbf{r}) = \phi(\mathbf{r}_{s^1}-\mathbf{r}_{s^2})$.
Consider a batch of $N$ two-player game outcomes $(s^1_n, s^2_n, u_n)_{n=1}^N$, where $\{s^1_n,s^2_n\}\in S^1$ are the player strategies and $u_n$ is the observed (noisy) payoff to player $1$ in game $n$. 
Let $\mathbf{u} \in \mathbb{R}^{N}$ denote the vector of all observed payoffs, and denote by \emph{BatchElo} the algorithm applied to the batch of outcomes. 
BatchElo fits ability parameters $\mathbf{r}$ by minimizing the following objective with respect to $\mathbf{r}$:
\begin{align}
    L_{\mathrm{Elo}}(\mathbf{r}; \mathbf{u})=\sum_{n=1}^N - u_n \log \left(\phi(\mathbf{r}_{\mathbf{s}_n^1} - \mathbf{r}_{\mathbf{s}_n^2}) \right) - (1- u_n) \log \left(1-\phi(\mathbf{r}_{\mathbf{s}_n^1} - \mathbf{r}_{\mathbf{s}_n^2})\right) \, .
\end{align}
Ordering the elements of $\mathbf{r}$ gives the strategy rankings. 
Yet despite its widespread use for ranking \citep{lai2015giraffe,arneson2010monte,silver2017mastering,gruslys2017reactor}, Elo has no predictive power in intransitive games (e.g., Rock-Paper-Scissors) \citep{balduzzi2018re}. 

\subsection{Theoretical Results}\label{sec:elo_theoretical_results}
In analogy with the sample complexity results for \alpharank presented in \cref{sec:sample_complexity}, we give the following result on the sample complexity of Elo ranking, building on the work of \citet{balduzzi2018re}.
\begin{restatable}{theorem}{thmEloComplexity}\label{thm:elo_complexity}
    Consider a symmetric, two-player win-loss game with finite strategy set $S^1$ and payoff matrix $\mathbf{M}$. Let $\mathbf{q}$ be the fitted payoffs obtained from the BatchElo model on the payoff matrix $\mathbf{M}$, and let $\hat{\mathbf{q}}$ be the fitted payoffs obtained from the BatchElo model on an empirical payoff table $\hat{\mathbf{M}}$, based on $N_{s, s^\prime}$ interactions between each pair of strategies $s, s^\prime$. If we take, for each pair of strategy profiles $s, s^\prime \in S^1$, a number of interactions $N_{s, s^\prime}$ satisfying
    \begin{align}
        N_{s, s^\prime} > 0.5|S^1|^2 \varepsilon^{-2} \log(|S^1|^2/\delta )  \, .
    \end{align}
    Then it follows that with probability at least $1-\delta$,
    \begin{align}
        \left|\sum_{s^\prime} \left( \mathbf{q}_{s, s^\prime} - \hat{\mathbf{q}}_{s, s^\prime} \right) \right| < \varepsilon 
    \qquad \forall s \in S^1.
    \end{align}
\end{restatable}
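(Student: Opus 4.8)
The plan is to combine a Hoeffding-type concentration bound on the empirical payoffs with a sensitivity (Lipschitz) analysis of the BatchElo fit, exploiting the fact that in the symmetric two-player win-loss setting the fitted quantities $\mathbf{q}_{s,s'}$ admit a closed-form characterization in terms of the row sums (Seidel scores) of the payoff matrix. First I would recall, following \citet{balduzzi2018re}, that for a symmetric win-loss game the BatchElo objective is convex in $\mathbf{r}$ and its stationarity conditions state that the fitted win probabilities $\mathbf{q}_{s,\cdot}(\mathbf{r})$ reproduce the observed average scores: for each strategy $s$, $\sum_{s'} \mathbf{q}_{s,s'} = \sum_{s'} \mathbf{M}_{s,s'}$ (and analogously $\sum_{s'} \hat{\mathbf{q}}_{s,s'} = \sum_{s'} \hat{\mathbf{M}}_{s,s'}$ for the empirical fit). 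This is the key structural reduction: it turns the quantity we must bound, $\bigl|\sum_{s'}(\mathbf{q}_{s,s'} - \hat{\mathbf{q}}_{s,s'})\bigr|$, into $\bigl|\sum_{s'}(\mathbf{M}_{s,s'} - \hat{\mathbf{M}}_{s,s'})\bigr|$, i.e.\ into a statement purely about empirical averages, and entirely sidesteps the need to bound perturbations of the Elo ratings $\mathbf{r}$ themselves (which would be delicate since the ratings are only identified up to an additive constant and the Hessian can be poorly conditioned).

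Next I would apply concentration. For a fixed strategy $s$, write $\sum_{s'}(\mathbf{M}_{s,s'} - \hat{\mathbf{M}}_{s,s'}) = \sum_{s'} (\mathbf{M}_{s,s'} - \hat{\mathbf{M}}_{s,s'})$, a sum of $|S^1|$ independent terms (independent across $s'$ since distinct match-ups use independent game samples), where the $s'$-th term is an average of $N_{s,s'}$ i.i.d.\ Bernoulli-centered random variables in $[-1,1]$. I would bound this either by applying Hoeffding to each of the $|S^1|$ averages with a union bound, or — more sharply — by treating the whole double sum $\sum_{s'}\sum_{n=1}^{N_{s,s'}} (\text{centered outcome})/N_{s,s'}$ as a single weighted sum of independent bounded variables and invoking Hoeffding's inequality directly, giving a bound of the form $\exp\bigl(-2\varepsilon^2 / \sum_{s'} N_{s,s'}^{-1}\bigr)$ after accounting for the $[0,1]$ range (the factor $0.5$ in the theorem and the $[0,1]$-scaling account for the variance proxy $1/4$ per game, i.e.\ range $1$). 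Choosing $N_{s,s'} > 0.5 |S^1|^2 \varepsilon^{-2}\log(|S^1|^2/\delta)$ makes each $N_{s,s'}^{-1} < 2\varepsilon^2/(|S^1|^2 \log(|S^1|^2/\delta))$, so $\sum_{s'} N_{s,s'}^{-1} < 2\varepsilon^2 / (|S^1|\log(|S^1|^2/\delta))$; plugging in and then union-bounding over the $|S^1|$ choices of $s$ (the $|S^1|^2$ inside the log absorbs one factor of $|S^1|$ for the union bound and one for the number of independent terms) yields failure probability at most $\delta$. I would then conclude with a union bound over all $s \in S^1$ to get the simultaneous statement.

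The main obstacle I anticipate is establishing the structural reduction cleanly, namely that the optimal BatchElo fit satisfies the row-sum matching identity $\sum_{s'}\mathbf{q}_{s,s'} = \sum_{s'}\mathbf{M}_{s,s'}$, and that this holds for both the true and empirical payoff matrices. This requires care because the BatchElo objective as written sums over individual game outcomes $n$, not over match-ups, and the stationarity condition $\nabla_{\mathbf{r}} L_{\mathrm{Elo}} = 0$ must be aggregated correctly: differentiating with respect to $\mathbf{r}_s$ and using $\phi' = \phi(1-\phi)$ gives, for each $s$, a balance equation $\sum_{s'} N_{s,s'}(\hat{\mathbf{q}}_{s,s'} - \hat{\mathbf{M}}_{s,s'}) = 0$ only when every pair is played equally often, or more generally a $N_{s,s'}$-weighted version; one must argue either that uniform sampling is assumed (so the weights cancel) or carry the weighted identity through. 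I would handle this by noting the theorem's hypothesis fixes a common per-pair count regime and, if needed, adapting the concentration step to the weighted sum $\sum_{s'} N_{s,s'}(\mathbf{M}_{s,s'}-\hat{\mathbf{M}}_{s,s'}) / \sum_{s'} N_{s,s'}$ rather than the unweighted one — the Hoeffding computation is identical up to bookkeeping. A secondary subtlety is the non-uniqueness of $\mathbf{r}$ (shift invariance) and whether the fitted probabilities $\mathbf{q}$ are themselves unique; since the win probabilities depend only on rating differences and the stationarity equations pin down the relevant aggregate quantities, uniqueness of $\sum_{s'}\mathbf{q}_{s,s'}$ holds even if $\mathbf{r}$ is not unique, which is all we need.
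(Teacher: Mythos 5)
Your proposal is correct and follows essentially the same route as the paper: the paper likewise invokes the row-sum matching property of the BatchElo fit (citing \citet[Proposition 1]{balduzzi2018re}) to reduce $\bigl|\sum_{s'}(\mathbf{q}_{s,s'} - \hat{\mathbf{q}}_{s,s'})\bigr|$ to $\bigl|\sum_{s'}(\mathbf{M}_{s,s'} - \hat{\mathbf{M}}_{s,s'})\bigr|$, and then applies Hoeffding's inequality entrywise together with a union bound. Your additional discussion of the weighted stationarity conditions and shift invariance goes beyond the paper's brief citation but does not change the argument.
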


\begin{proof}
As in \citet[Proposition 1]{balduzzi2018re}, we have that the row and column sums of $\hat{\mathbf{q}},\mathbf{q}$ match those of $\hat{\mathbf{p}},\mathbf{p}$, respectively. 
Thus, as a first result we obtain
\begin{align*}
    \sum_{s^\prime} (\mathbf{q}_{s, s^\prime} - \hat{\mathbf{q}}_{s, s^\prime} )
      &  =
    \sum_{s^\prime} (\mathbf{q}_{s,s^\prime} - \mathbf{p}_{s, s^\prime}) + 
          \sum_{s^\prime} (\mathbf{p}_{s, s^\prime} - \hat{\mathbf{p}}_{s, s^\prime}) + 
          \sum_{s^\prime} (\hat{\mathbf{p}}_{s, s^\prime} - \hat{\mathbf{q}}_{s, s^\prime}) \\
      &  =
    \sum_{s^\prime} (\mathbf{p}_{s, s^\prime} - \hat{\mathbf{p}}_{s, s^\prime} ) \quad \forall s \, , \qquad
\end{align*}
By analogous calculation, we obtain the following result for column sums:
\begin{align*}
    \sum_{s^\prime} (\mathbf{q}_{s, s^\prime} - \hat{\mathbf{q}}_{s, s^\prime} )
        =
    \sum_{s^\prime} (\mathbf{p}_{s, s^\prime} - \hat{\mathbf{p}}_{s, s^\prime} ) \quad \forall s \, .
\end{align*}
We may now apply Hoeffding's inequality to each $\hat{\mathbf{p}}_{s, s^\prime}$ with at least $N_{s,s^\prime}$ samples as in the statement of the theorem, and applying a union bound yields the required inequality.
\end{proof}

\section{Proofs of results from \cref{sec:sample_complexity}}\label{sec:supp-samplecomplexityproofs}

\subsection{Proof of \cref{thm:sample_complexity}}

\SampleComplexity*

We begin by stating and proving several preliminary results.

\begin{restatable}[Finite-$\alpha$ confidence bounds]{theorem}{ConfBoundsPi}
\label{thm:conf_bounds_pi}
    Suppose all payoffs are bounded in the interval $[-M_\mathrm{max}, M_\mathrm{max}]$.
    Let $0 < \varepsilon < \frac{g(\alpha, \eta, m, M_\mathrm{max})}{ 2^{|S|} L(\alpha, M_\mathrm{max})}$, and let $\hat{\mathbf{M}}$ be an empirical payoff table, such that
    \begin{align}
        \sup_{\substack{k \in [K] \\ s \in \prod_l S^l}} |\mathbf{M}^k(s^k, s^{-k}) - \hat{\mathbf{M}}^k(s^k, s^{-k})| \leq \varepsilon \, .
    \end{align}
    Then, denoting the invariant distribution of the Markov chain associated with $\hat{\mathbf{M}}$ by $\hat{\boldsymbol{\pi}}$, we have
    \begin{align}
        \max_{s \in \prod_k S^k} |\boldsymbol{\pi}(s) - \hat{\boldsymbol{\pi}}(s)| \leq 18 \varepsilon \frac{ L(\alpha, M_\mathrm{max})}{g(\alpha, \eta, m, M_\mathrm{max})} \sum_{n=1}^{|S|-1} \binom{|S|}{n} n^{|S|} \, .
    \end{align}
\end{restatable}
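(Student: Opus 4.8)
The plan is to combine (i) a Lipschitz-type bound on the entries of the \alpharank transition matrix $\mathbf{C}$ as a function of the payoff entries $\mathbf{M}$, with (ii) a quantitative Markov-chain perturbation bound relating a perturbation of the transition matrix to the corresponding perturbation of the invariant distribution. First I would control $\|\mathbf{C} - \hat{\mathbf{C}}\|$ entrywise. Each off-diagonal entry $\mathbf{C}_{s,\sigma}$ is, up to the constant $\eta$, a smooth function of the single payoff gap $x = \mathbf{M}^k(\sigma) - \mathbf{M}^k(s)$, namely $f(x) = (1-e^{-\alpha x})/(1-e^{-\alpha m x})$ (extended continuously at $x=0$ to $1/m$). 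I would bound $|f'(x)|$ on the interval $x \in [-2M_\mathrm{max}, 2M_\mathrm{max}]$ (the range of possible payoff gaps), showing $\sup|f'| \lesssim \alpha \exp(2\alpha M_\mathrm{max})$, which is where the factor $L(\alpha, M_\mathrm{max}) = 2\alpha \exp(2\alpha M_\mathrm{max})$ comes from; since $\hat x - x$ differs from $\mathbf{M}^k(\sigma)-\hat{\mathbf{M}}^k(\sigma)$ plus $\mathbf{M}^k(s)-\hat{\mathbf{M}}^k(s)$, the gap perturbation is at most $2\varepsilon$, giving $|\mathbf{C}_{s,\sigma} - \hat{\mathbf{C}}_{s,\sigma}| \leq 2\eta \varepsilon \sup|f'| = \eta \varepsilon L(\alpha,M_\mathrm{max})$ (up to constants I would track carefully). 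The diagonal entries then perturb by at most the sum of the off-diagonal perturbations in that row; since each row has at most $\sum_l(|S^l|-1) = 1/\eta$ nonzero off-diagonal entries, the diagonal perturbation is bounded by $\varepsilon L(\alpha, M_\mathrm{max})$ as well, so every row of $\mathbf{C} - \hat{\mathbf{C}}$ has $\ell_1$ norm at most (a small constant times) $\varepsilon L(\alpha, M_\mathrm{max})$.

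Next I would invoke a perturbation bound for invariant distributions of finite Markov chains. A convenient route is the bound $\|\boldsymbol{\pi} - \hat{\boldsymbol{\pi}}\|_\infty \leq \|\mathbf{C} - \hat{\mathbf{C}}\|_\infty \cdot \|Z\|_\infty$ (or an analogous group-inverse / fundamental-matrix bound), where the sensitivity constant is controlled in terms of mean first passage times or minimal transition probabilities of the chain. The minimum nonzero transition probability of $\mathbf{C}$ is $\eta \min(f$ over the payoff-gap range$) \geq \eta \frac{e^{2\alpha M_\mathrm{max}}-1}{e^{2\alpha m M_\mathrm{max}}-1} = g(\alpha,\eta,m,M_\mathrm{max})$ — exactly the quantity $g$ in the statement — and I would need a combinatorial bound on the number of spanning-tree-like terms appearing in the Markov chain tree theorem expression for $\boldsymbol{\pi}$, which is where the sum $\sum_{n=1}^{|S|-1} \binom{|S|}{n} n^{|S|}$ enters (counting rooted forests / paths on the $|S|$-node state graph). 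Dividing through, the sensitivity constant is of order $g^{-1} \sum_n \binom{|S|}{n} n^{|S|}$, and multiplying by the transition-matrix perturbation $\lesssim \varepsilon L$ yields the claimed bound $18 \varepsilon L g^{-1} \sum_n \binom{|S|}{n} n^{|S|}$; the hypothesis $\varepsilon < g/(2^{|S|}L)$ is presumably used to guarantee that the perturbed chain $\hat{\mathbf{C}}$ remains irreducible (so $\hat{\boldsymbol{\pi}}$ is well-defined and the linearized perturbation estimate is valid), since it ensures the perturbed transition probabilities stay positive.

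The main obstacle I expect is obtaining the perturbation bound with the \emph{explicit} constant $18$ and the \emph{exact} combinatorial factor $\sum_{n=1}^{|S|-1}\binom{|S|}{n}n^{|S|}$: generic Markov-chain perturbation theorems (Schweitzer, Cho–Meyer, Seneta) give bounds in terms of ergodicity coefficients or norms of the group inverse that are not immediately in this closed form, so I would most likely go through the Markov chain tree theorem, writing $\pi(s) = w(s)/\sum_{s'} w(s')$ where $w(s)$ is a sum over spanning in-trees rooted at $s$ of products of $|S|-1$ transition probabilities, then bounding the perturbation of each such rational function by bounding numerator and denominator perturbations and using the lower bound $g$ on each factor. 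Carefully counting the trees/forests (and the paths contributing to the first-passage-time interpretation) to match the stated combinatorial expression, and tracking the accumulated constants to land on $18$, is the delicate bookkeeping; the analytic Lipschitz estimate on $f$ and the irreducibility check are comparatively routine once the range $[-2M_\mathrm{max},2M_\mathrm{max}]$ of payoff gaps is pinned down.
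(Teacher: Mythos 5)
Your plan matches the paper's proof almost exactly: the paper establishes the entrywise Lipschitz constant $L(\alpha,M_\mathrm{max})$ of the transition probabilities (Lemma~\ref{lem:lipschitzbound}), the lower bound $g(\alpha,\eta,m,M_\mathrm{max})$ on the nonzero entries of $\mathbf{C}$ (Lemma~\ref{lem:lowerbound}), and then converts the absolute bound $|\mathbf{C}_{ij}-\hat{\mathbf{C}}_{ij}|\le\varepsilon L(\alpha,M_\mathrm{max})$ into the relative bound $(\varepsilon L/g)\,\mathbf{C}_{ij}$, with the hypothesis $\varepsilon< g/(2^{|S|}L)$ serving precisely to make this relative perturbation parameter smaller than $2^{-|S|}$. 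The only difference is that the constant $18$ and the factor $\sum_{n=1}^{|S|-1}\binom{|S|}{n}n^{|S|}$ are not re-derived via the Markov chain tree theorem as you anticipate: they come ready-made from a cited multiplicative-perturbation theorem for invariant distributions (Theorem~\ref{thm:perturbedmarkovchains}), so the ``delicate bookkeeping'' you flag is exactly the step the paper outsources to the literature.
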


We base our proof of \cref{thm:conf_bounds_pi} on the following corollary of \citep[Theorem 1]{PerturbedMarkovChains}.
\begin{theorem}\label{thm:perturbedmarkovchains}
Let $q$ be an irreducible transition kernel on a finite state space $S$, with invariant distribution $\pi$. 
Let $\beta \in (0, 1/2^{|S|})$, and let $\hat{q}$ be another transition kernel on $S$. Suppose that $|q(t|s) - \hat{q}(t|s)| \leq \beta q(t|s)$ for all states $s,t \in S$. If $q$ is irreducible, then $\hat{q}$ is irreducible, and the invariant distributions $\pi, \hat{\pi}$ of $q, \hat{q}$ satisfy
\begin{align*}
    |\pi(s) - \hat{\pi}(s)| \leq 18 \pi(s) \beta \sum_{n=1}^{|S|-1} \binom{|S|}{n} n^{|S|} \, ,
\end{align*}
for all $s \in S$.
\end{theorem}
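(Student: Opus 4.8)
My plan is to give a self-contained proof via the \emph{Markov chain tree theorem}, which expresses each stationary probability as a ratio of weighted sums over directed spanning trees; under a purely \emph{relative} perturbation of the kernel this representation makes the effect on $\pi$ transparent. (The statement is advertised as a corollary of \citep[Theorem 1]{PerturbedMarkovChains}, and one could instead simply verify the hypotheses of that general result; the route below has the advantage of being elementary and of yielding the irreducibility claim along the way.) First I would record the representation: for the irreducible kernel $q$ on $S$, write $w(s) = \sum_{T \in \mathcal{T}_s} \prod_{(u \to v) \in T} q(v\mid u)$, where $\mathcal{T}_s$ is the set of spanning in-trees (arborescences) rooted at $s$, each having exactly $|S|-1$ edges; then $\pi(s) = w(s)/\sum_{r \in S} w(r)$, and likewise $\hat\pi(s) = \hat w(s)/\sum_r \hat w(r)$ for the perturbed kernel $\hat q$.

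Second, I would exploit the hypothesis $|q(t\mid s) - \hat q(t\mid s)| \le \beta\, q(t\mid s)$, which says $\hat q(t\mid s) \in [(1-\beta)q(t\mid s),\,(1+\beta)q(t\mid s)]$ for all $s,t$. In particular $\hat q(t\mid s) = 0$ exactly when $q(t\mid s)=0$, so the support graphs of $q$ and $\hat q$ coincide; hence $\hat q$ inherits irreducibility from $q$ (proving that part of the claim, with no restriction on $\beta$), and a spanning tree has positive weight under $\hat q$ iff it does under $q$. Since each such tree is a product of exactly $|S|-1$ edge weights, every one scaled into $[1-\beta,1+\beta]$, we get $\hat w(s) \in [(1-\beta)^{|S|-1}w(s),\,(1+\beta)^{|S|-1}w(s)]$ for every $s$. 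Taking worst cases (numerator up with denominator down, and vice versa) then yields the clean two-sided bound
\[
  \Bigl(\tfrac{1-\beta}{1+\beta}\Bigr)^{|S|-1}\pi(s) \le \hat\pi(s) \le \Bigl(\tfrac{1+\beta}{1-\beta}\Bigr)^{|S|-1}\pi(s),
\]
so that $|\pi(s)-\hat\pi(s)| \le \pi(s)\bigl[(\tfrac{1+\beta}{1-\beta})^{|S|-1}-1\bigr]$.

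The final step, and the main obstacle, is to relax this clean bound into the stated combinatorial form $18\,\pi(s)\,\beta \sum_{n=1}^{|S|-1}\binom{|S|}{n}n^{|S|}$. Using $\beta \in (0, 2^{-|S|})$ I would bound $\ln\frac{1+\beta}{1-\beta} \le 2\beta/(1-\beta^2)$ and exponentiate to show $(\tfrac{1+\beta}{1-\beta})^{|S|-1}-1 \le c\,(|S|-1)\beta$ for a small absolute constant $c$, the smallness of $\beta$ keeping the exponential prefactor bounded uniformly over the admissible range. Since the $n=1$ term alone gives $\sum_{n=1}^{|S|-1}\binom{|S|}{n}n^{|S|} \ge |S|$, the linear factor $c(|S|-1)$ is comfortably dominated by $18\sum_{n}\binom{|S|}{n}n^{|S|}$, recovering the statement as worded.

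The genuine care in this last step is verifying the constant $18$ and the threshold $2^{-|S|}$ are mutually consistent, i.e. that the estimate holds \emph{uniformly} across $\beta \in (0,2^{-|S|})$ rather than only to first order; this is elementary but is precisely where the cleaner tree-theorem bound must be matched to the looser combinatorial expression that the downstream result \cref{thm:conf_bounds_pi} consumes. If one preferred to avoid this bookkeeping entirely, the alternative is to invoke \citep[Theorem 1]{PerturbedMarkovChains} directly after checking its hypotheses (irreducibility preservation, shown above, and $\beta$ in range), reading off the combinatorial constant from that reference rather than rederiving it.
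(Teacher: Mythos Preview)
The paper does not prove this theorem at all: it is simply stated as ``a corollary of \citep[Theorem 1]{PerturbedMarkovChains}'' and then used as a black box. Your second suggested route (just invoke the reference after checking its hypotheses) is therefore exactly what the paper does.

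Your first route, the self-contained argument via the Markov chain tree theorem, is genuinely different and is correct. The key observation---that a multiplicative perturbation of each edge weight by a factor in $[1-\beta,1+\beta]$ perturbs each spanning-tree weight (a product of $|S|-1$ edges) by a factor in $[(1-\beta)^{|S|-1},(1+\beta)^{|S|-1}]$, and hence perturbs the ratio $\pi(s)=w(s)/\sum_r w(r)$ by at most $(\tfrac{1+\beta}{1-\beta})^{|S|-1}$---is sound, as is the irreducibility argument from equality of supports. Your final relaxation also goes through: with $\beta<2^{-|S|}$ one has $(|S|-1)\ln\tfrac{1+\beta}{1-\beta}\le \tfrac{8}{3}(|S|-1)\beta$, which is at most $2/3$ uniformly for $|S|\ge 2$, so $(\tfrac{1+\beta}{1-\beta})^{|S|-1}-1\le e^{2/3}\cdot\tfrac{8}{3}(|S|-1)\beta<6\,|S|\,\beta$, and since $\sum_{n=1}^{|S|-1}\binom{|S|}{n}n^{|S|}\ge |S|$ from the $n=1$ term alone, the stated bound with constant $18$ follows with room to spare.

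What your approach buys is a self-contained elementary proof and, as a byproduct, a visibly sharper bound: $|\pi(s)-\hat\pi(s)|\le C\,|S|\,\beta\,\pi(s)$ for an absolute constant $C$, whereas the combinatorial sum in the statement grows super-exponentially in $|S|$. What the paper's approach buys is brevity---no bookkeeping at all---at the cost of relying on an external reference whose constants are evidently far from tight for this application.
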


We next derive several technical bounds on properties of the \alpharank transition matrix $\mathbf{C}$, defined in \cref{eq:alpharanktransition1}.

\begin{lemma}\label{lem:lowerbound}
    Suppose all payoffs are bounded in the interval $[-M_\mathrm{max}, M_\mathrm{max}]$. 
    Then all non-zero elements of the transition matrix $\mathbf{C}$ are lower-bounded by $g(\alpha, \eta, m, M_\mathrm{max})$.
\end{lemma}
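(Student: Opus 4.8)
The plan is to lower-bound each of the two cases in the definition of $\mathbf{C}_{s,\sigma}$ in \cref{eq:alpharanktransition1} separately, noting that $g(\alpha,\eta,m,M_{\max})$ is designed precisely to be the worst case. For the ``otherwise'' branch the nonzero value is exactly $\eta/m$, so I first check that $g(\alpha,\eta,m,M_{\max}) = \eta\frac{\exp(2\alpha M_{\max})-1}{\exp(2\alpha m M_{\max})-1} \le \eta/m$; this follows because the function $t \mapsto \frac{\exp(2\alpha t)-1}{t}$ is increasing for $t>0$ (equivalently $e^x - 1 - xe^x \le 0$ is not quite what is needed — rather one uses that $\frac{e^{cx}-1}{x}$ is increasing in $x$ for $c>0$, a standard convexity fact), applied with $t=1$ versus $t=m\ge 1$, giving $\frac{\exp(2\alpha M_{\max})-1}{1} \le \frac{\exp(2\alpha m M_{\max})-1}{m}$ after rescaling by $M_{\max}$, hence the claim.

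For the main branch, when $\mathbf{M}^k(\sigma)\neq\mathbf{M}^k(s)$, write $x = \alpha(\mathbf{M}^k(\sigma)-\mathbf{M}^k(s))$, so the relevant factor is $h(x) := \frac{1-e^{-x}}{1-e^{-mx}}$, and the transition probability is $\eta\, h(x)$. I would first observe that $h$ extends continuously and is positive on all of $\mathbb{R}$ (the apparent singularity at $x=0$ is removable, with $h(0)=1/m$), and that $h$ is monotonically decreasing in $x$ — this is the key analytic step. Monotonicity can be seen by rewriting $h(x) = \frac{1}{1 + e^{-x} + e^{-2x} + \cdots + e^{-(m-1)x}}$ using the finite geometric series identity $\frac{1-e^{-mx}}{1-e^{-x}} = \sum_{j=0}^{m-1} e^{-jx}$, valid for all $x$ including $x=0$; since $\sum_{j=0}^{m-1} e^{-jx}$ is increasing in $x$, its reciprocal $h(x)$ is decreasing. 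Now, because payoffs lie in $[-M_{\max},M_{\max}]$, we have $|\mathbf{M}^k(\sigma)-\mathbf{M}^k(s)| \le 2M_{\max}$, so $x \in [-2\alpha M_{\max},\, 2\alpha M_{\max}]$, and by monotonicity the minimum of $h$ over this interval is attained at the right endpoint: $h(x) \ge h(2\alpha M_{\max}) = \frac{1-e^{-2\alpha M_{\max}}}{1-e^{-2\alpha m M_{\max}}} = \frac{e^{2\alpha M_{\max}}-1}{e^{2\alpha m M_{\max}}-1}\cdot e^{2\alpha M_{\max}(m-1)}$; wait — more directly, multiplying numerator and denominator by $e^{2\alpha m M_{\max}}$ gives $h(2\alpha M_{\max}) = \frac{e^{2\alpha m M_{\max}} - e^{2\alpha(m-1)M_{\max}}}{e^{2\alpha m M_{\max}}-1} \ge \frac{e^{2\alpha M_{\max}}-1}{e^{2\alpha m M_{\max}}-1}$ for $m\ge 1$ (the inequality $e^{2\alpha m M_{\max}} - e^{2\alpha(m-1)M_{\max}} \ge e^{2\alpha M_{\max}}-1$ holds since both sides equal $e^{2\alpha(m-1)M_{\max}}(e^{2\alpha M_{\max}}-1)$ on the left and the factor $e^{2\alpha(m-1)M_{\max}}\ge 1$). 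Hence $\eta\,h(x) \ge \eta\frac{e^{2\alpha M_{\max}}-1}{e^{2\alpha m M_{\max}}-1} = g(\alpha,\eta,m,M_{\max})$.

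Combining the two branches, every nonzero entry of $\mathbf{C}$ is at least $\min\{\eta/m,\ g(\alpha,\eta,m,M_{\max})\} = g(\alpha,\eta,m,M_{\max})$ by the first step, which is exactly the claimed bound. I would also remark that this argument tacitly covers the self-transition $\mathbf{C}_{s,s}$ only insofar as it is nonzero, but the lemma statement concerns all nonzero elements, and the self-transition requires a separate (easy) argument or is simply not the bottleneck; I would keep the focus on the off-diagonal entries since those are what enters the irreducibility/perturbation analysis.

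The main obstacle is establishing the monotonicity of $h(x) = \frac{1-e^{-x}}{1-e^{-mx}}$ cleanly across $x=0$, where naive differentiation is awkward; the geometric-series rewrite $h(x)^{-1} = \sum_{j=0}^{m-1}e^{-jx}$ sidesteps this entirely and makes both the removable singularity and the monotonicity transparent, so I expect that once that reformulation is in hand the rest is a short sequence of elementary exponential inequalities.
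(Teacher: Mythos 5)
Your geometric-series rewrite $h(x)^{-1} = \sum_{j=0}^{m-1} e^{-jx}$ is a good way to handle the removable singularity, but you have the monotonicity backwards, and this makes your key step fail as written. Each term $e^{-jx}$ with $j \geq 1$ is \emph{decreasing} in $x$, so $\sum_{j=0}^{m-1} e^{-jx}$ is decreasing and its reciprocal $h(x) = \frac{1-e^{-x}}{1-e^{-mx}}$ is \emph{increasing} (sanity check, $m=2$: $h(x) = (1+e^{-x})^{-1}$ is the logistic function). Hence the minimum of $h$ over $[-2\alpha M_\mathrm{max}, 2\alpha M_\mathrm{max}]$ is attained at the \emph{left} endpoint, not the right, and your chain $h(x) \geq h(2\alpha M_\mathrm{max})$ is false for every $x < 2\alpha M_\mathrm{max}$; what you actually bounded is the maximum of $h$. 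The conclusion survives because the fix is immediate and in fact cleaner: $h(-2\alpha M_\mathrm{max}) = \frac{1-e^{2\alpha M_\mathrm{max}}}{1-e^{2\alpha m M_\mathrm{max}}} = \frac{e^{2\alpha M_\mathrm{max}}-1}{e^{2\alpha m M_\mathrm{max}}-1} = g(\alpha,\eta,m,M_\mathrm{max})/\eta$ exactly, with no further inequality needed; this is precisely the paper's argument (``minimal for $x=-2M_\mathrm{max}$''). Your handling of the $\eta/m$ branch via $m(e^c-1) \leq e^{mc}-1$ is correct, though it also follows at once from $h(0)=1/m$ and the corrected monotonicity.

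The second gap is your decision to set aside the self-transition entries $\mathbf{C}_{s,s}$. The lemma claims the bound for \emph{all} non-zero entries of $\mathbf{C}$, and the diagonal is not dispensable downstream: in the proof of \cref{thm:conf_bounds_pi} the multiplicative condition $|\mathbf{C}_{ij} - \hat{\mathbf{C}}_{ij}| \leq \beta\, \mathbf{C}_{ij}$ must hold for every entry, diagonal included, before \cref{thm:perturbedmarkovchains} can be applied, so your remark that only off-diagonal entries enter the perturbation analysis is not accurate. The paper covers the diagonal by upper-bounding each off-diagonal entry by $\eta\frac{1-e^{-2\alpha M_\mathrm{max}}}{1-e^{-2\alpha m M_\mathrm{max}}}$ (the value of $\eta h$ at the right endpoint, i.e.\ its maximum --- the same correction as above), summing the at most $\eta^{-1}$ such terms, and computing $\mathbf{C}_{ss} \geq 1 - \frac{1-e^{-2\alpha M_\mathrm{max}}}{1-e^{-2\alpha m M_\mathrm{max}}} = \frac{e^{2\alpha(m-1)M_\mathrm{max}}-1}{e^{2\alpha m M_\mathrm{max}}-1} \geq \eta\frac{e^{2\alpha M_\mathrm{max}}-1}{e^{2\alpha m M_\mathrm{max}}-1}$. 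You need this (or an equivalent) argument for the lemma as stated; with it and the corrected monotonicity your proof matches the paper's.
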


\begin{proof}
    Consider first an off-diagonal, non-zero element of the matrix. The transition probability is given by
    \begin{align*}
        \eta \left( \frac{1 - \exp(-\alpha x)}{1 - \exp(-\alpha m x)} \right) \, ,
    \end{align*}
    for some $x \in [-2M_\mathrm{max}, 2M_\mathrm{max}]$, by assumption of boundedness of payoffs. This quantity is minimal for $x=-2M_\mathrm{max}$, and we hence obtain the required lower-bound. We note also that these transition probabilities are upper-bounded by taking $x=2M_\mathrm{max}$, yielding an upper bound of $\eta \frac{1 - \exp(-2\alpha M_\mathrm{max})}{1 - \exp(-2\alpha m M_\mathrm{max})}$.
    The transition probability of a diagonal element $\mathbf{C}_{ii}$ takes the form $1 - \sum_{j \not= i} \mathbf{C}_{ij}$. There are $\eta^{-1}$ non-zero terms in the sum, each of which is upper-bounded by $\eta \frac{1 - \exp(-2\alpha M_\mathrm{max})}{1 - \exp(-2\alpha m M_\mathrm{max})}$. Hence, we obtain the following lower bound on the diagonal entries:
    \begin{align*}
        1 - \eta^{-1} \eta \frac{1 - \exp(-2\alpha M_\mathrm{max})}{1 - \exp(-2\alpha m M_\mathrm{max})} 
        = & 1 - \frac{1 - \exp(-2\alpha M_\mathrm{max})}{1 - \exp(-2\alpha m M_\mathrm{max})} \\
        = & \frac{ 1 - \exp(-2\alpha m M_\mathrm{max}) - 1 + \exp(-2\alpha M_\mathrm{max})}{1 - \exp(-2\alpha m M_\mathrm{max})} \\
        = & \frac{ \exp(2\alpha (m-1) M_\mathrm{max}) - 1}{\exp(2\alpha m M_\mathrm{max}) - 1} \\
        \geq & \eta \frac{ \exp(2\alpha M_\mathrm{max}) - 1}{\exp(2\alpha m M_\mathrm{max}) - 1} \, ,
    \end{align*}
    as required.
\end{proof}

\begin{lemma}\label{lem:lipschitzbound}
    Suppose all payoffs are bounded in the interval $[-M_\mathrm{max}, M_\mathrm{max}]$. 
    All transition probabilities are Lipschitz continuous as a function of the collection of payoffs $(\hat{\mathbf{M}}^k(s^k, s^{-k}) | k \in [K], s \in \prod_l S^l)$ under the infinity norm, with Lipschitz constant upper-bounded by $L(\alpha, M_\mathrm{max})$.
\end{lemma}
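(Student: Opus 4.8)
\textbf{Proof proposal for Lemma~\ref{lem:lipschitzbound}.}

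The plan is to bound the partial derivatives of each transition probability with respect to each payoff entry, and then convert this into a Lipschitz bound under the infinity norm. Each off-diagonal transition probability $\mathbf{C}_{s,\sigma}$ depends on the payoffs only through the single scalar difference $x = \mathbf{M}^k(\sigma) - \mathbf{M}^k(s)$, via the function $f(x) = \eta \frac{1 - \exp(-\alpha x)}{1 - \exp(-\alpha m x)}$ (extended continuously by $f(0) = \eta/m$). So the first step is to show that $f$ is Lipschitz in $x$ on the interval $[-2M_{\max}, 2M_{\max}]$ with constant at most $\tfrac12 L(\alpha, M_{\max}) = \alpha \exp(2\alpha M_{\max})$; then, since $x$ itself is $1$-Lipschitz in the infinity norm of the payoff vector in each of its two arguments, the composite map inherits a Lipschitz constant of at most $2 \cdot \tfrac12 L(\alpha, M_{\max}) = L(\alpha, M_{\max})$, accounting for the fact that perturbing the payoff collection can move both $\mathbf{M}^k(\sigma)$ and $\mathbf{M}^k(s)$.

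For the scalar bound on $f'$, I would differentiate: writing $u = 1 - e^{-\alpha x}$ and $v = 1 - e^{-\alpha m x}$, we have $f = \eta u / v$ and $f' = \eta (u' v - u v')/v^2$ with $u' = \alpha e^{-\alpha x}$, $v' = \alpha m e^{-\alpha m x}$. The cleanest route is probably to substitute $t = e^{-\alpha x} \in [e^{-2\alpha M_{\max}}, e^{2\alpha M_{\max}}]$ and rewrite $f$ as a rational function of $t$, namely $\eta \frac{1-t}{1-t^m} = \eta / (1 + t + \cdots + t^{m-1})$; differentiating this and using $|dt/dx| = \alpha t \le \alpha e^{2\alpha M_{\max}}$ gives the claimed bound after checking that the remaining factor is bounded by $1$ (indeed $\big| \frac{d}{dt}\frac{1-t}{1-t^m} \big| = \big| \frac{1+2t+\cdots+(m-1)t^{m-2}}{(1+t+\cdots+t^{m-1})^2} \big| \le 1$ for $t \ge 0$, since numerator $\le$ denominator term-by-term in an appropriate grouping). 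One must also handle continuity/differentiability at $x = 0$, which follows since the rational form in $t$ is smooth at $t = 1$.

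For the diagonal entries $\mathbf{C}_{s,s} = 1 - \sum_\sigma \mathbf{C}_{s,\sigma}$, the Lipschitz constant is bounded by the sum of the Lipschitz constants of the finitely many off-diagonal terms in the sum; but each payoff entry $\hat{\mathbf{M}}^k(t^k, s^{-k})$ appears in only a bounded number of these terms, and with the $\eta$ normalization the per-entry sensitivities aggregate back to at most $L(\alpha, M_{\max})$ — here I would be slightly careful to track that $\eta = (\sum_l(|S^l|-1))^{-1}$ is exactly the reciprocal of the number of off-diagonal transitions, so the total variation of the diagonal entry with respect to any single coordinate is controlled. The main obstacle I anticipate is this bookkeeping step for the diagonal entries: making precise that a perturbation of the full payoff collection in the infinity norm changes $\mathbf{C}_{s,s}$ by at most $L(\alpha,M_{\max})$ times that perturbation, rather than by a factor that scales with $|S|$, which requires observing that each payoff coordinate enters $\mathbf{C}_{s,s}$ through a controlled number of summands and leaning on the $\eta$ normalization; the off-diagonal scalar calculus is comparatively routine.
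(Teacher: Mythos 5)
Your proposal is correct and follows essentially the same route as the paper's proof: bound the off-diagonal entries by scalar calculus on $x \mapsto \eta\frac{1-e^{-\alpha x}}{1-e^{-\alpha m x}}$ (the paper composes $e^{-\alpha x}$, Lipschitz constant $\alpha e^{2\alpha M_{\max}}$ on $[-2M_{\max},2M_{\max}]$, with $y\mapsto\eta\frac{1-y}{1-y^m}$, Lipschitz constant $\eta$ on $(0,\infty)$ — exactly the derivative bound you verify via $\frac{1-t}{1-t^m}=1/(1+t+\cdots+t^{m-1})$, which you actually check more explicitly than the paper does), pick up the factor $2$ because the payoff difference involves two coordinates, and then control the diagonal entry by aggregating the $\eta^{-1}$ off-diagonal terms while retaining the $\eta$ prefactor, giving $\eta^{-1}\cdot 2\eta\alpha e^{2\alpha M_{\max}} = L(\alpha,M_{\max})$. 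The "bookkeeping" you flag for the diagonal is handled in the paper by a simple triangle inequality over the $\eta^{-1}$ summands with the $\eta$-weighted per-term constant, and your per-coordinate counting argument is an equivalent way to reach the same bound.
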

\begin{proof}
    We begin by considering off-diagonal, non-zero elements. The transition probability takes the form $\eta \left( \frac{1 - \exp(-\alpha x)}{1 - \exp(-\alpha m x)} \right)$, for some $x \in [-2M_\mathrm{max}, 2M_\mathrm{max}]$, representing the payoff difference for the pair of strategy profiles concerned. First, the Lipschitz constant of $x \mapsto \exp(-\alpha x)$ on the domain $x \in [-2M_\mathrm{max}, 2M_\mathrm{max}]$ is $\alpha\exp(2\alpha M_\mathrm{max})$. Composing this function with the function $x \mapsto \eta\frac{1 -x}{1 - x^m}$ yields the transition probability, and this latter function has Lipschitz constant $\eta$ on $(0,\infty)$. Therefore, the function $x \mapsto \eta \left( \frac{1 - \exp(-\alpha x)}{1 - \exp(-\alpha m x)} \right)$ is Lipschitz continuous on $[-2M_\mathrm{max}, 2M_{\mathrm{max}}]$, with Lipschitz constant upper-bounded by $\eta \alpha \exp(2\alpha M_\mathrm{max})$.
    Hence, the Lipschitz constant of the off-diagonal transition probabilities as a function of the payoffs under the infinity norm is upper-bounded by $2 \eta \exp(2\alpha M_\mathrm{max})$.
    Turning our attention to the diagonal elements, we may immediately read off their Lipschitz constant as being upper-bounded by $\eta^{-1} \times 2 \eta \alpha \exp(2\alpha M_\mathrm{max}) = 2 \alpha \exp(2\alpha M_\mathrm{max})$, and hence the statement of the lemma follows.
\end{proof}

We can now give the full proof of \cref{thm:conf_bounds_pi}.
\begin{proof}[Proof of \cref{thm:conf_bounds_pi}]
By Lemma~\ref{lem:lipschitzbound}, we have that all elements of the transition matrix $C$ are Lipschitz with constant $L(\alpha, M_{\max})$ with respect to the payoffs $(\mathbf{M}^k(s^k, s^{-k}) | k \in [K], s \in \prod_l S^l)$ under the infinity norm. Thus, denoting the transition matrix constructed from the empirical payoff table $\hat{\mathbf{M}}$ by $\hat{\mathbf{C}}$, we have the following bound for all $i,j$:
\begin{align*}
    |\mathbf{C}_{ij} - \hat{\mathbf{C}}_{ij}| \leq \varepsilon L(\alpha, M_\mathrm{max}) \, .
\end{align*}
Next, we have that all non-zero elements of $\mathbf{C}_{ij}$ are lower-bounded by $g(\alpha, \eta, m, M_\mathrm{max})$ by Lemma~\ref{lem:lowerbound}, and hence we have
\begin{align*}
    |\mathbf{C}_{ij} - \hat{\mathbf{C}}_{ij}| \leq \varepsilon L(\alpha, M_\mathrm{max}) \leq \varepsilon \frac{L(\alpha, M_\mathrm{max})}{g(\alpha, \eta, m, M_\mathrm{max})}\mathbf{C}_{ij} \, .
\end{align*}
By assumption, the coefficient of $\mathbf{C}_{ij}$ on the right-hand side is less than $1/2^{|S|}$. We may now appeal to Theorem \ref{thm:perturbedmarkovchains}, to obtain
\begin{align*}
    |\pi(s) - \hat{\pi}(s)| \leq 18 \pi(s) \varepsilon \frac{L(\alpha, M_\mathrm{max})}{g(\alpha, \eta, m, M_\mathrm{max})} \sum_{n=1}^{|S|-1} \binom{|S|}{n} n^{|S|} \, ,
\end{align*}
for all $s \in \prod_k S^k$. Using the trivial bound $\pi(s) \leq 1$ for each $s \in \prod_k S^k$ yields the result.
\end{proof}

With \cref{thm:conf_bounds_pi} established, we can now prove \cref{thm:sample_complexity}.
\begin{proof}[Proof of \cref{thm:sample_complexity}]
    By Theorem~\ref{thm:conf_bounds_pi}, we have that $\max_{s \in S} |\pi(s) - \hat{\pi}(s) | < \varepsilon$ is guaranteed if
    \begin{align*}
        \max_{\substack{k \in [K]\\ s \in S}} |\mathbf{M}^k(s^k, s^{-k}) - \hat{\mathbf{M}}^k(s^k, s^{-k})| < \frac{\varepsilon g(\alpha, \eta, m, M_\mathrm{max}) }{ 18 L(\alpha, M_\mathrm{max}) \sum_{n=1}^{|S|-1} \binom{|S|}{n}n^{|S|} } < \frac{g(\alpha, \eta, m, M_\mathrm{max})}{2^{|S|} L(\alpha, M_\mathrm{max})} \, .
    \end{align*}
    We separate this into two conditions. Firstly, from the second inequality above, we require
    \begin{align*}
        \varepsilon < \frac{18 \sum_{n=1}^{|S|-1} \binom{|S|}{n} n^{|S|}}{2^{|S|}} \, ,
    \end{align*}
    which is satisfied by assumption. 
    Secondly, we have the condition
    \begin{align*}
        \max_{\substack{k \in [K]\\ s \in S}} |\mathbf{M}^k(s^k, s^{-k}) - \hat{\mathbf{M}}^k(s^k, s^{-k})| < \frac{\varepsilon g(\alpha, \eta, m, M_\mathrm{max}) }{ 18 L(\alpha, M_\mathrm{max}) \sum_{n=1}^{|S|-1} \binom{|S|}{n}n^{|S|} } \, .
    \end{align*}
    Now, write $N_{s}$ for the number of trials with the strategy profile $s \in S$. We will next use the following form of Hoeffding's inequality:
    Let $X_1,\ldots,X_N$ be i.i.d. random variables supported on $[a, b]$. Let $\varepsilon > 0$ and $\delta > 0$. Then for $N > (b-a)^2\log(2/\delta)/ (2\varepsilon^2)$, we have $\mathbb{P}\left(\left|\frac{1}{N}\sum_{n=1}^N X_n - \mathbb{E}\left\lbrack X_1\right\rbrack \right| > \varepsilon\right) < \delta$.
    Applying this form of Hoeffding's inequality to the random variable $\hat{M}^k(s^k, s^{-k})$, if we take 
    \begin{align*}
        N_{s} > \frac{4M^2_\mathrm{max}\log(2K|S|/\delta)}{2 \left(\frac{\varepsilon g(\alpha, \eta, m, M_\mathrm{max}) }{L(\alpha, M_\mathrm{max}) 18 \sum_{n=1}^{|S|-1} \binom{|S|}{n}n^{|S|} }\right)^2}
            =
        \frac{648 M_{\mathrm{max}}^2 \log(2K|S|/\delta) L(\alpha, M_\mathrm{max})^2 \left( \sum_{n=1}^{|S|-1} \binom{|S|}{n} n^{|S|} \right)^2}{\varepsilon^2 g(\alpha, \eta, m, M_\mathrm{max})^2    } \, ,
    \end{align*}
    then 
    \begin{align*}
        |\mathbf{M}^k(s^k, s^{-k}) -  \hat{\mathbf{M}}^k(s^k, s^{-k})| < \frac{\varepsilon g(\alpha, \eta, m, M_\mathrm{max}) }{ 18 L(\alpha, M_\mathrm{max}) \sum_{n=1}^{|S|-1} \binom{|S|}{n}n^{|S|} }
    \end{align*}
    holds with probability at least $1-\delta/(|S|K)$. Applying a union bound over all $k \in [K]$ and $s \in S$ then gives
    \begin{align*}
        \max_{\substack{k \in [K]\\ s \in S}} |\mathbf{M}^k(s^k, s^{-k}) - \hat{\mathbf{M}}^k(s^k, s^{-k})| < \frac{\varepsilon g(\alpha, \eta, m, M_\mathrm{max}) }{18 L(\alpha, M_\mathrm{max}) \sum_{n=1}^{|S|-1} \binom{|S|}{n}n^{|S|} }
    \end{align*}
    with probability at least $1-\delta$, as required.
\end{proof}

\subsection{Proof of \cref{thm:sample_complexityinfinite}}

\ExactRecoverySampleComplexity*

We begin by stating and proving a preliminary result.

\begin{restatable}[Infinite-$\alpha$ confidence bounds]{theorem}{ExactRecovery}\label{thm:exactrecovery}
    Suppose all payoffs are bounded in $[-M_\mathrm{max}, M_\mathrm{max}]$. Suppose that for all $k \in [K]$ and for all $s^{-k} \in S^{-k}$, we have $|\mathbf{M}^k(\sigma, s^{-k}) - \mathbf{M}^k(\tau, s^{-k})| \geq \Delta$ for all distinct $\sigma, \tau \in S^k$, for some $\Delta > 0$. Then if $|\hat{\mathbf{M}}^k(s) - \mathbf{M}^k(s)| < \Delta/2$ for all $s \in \prod_l S^l$ and all $k \in [K]$, then we have that $\hat{\mathbf{C}} = \mathbf{C}$.
\end{restatable}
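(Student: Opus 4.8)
The plan is to show that, under the stated payoff-gap assumption, perturbing each payoff by less than $\Delta/2$ cannot change the sign of any relevant payoff difference, and hence cannot change any edge of the response graph or any entry of the infinite-$\alpha$ transition matrix. First I would recall that in the infinite-$\alpha$ regime, the transition matrix $\mathbf{C}$ (whether the large-but-finite surrogate in the limit or the perturbed version) is determined entirely by the directions of the response-graph edges, i.e.\ by the collection of signs $\mathrm{sgn}\big(\mathbf{M}^k(\sigma, s^{-k}) - \mathbf{M}^k(\tau, s^{-k})\big)$ over all players $k$ and all single-deviation pairs. So it suffices to prove that these signs are preserved when passing from $\mathbf{M}$ to $\hat{\mathbf{M}}$.

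Next I would carry out the elementary sign-preservation argument. Fix $k \in [K]$, $s^{-k} \in S^{-k}$, and distinct $\sigma, \tau \in S^k$. By hypothesis $|\mathbf{M}^k(\sigma, s^{-k}) - \mathbf{M}^k(\tau, s^{-k})| \geq \Delta$, and by assumption $|\hat{\mathbf{M}}^k(\sigma, s^{-k}) - \mathbf{M}^k(\sigma, s^{-k})| < \Delta/2$ and likewise for $\tau$. Then by the triangle inequality
\begin{align*}
    \big| (\hat{\mathbf{M}}^k(\sigma, s^{-k}) - \hat{\mathbf{M}}^k(\tau, s^{-k})) - (\mathbf{M}^k(\sigma, s^{-k}) - \mathbf{M}^k(\tau, s^{-k})) \big| < \Delta \, ,
\end{align*}
so the empirical difference lies within distance $\Delta$ of the true difference, whose magnitude is at least $\Delta$; hence the empirical difference is nonzero and has the same sign as the true one. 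Applying this to every player, every context $s^{-k}$, and every pair of strategies shows that every response-graph edge has the same orientation under $\hat{\mathbf{M}}$ as under $\mathbf{M}$.

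Finally I would conclude that, since $\mathbf{C}$ (and its irreducibility-restoring perturbation) depends only on this edge structure — the nonzero off-diagonal entries are $\eta/m$ on losing-to-winning transitions in the finite-surrogate limit, the diagonal is the complement, and zero elsewhere — identical edge structure forces $\hat{\mathbf{C}} = \mathbf{C}$. I would then note that \cref{thm:sample_complexityinfinite} follows by combining this with Hoeffding's inequality: taking $N_s > 8\Delta^{-2} M_\mathrm{max}^2 \log(2|S|K/\delta)$ samples per profile ensures $|\hat{\mathbf{M}}^k(s) - \mathbf{M}^k(s)| < \Delta/2$ for each fixed $(k,s)$ with probability at least $1 - \delta/(|S|K)$, and a union bound over all $k \in [K]$, $s \in S$ gives the uniform bound with probability at least $1-\delta$, at which point $\hat{\mathbf{C}} = \mathbf{C}$ and all MCCs (hence rankings) are recovered exactly. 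There is no real obstacle here; the only point requiring a little care is making explicit that the infinite-$\alpha$ transition matrix is a function of edge orientations alone, so that sign preservation of payoff gaps is genuinely sufficient.
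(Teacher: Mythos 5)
Your proof is correct and follows essentially the same route as the paper: the triangle inequality shows each single-deviation payoff difference is perturbed by strictly less than $\Delta$, so its sign is preserved, and since the infinite-$\alpha$ transition matrix depends on payoffs only through these signs, $\hat{\mathbf{C}} = \mathbf{C}$ (with the sample-complexity corollary following from Hoeffding plus a union bound, just as in the paper). One incidental slip: in the infinite-$\alpha$ limit the off-diagonal entry on a winning deviation is $\eta$, not $\eta/m$ (the latter is the tie case), but this does not affect the argument since only the dependence on edge orientations matters.
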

\begin{proof}
    From the inequality $|\hat{\mathbf{M}}^k(s) - \mathbf{M}^k(s)| < \Delta/2$ for all $s \in S$, we have by the triangle inequality that $|(\hat{\mathbf{M}}^k(\sigma, s^{-k}) - \hat{\mathbf{M}}(\tau, s^{-k})) - (\mathbf{M}^k(\sigma, s^{-k}) - \mathbf{M}^k(\tau, s^{-k})) | < \Delta$ for all $k \in [K]$, $s^{-k} \in S^{-k}$, and all distinct $\sigma, \tau \in S^k$. Thus, by the assumption of the theorem, $\hat{\mathbf{M}}^k(\sigma, s^{-k}) - \hat{\mathbf{M}}(\tau, s^{-k})$ has the same sign as $\mathbf{M}^k(\sigma, s^{-k}) -\mathbf{M}^k(\tau, s^{-k})$ for all $k \in [K]$, $s^{-k} \in S^{-k}$, and all distinct $\sigma, \tau \in S^k$.
    It therefore follows from the expression for fixation probabilities (assuming the Fermi revision protocol), in the limit as $\alpha \rightarrow \infty$, the estimated transition probabilites $\hat{\mathbf{C}}$ exactly match the true transition probabilities $\mathbf{C}$, and hence the invariant distribution computed from the empirical payoff table matches that computed from the true payoff table.
\end{proof}

With this result in hand, we may now prove \cref{thm:sample_complexityinfinite}.

\begin{proof}[Proof of \cref{thm:sample_complexityinfinite}]
    We use the following form of Hoeffding's inequality. Let $X_1,\ldots,X_N$ be i.i.d. random variables supported on $[a,b]$, and let $\varepsilon > 0$, $\delta > 0$. Then if $N > (b-a)^2 \log(2/\delta)/(2 \varepsilon^2)$, we have $\mathbb{P}(|\frac{1}{N}\sum_{n=1}^N X_i - \mathbb{E}[X_1]| > \varepsilon) < \delta$. Applying this form of Hoeffding's inequality to an empirical payoff $\hat{\mathbf{M}}^k(s)$, and writing $|S|=\Pi_k |S^k|$, we obtain the result that for
    \begin{align*}
        N_s > \frac{(2M_\mathrm{max})^2 \log(2|S|K/\delta)}{2(\Delta/2)^2} = \frac{8M_\mathrm{max}^2 \log(2|S|K/\delta)}{\Delta^2} \, ,
    \end{align*}
    we have
    \begin{align*}
        |\hat{\mathbf{M}}^k(s) - \mathbf{M}^k(s)| < \Delta/2 \, ,
    \end{align*}
    with probability at least $1 - \delta/(|S|K)$. Applying a union bound over all $k \in [K]$ and all $s \in \prod_l S^l$, we obtain that if
    \begin{align*}
        N_s >  \frac{8M_\mathrm{max}^2 \log(2|S|K/\delta)}{\Delta^2} \qquad \forall s \in \prod_k S^k \, ,
    \end{align*}
    then by Theorem \ref{thm:exactrecovery}, we have that the transition matrix $\hat{\mathbf{C}}$ computed from the empirical payoff table $\hat{\mathbf{M}}$ matches the transition matrix $\mathbf{C}$ corresponding to the true payoff table $\mathbf{M}$ with probability at least $1-\delta$.
\end{proof}

\section{Proofs of results from \cref{sec:adaptive_sampling_algs}}\label{sec:supp-adaptivesampling}

\thmAdaptiveCorrectness*
\begin{proof}
We begin by introducing some notation. For a general strategy profile $s\in S$, denote the empirical estimator of $\mathbf{M}^k(s)$ after $u$ interactions by $\hat{\mathbf{M}}^k_u(s)$ and let $n_t(s)$ be the number of interactions of~$s$ by time $t$, and finally let $L(\hat{\mathbf{M}}^k_{u}(s), \delta, u, t)$ (respectively $U(\hat{\mathbf{M}}^k_{u}(s), \delta, u, t)$) denote the lower (respectively upper) confidence bound for $\mathbf{M}^k(s)$ at some time index $t$ after $u$ interactions of~$s$, empirical estimator $\hat{\mathbf{M}}^k_u(s)$, and confidence parameter $\delta$. We remark that in typical pure exploration problems, $t$ counts the total number of interactions; in our scenario, since we have a \textit{collection} of best-arm identification problems, we take a separate time index $t$ for each problem, counting the number of interactions for strategy profiles concerned with each specific problem. Thus, for the best-arm identification problem concerning two strategy profiles $s$, $s^\prime$, with interaction counts $n_s$, $n_{s^\prime}$, we take $t = n_s + n_{s^\prime}$.

We first apply a union bound over each best-arm identification problem:
\begin{align*}
    \mathbb{P}(\text{Incorrect output}) \leq \!\!\!\!\! \sum_{(\sigma, s^{-k}), (\tau, s^{-k})}\!\!\!\!\! \mathbb{P}(\text{Incorrect comparison for strategy profiles } (\sigma, s^{-k})\text{ and } (\tau, s^{-k})) \, .
\end{align*}
A standard analysis can now be applied to each best-arm identification problem, following the approach of e.g., \citet{kalyanakrishnan2012pac,gabillon2012best,karnin2013almost}. 
To reduce notational clutter, we let $s \triangleq (\sigma, s^{-k})$ and $s^\prime \triangleq (\tau, s^{-k})$. Further, without loss of generality taking $\mathbf{M}^k(s) > \mathbf{M}^k(s^\prime)$, we have
\begin{align*}
     \mathbb{P}&(\text{Incorrect ordering of } s, s^\prime) \\
    &\leq  \mathbb{P}(\exists t,\, u \leq t \text{ s.t. } \mathbf{M}^k(s) < L(\hat{\mathbf{M}}^k_u(s), \delta, u, t) \text{ or } \mathbf{M}^k(s^\prime) > U(\hat{\mathbf{M}}^k_u(s^\prime), \delta, u, t) ) \\
    &\leq  \sum_{t=1}^\infty \sum_{u=1}^t \bigg\lbrack \mathbb{P}(\mathbf{M}^k(s) < L(\hat{\mathbf{M}}^k_u(s), \delta, u, t)) + \mathbb{P}(\mathbf{M}^k(s^\prime) > U(\hat{\mathbf{M}}^k_u(s^\prime), \delta, u, t)) \bigg\rbrack \, .
\end{align*}
Note that the above holds for \textit{any} sampling strategy $\mathcal{S}$. 
We may now apply an individual concentration inequality to each of the terms appearing in the sum above, to obtain
\begin{align*}
    \mathbb{P}(\text{Incorrect ordering of } s, s^\prime) \leq 2 \sum_{t=1}^\infty \sum_{u=1}^t f(u, \delta, (|S^k|)_k, t) \, ,
\end{align*}
where $f(u, \delta, (|S^k|)_k, t)$ is an upper bound on the probability of a true mean lying outside a confidence interval based on $u$ interactions at time $t$. 
Thus, overall we have
\begin{align*}
    \mathbb{P}(\text{Incorrect output}) \leq \frac{|S| \sum_{k=1}^K (|S^k| - 1)}{2} \sum_{t=1}^\infty \sum_{u=1}^t 2f(u, \delta, (|S^k|)_k, t) \, .
\end{align*}
If $f$ is chosen such that $\frac{|S| \sum_{k=1}^K (|S^k| - 1)}{2} \sum_{t=1}^\infty \sum_{u=1}^t 2f(u, \delta, (|S^k|)_k, t) \leq \delta$, then the proof of correctness is complete. It is thus sufficient to choose
\begin{align*}
    f(u,\delta, (|S^k|)_k, t) = \frac{6\delta}{\pi^2|S|\sum_{k=1}^K (|S^k| - 1) t^3} \, .
\end{align*}
Note that this analysis has followed without prescribing the particular \emph{form} of confidence interval used, as long as its coverage matches the required bounds above.
\end{proof}

\thmAdaptiveSampleComplexity*

\begin{proof}
We adapt the approach of \citet{even2006action}, and use the notation introduced in the proof of \cref{thm:adaptivecorrectness} First, let $\bar{U}(\delta, u, t) = \sup_x \left\lbrack U(x, \delta, u, t) - x\right\rbrack$, and $\bar{L}(\delta, u, t) = \sup_x \left\lbrack x - L(x, \delta, u, t)\right\rbrack$. Note that if we have counts $n_s=u$ and $n_{s^\prime}=v$ such that
\begin{align}\label{eq:samplecomplexitycondition}
     \mathbf{M}^k(s) - \mathbf{M}^k(s^\prime) > 2\bar{U}(\delta, u, t) + 2 \bar{L}(\delta, v, t) \, ,
\end{align}
then we have
\begin{align*}
    L(\hat{\mathbf{M}}^k_u(s), \delta, u, t) - U(\hat{\mathbf{M}}^k_v(s^\prime), \delta, v, t) & > \hat{\mathbf{M}}^k_u(s) - \bar{L}(\delta, u, t) - (\hat{\mathbf{M}}^k_v(s^\prime) + \bar{U}(\delta, v, t)) \\
    & \overset{(a)}{>} \mathbf{M}^k(s) - 2\bar{L}(\delta, u, t) - \mathbf{M}^k(s^\prime) - 2\bar{U}(\delta, v, t) \\
    & > 0 \, ,
\end{align*}
where (a) holds with probability at least $1 - 2f(u, \delta, (|S^k|)_k, t)$. Hence, with probability at least $1-2f(u, \delta, (|S^k|)_k, t)$ the algorithm must have terminated by this point. Thus, if $u$, $v$ and $t$ are such that \cref{eq:samplecomplexitycondition} holds, then we have that the algorithm will have terminated with high probability. Writing $\Delta = \mathbf{M}^k(s) - \mathbf{M}^k(s^\prime)$, with all observed outcomes bounded in $[-M_\text{max},M_\text{max}]$, we have
    \begin{align*}
        \bar{U}(\delta, u, t) = \bar{L}(\delta, u, t) = \sqrt{\frac{4M_\text{max}^2\log(2/f(u, \delta, (|S^k|)_k, t))}{u}} \, .
    \end{align*}
    We thus require
    \begin{align*}
        \Delta > 2\sqrt{\frac{4M_\text{max}^2\log(2/f(u, \delta, (|S^k|)_k, t))}{u}} + 2\sqrt{\frac{4M_\text{max}^2\log(2/f(v, \delta, (|S^k|)_k, t))}{v}} \, .
    \end{align*}
    Taking $u=v$, and using $f(u, \delta, (|S^k|)_k, t) = \frac{6\delta}{\pi^2|S|\sum_{k=1}^K (|S^k| - 1) t^3}$ as above, we obtain the condition
    \begin{align*}
        \Delta > 4\sqrt{\frac{4M_\text{max}^2}{u} \log\left(\frac{8 u^3 \pi^2 |S| \sum_{k=1}^K (|S^k|-1)}{3\delta}  \right)} \, .
    \end{align*}
    A sufficient condition for this to hold is $u = \mathcal{O}(\Delta^{-2} \log(\frac{2}{\delta \Delta} ))$. Thus, if all strategy profiles $s$ have been sampled at least $\mathcal{O}(\Delta^{-2} \log(\frac{2}{\delta \Delta} ))$ times, the algorithm will have terminated with probability at least $1 - 2\delta$. Up to a $\log(1/\Delta)$ factor, this matches the instance-aware bounds obtained in the previous section.
\end{proof}

\section{Additional material on \responsegraphucb}\label{sec:supp-responsegraphucb}

In this section, we give precise details of the form of the confidence intervals considered in the ResponseGraphUCB algorithm, described in the main paper.

\paragraph{Hoeffding bounds (UCB).} In cases where the noise distribution on strategy payoffs is known to be bounded on an interval $[a,b]$, we can use confidence bounds based on the standard Hoeffding inequality. For a confidence level $\delta$ and count index $n$, and mean estimate $\overline{x}$, this interval takes the form $(\overline{x} - \sqrt{(b-a)^2\log(2/\delta) / 2n}, \overline{x} + \sqrt{(b-a)^2\log(2/\delta)/2n})$. Optionally, an additional exploration bonus based on a time index $t$, measuring the total number of samples for all strategy profiles concerned in the comparison, can be added, yielding an interval of the form $(\overline{x} - \sqrt{(b-a)^2\log(2/\delta)f(t) / n}, \overline{x} + \sqrt{(b-a)^2\log(2/\delta)f(t)/n})$, for some function $f : \mathbb{N} \rightarrow (0, \infty)$.

\paragraph{Clopper-Pearson bounds (CP-UCB).} In cases where the noise distribution is known to be Bernoulli, it is possible to tighten the Hoeffding confidence interval described above, which is valid for any distribution supported on a fixed finite interval. The result is the asymmetric Clopper-Pearson confidence interval: for an empirical estimate $\overline{x}$ formed from $n$ samples, at a confidence level $\delta$, the Clopper-Pearson interval \citep{garivier2011kl,clopper1934use} takes the form $(B(\delta/2; n\overline{x}, n -n\overline{x} + 1), B(1-\delta/2; n\overline{x}+1, n -n\overline{x} )$, where $B(p;v,w)$ is the $p$\textsuperscript{th} quantile of a Beta($v,w$) distribution.

\paragraph{Relaxed variants.} As an alternative to waiting for confidence intervals to become fully disjoint before declaring an edge comparison to be resolved, we may instead stipulate that confidence intervals need only $\varepsilon$-disjoint (that is, the length of their intersection is $< \varepsilon$). This has the effect of reducing the number of samples required by the algorithm, and may be practically advantageous in instances where the noise distributions do not attain the worst case under the confidence bound (for example, low-variance noise under the Hoeffding bounds); clearly however, such an adjustment breaks any theoretical guarantees of high-probability correct comparisons.

\section{Additional material on uncertainty propagation}\label{sec:uncertaintysupp}

In this section, we provide details for the high-level approach outlined in Section~\ref{sec:uncertainty}, in particular giving more details regarding the reduction to response graph selection (in particular, selecting directions of particular edges within the response graph), and then using the PageRank-style reduction to obtain a CSSP policy optimization problem.

\textbf{Reduction to edge direction selection.} The infinite-$\alpha$ \alpharank output is a function of the payoff table $\mathbf{M}$ only through the infinite-$\alpha$ limit of the corresponding transition matrix $\mathbf{C}$ defined in \eqref{eq:alpharanktransition1};  
this limit is determined by binary payoff comparisons for pairs of strategy profiles differing in a single strategy. We can therefore summarize the set of possible transition matrices $\mathbf{C}$ which are compatible with the payoff bounds $\mathbf{L}$ and $\mathbf{U}$ by compiling a list $E$ of response graph edges for which payoff comparisons (i.e., response graph edge directions) are uncertain under $\mathbf{L}$ and $\mathbf{U}$. Note that it may be possible to obtain even tighter confidence intervals on $\pi(s)$ by keeping track of which combinations of directed edges in $E$ are compatible with an underlying payoff table $\mathbf{M}$ itself, but by not doing so we only broaden the space of possible response graphs, and hence still obtain valid confidence bounds. 
The confidence interval for $\pi(s)$ could thus be obtained by computing the output of infinite-$\alpha$ \alpharank for each transition matrix $\mathbf{C}$ that arises from all choices of edge directions for the uncertain edges in $E$. 
However, this set is generally exponentially large in the number of strategy profiles, and thus intractable to compute. The next step is to reduce this problem to one which is solvable using standard dynamic programming techniques to avoid this intractability.

\textbf{Reduction to CSSP policy optimization.} We now use a reduction similar to that used in the PageRank literature for optimizing stationary distribution mass \citep{csaji2014pagerank}, encoding the problem above as an SSP optimization problem. 
For a transition matrix $\mathbf{C}$, let $(X_t)_{t=0}^\infty$ denote the corresponding Markov chain over the space of strategy profiles $S$, and define the \emph{mean return times} $\boldsymbol{\lambda} \in [0, \infty]^{S}$ by $\lambda(u) = \mathbb{E}\left\lbrack \inf\{ t > 0 | X_t = u\} | X_0 = u \right\rbrack$, for each $u \in S$.
By basic Markov chain theory, when $\mathbf{C}$ is such that $s$ is recurrent, 
the mass attributed to $s$ under the stationary distribution supported on the MCC containing~$s$ is equal to $1/\lambda(s)$; 
thus, maximizing (respectively, minimizing) $\pi(s)$ over a set of transition matrices is equivalent to minimizing (respectively, maximizing) $\lambda(s)$. 
Define the mean hitting time of $s$ starting at $u$ for all $u \in S$ by $\boldsymbol{\varphi} \in [0, \infty]^S$, where $\varphi(u) = \mathbb{E}\left\lbrack \inf\{ t > 0 | X_t = s\} | X_0 = u \right\rbrack$,  whereby $\varphi(s) = \lambda(s)$; 
then $\boldsymbol{\varphi} = \widetilde{\mathbf{C}} \boldsymbol{\varphi} + \mathbf{1}$, where $\widetilde{\mathbf{C}}$ is the substochastic matrix given by setting the column of $\mathbf{C}$ corresponding to state $s$ to the zero vector, and $\mathbf{1} \in \mathbb{R}^S$ is the vector of ones. 

Note that $\boldsymbol\varphi$ has the interpretation of a value function in an SSP problem, wherein the absorbing state is $s$ and all transitions before absorption incur a cost of $1$. 
The original problem of maximizing (respectively, minimizing) $\pi(s)$ is now expressed as minimizing (respectively, maximizing) this value at state $s$ over the set of compatible transition matrices $\widetilde{\mathbf{C}}$. 
We turn this into a standard control problem by specifying the \emph{action set} at each state $u \in S$ as $\mathcal{P}(\{ e \in E | u \in e \})$, the powerset of the set of uncertain edges in $E$ incident to $u$; the interpretation of selecting a subset $U$ of these edges is that precisely the edges in $U$ will be selected to flow \emph{out} of $u$; 
this then fully specifies the row of $\widetilde{\mathbf{C}}$ corresponding to $u$. 
Crucially, the action choices cannot be made independently at each state; 
if at state $u$, the uncertain edge between $u$ and $u^\prime$ is chosen to flow in a particular direction, then at state $u^\prime$ a \emph{consistent} action must be chosen, so that the actions at both states agree on the direction of the edge, thus leading to a \emph{constrained} SSP optimization problem. We refer to this problem as $\texttt{CSSP}(S, \mathbf{L}, \mathbf{U}, s)$.
While general solution of CSSPs is intractable, we recall the statement of Theorem~\ref{thm:cssp} that it is sufficient to consider the \emph{unconstrained} version of $\texttt{CSSP}(S, \mathbf{L}, \mathbf{U}, s)$ to recover the same optimal policy.

\thmCSSP*

We conclude by restating the final statements of Section~\ref{sec:uncertainty}.
In summary, the general approach for finding worst-case upper and lower bounds on infinite-$\alpha$ \alpharank ranking weights $\pi(s)$ for a given strategy profile $s \in S$ is to formulate the unconstrained SSP described above, find the optimal policy (using, e.g., linear programming, policy or value iteration), and then use the inverse relationship between mean return times and stationary distribution probabilities in recurrent Markov chains to obtain the bound on the ranking weight $\pi(s)$ as required. In the single-population case, the SSP problem can be run with the infinite-$\alpha$ \alpharank transition matrices; in the multi-population case, a sweep over perturbation levels in the infinite-$\alpha$ \alpharank model can be performed, as with standard \alpharank, to deal with the possibility of several sink strongly-connected components in the response graph.

\subsection{MCC detection}

Here, we outline a straightforward algorithm for determining whether $\inf_{\mathbf{L} \leq \hat{\mathbf{M}} \leq \mathbf{U}} \pi_{\hat{\mathbf{M}}}(s) = 0$, without recourse to the full CSSP reduction described in Section \ref{sec:uncertainty}. First, we use $\mathbf{L}$ and $\mathbf{U}$ to split the edges of the response graph into two disjoint sets $E_U$, edges for which the direction is uncertain under $\mathbf{L}$ and $\mathbf{U}$, and $E_C$, the edges with certain direction. We then construct the set $F_A \subseteq S$ of \emph{forced ancestors} of $s$; that is, the set of strategy profiles that can reach $s$ using a path of edges contained in $E_C$, including $s$ itself. We also define the set $F_D\subseteq S$ of \emph{forced descendents} of $s$; that is, the set of strategy profiles that can be reached from $s$ using a path of edges in $E_C$, including $s$ itself. If $F_D \not\subseteq F_A$, then $s$ can clearly be made to lie outside an MCC by setting all edges in $E_U$ incident to $F_D \setminus F_A$ to be directed \emph{into} $F_D$. Then there are no edges directed out of $F_D \setminus F_A$, so this set contains at least one MCC. There also exists a path from $s$ to $F_D \setminus F_A$, and hence $s$ cannot lie in an MCC, so $\inf_{\mathbf{L} \leq \hat{\mathbf{M}} \leq \mathbf{U}} \pi_{\hat{\mathbf{M}}}(s) = 0$. If, on the other hand, $F_D \subseteq F_A$, we may set all uncertain edges between $F_A$ and its complement to be directed away from $F_A$. We then iteratively compute two sets: $F_{A,\text{out}}$, the set of profiles in $F_A$ for which there exists a path out of $F_A$, and its complement $F_A \setminus F_{A,\text{out}}$. Any uncertain edges between these two sets are then set to be directed towards $F_{A,\text{out}}$, and the sets are then recomputed. This procedure terminates when either $F_{A,\text{out}} = F_A$, or there are no uncertain edges left between $F_A$ and $F_{A,\text{out}}$. If at this point there is no path from $s$ out of $F_A$, we conclude that $s$ must lie in an MCC, and so $\inf_{\mathbf{L} \leq \hat{\mathbf{M}} \leq \mathbf{U}} \pi_{\hat{\mathbf{M}}}(s) > 0$, whilst if such a path does exist, then $s$ does not lie in an MCC, so $\inf_{\mathbf{L} \leq \hat{\mathbf{M}} \leq \mathbf{U}} \pi_{\hat{\mathbf{M}}}(s) = 0$.

\subsection{Proof of \cref{thm:cssp}}\label{sec:uncertainty_prop_proof}

\thmCSSP*

\begin{proof}
Let $\ssmone$ be the substochastic matrix associated with the optimal \emph{unconstrained} policy, and suppose there are two action choices that are inconsistent; that is, there exist strategy profiles $u$ and $v$ differing only in index $k$, such that either (i) at state $u$, the edge direction is chosen to be $u \to v$, and at state $v$, the edge direction is chosen to be $v \to u$; or (ii) at state $u$, the edge direction is chosen to be $v \to u$, and at state $v$, the edge direction is chosen to be $u \to v$. We show that in either case, there is a policy without this inconsistency that achieves at least as good a value of the objective as the inconsistent policy.

We consider first case (i). Let $\boldsymbol{\varphi}$ be the associated expected costs under the inconsistent optimal policy, and suppose without loss of generality that $\boldsymbol{\varphi}(v) \geq \boldsymbol{\varphi}(u)$. Let $\ssmtwo$ be the substochastic matrix obtained by adjusting the action at state $u$ so that the edge direction between $u$ and $v$ is $v \to u$, consistent with the action choice at $v$. Denote the expected costs under this new transition matrix $\widetilde{\mathbf{D}}$ by $\boldsymbol{\mu}$. We can compare $\boldsymbol{\varphi}$ and $\boldsymbol{\mu}$ via the following calculation. By definition, we have $\boldsymbol{\varphi} = \ssmone\boldsymbol{\varphi} + \mathbf{1}$ and $\boldsymbol{\mu} = \ssmtwo \boldsymbol{\mu} + \mathbf{1}$. Thus, we compute
\begin{align*}
    \boldsymbol{\varphi} - \boldsymbol{\mu} & = (\ssmone\boldsymbol{\varphi} + \mathbf{1} ) - (\ssmtwo \boldsymbol{\mu} + \mathbf{1} ) \\
     & = \ssmone\boldsymbol{\varphi} - \ssmtwo \boldsymbol{\mu} \\
     & = \ssmone\boldsymbol{\varphi} - \ssmtwo\boldsymbol{\varphi} + \ssmtwo\boldsymbol{\varphi} - \ssmtwo \boldsymbol{\mu} \\
     & = (\ssmone - \ssmtwo)\boldsymbol{\varphi} + \ssmone(\boldsymbol{\varphi} - \boldsymbol{\mu}) \\
     \implies \boldsymbol{\varphi} - \boldsymbol{\mu} & = (\mathbf{I} - \ssmtwo)^{-1} (\ssmone - \ssmtwo) \boldsymbol{\varphi} \, .
\end{align*}
In this final line, we assume that $\mathbf{I}-\ssmtwo$ is invertible. If it is not, then it follows that $\ssmtwo$ is a strictly stochastic matrix, thus corresponding to a policy in which no edges flow into $s$.
From this we immediately deduce that the minimal value of $\boldsymbol{\varphi}(s)$ is $\infty$; hence, we may assume $\mathbf{I}-\ssmtwo$ is invertible in what follows. 
Assume for now that $s \not\in\{u,v\}$. Now note that $\ssmone$ and $\ssmtwo$ differ only in two elements: $(u,u)$, and $(u,v)$, and thus the vector $(\ssmone - \ssmtwo) \boldsymbol{\varphi}$ has a particularly simple form; all coordinates are $0$, except coordinate $u$, which is equal to $\eta(\boldsymbol{\varphi}(v) - \boldsymbol{\varphi}(u)) \geq 0$. Finally, observe that all entries of $(\mathbf{I} - \ssmtwo)^{-1} = \sum_{k=0}^\infty \ssmtwo^k$ are non-negative, and hence we obtain the element-wise inequality $\boldsymbol{\varphi} - \boldsymbol{\mu} \geq 0$, proving that the policy associated with $\ssmtwo$ is at least as good as $\ssmone$, as required. The argument is entirely analogous in case (ii), and when one of the strategies concerned is $s$ itself. Thus, the proof is complete.
\end{proof}

\section{Additional empirical details and results}

\subsection{Experimental procedures and reproducibility}\label{sec:experiment_procedures}

We detail the experimental procedures here.

The results shown in \cref{fig:uncertainty_example_soccer} are generated by computing the upper and lower payoff bounds given a mean payoff matrix and confidence interval size for each entry, then running the procedure outlined in~\cref{sec:uncertainty}.

As \cref{fig:ucb-ue} shows an intuition-building example of the \responsegraphucb outputs, it was computed by first constructing the payoff table specified in the figure, then running \responsegraphucb with the parameters specified in the caption.
The algorithm was then run until termination, with the strategy-wise sample counts in \cref{fig:ucb-ue} computed using running averages. 

The finite-$\alpha$ \alpharank results in \cref{fig:N_s_empirical} for every combination of $\alpha$ and $\epsilon$ are computed using 20, 5, and 5 independent trials, respectively, for the Bernoulli, soccer, and poker meta-games.
The same number of trials applies for every combination of $\delta$ and \responsegraphucb in \cref{fig:ranking_error_comparisons}.

The ranking results shown in \cref{fig:ranking_error_comparisons} are computed for 10 independent trials for each game and each $\delta$.

The parameters swept in our plots are the error tolerance, $\delta$, and desired error $\epsilon$. The range of values used for sweeps is indicated in the respective plots in~\cref{sec:experiments}, with end points chosen such that sweeps capture both the high-accuracy/high-sample complexity and low-accuracy/low-sample complexity regimes.

The sample mean is used as the central tendency estimator in plots, with variation indicated as the 95\% confidence interval that is the default setting used in the \texttt{Seaborn} visualization library that generates our plots. No data was excluded and no other preprocessing was conducted to generate these plots.

No special computing infrastructure is necessary for running \responsegraphucb, nor for reproducing our plots; we used local workstations for our experiments.

\clearpage % keep this
\subsection{Full comparison plots}\label{sec:full_comparison_plots}

As noted in \cref{sec:adaptive_sampling_algs}, sample complexity and ranking error under adaptive sampling are of particular interest. 
To evaluate this, we consider all variants of \responsegraphucb in \cref{fig:sweep_sampling_and_confidence_full}.

\begin{figure}[h!]
    \newcommand{\figHeight}{8.5\baselineskip}
    \subcaptionbox{Bernoulli games.
        }[0.25\textwidth]{ \includegraphics[height=\figHeight]{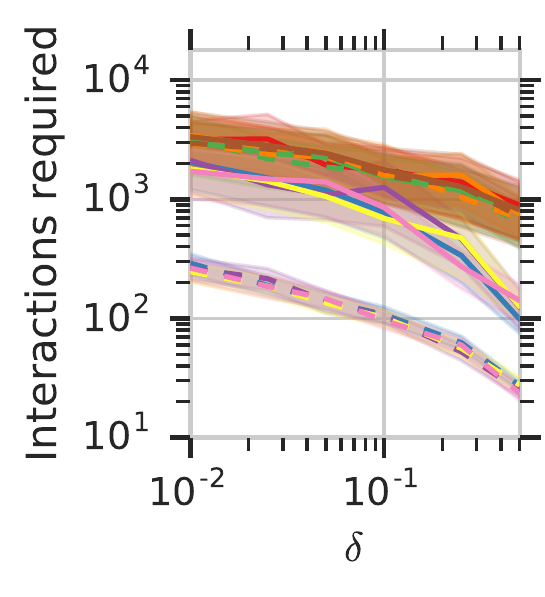}\\[-6pt]
        \includegraphics[height=\figHeight]{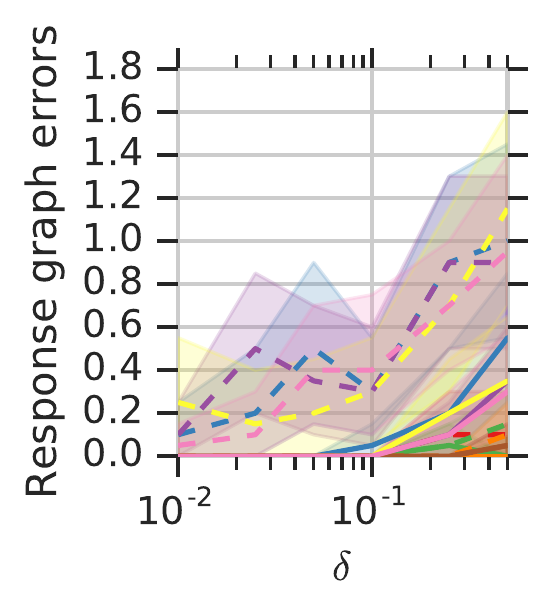}
        \vspace{-8pt}
    }%
    \rulesep%
    \subcaptionbox{Soccer meta-game.
    }[0.25\textwidth]{
        \includegraphics[height=\figHeight]{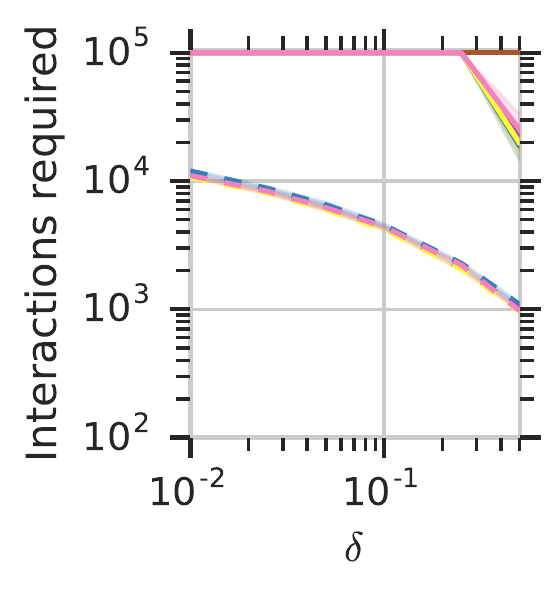}\\[-6pt]
        \includegraphics[height=\figHeight]{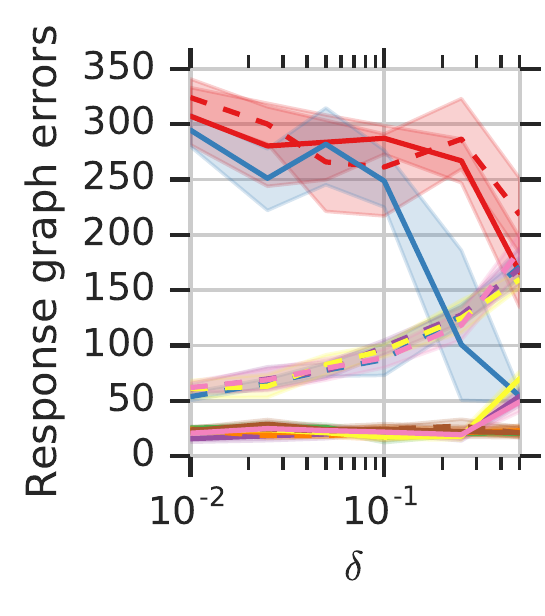}
        \vspace{-8pt}
    }%
    \rulesep%
    \subcaptionbox{Poker meta-game.
    }[0.25\textwidth]{
        \includegraphics[height=\figHeight]{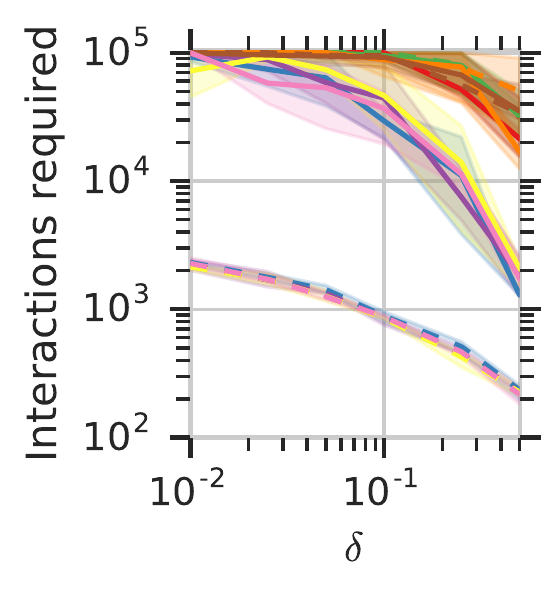}\\[-6pt]
        \includegraphics[height=\figHeight]{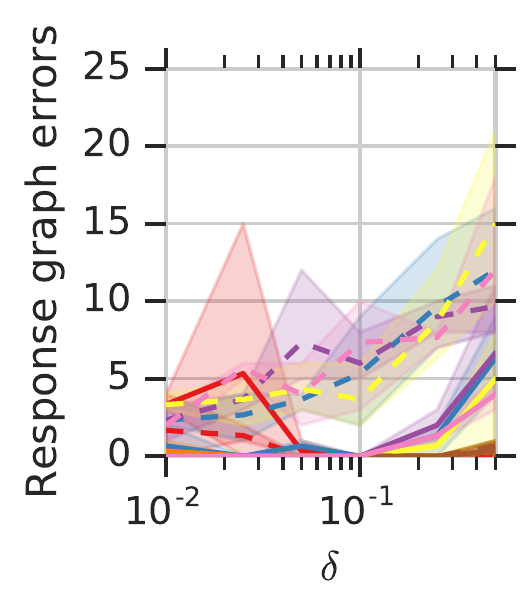}
        \vspace{-8pt}
    }
    \subcaptionbox*{}[0.21\textwidth]{
        \vspace{-5pt}
        \includegraphics[height=16\baselineskip]{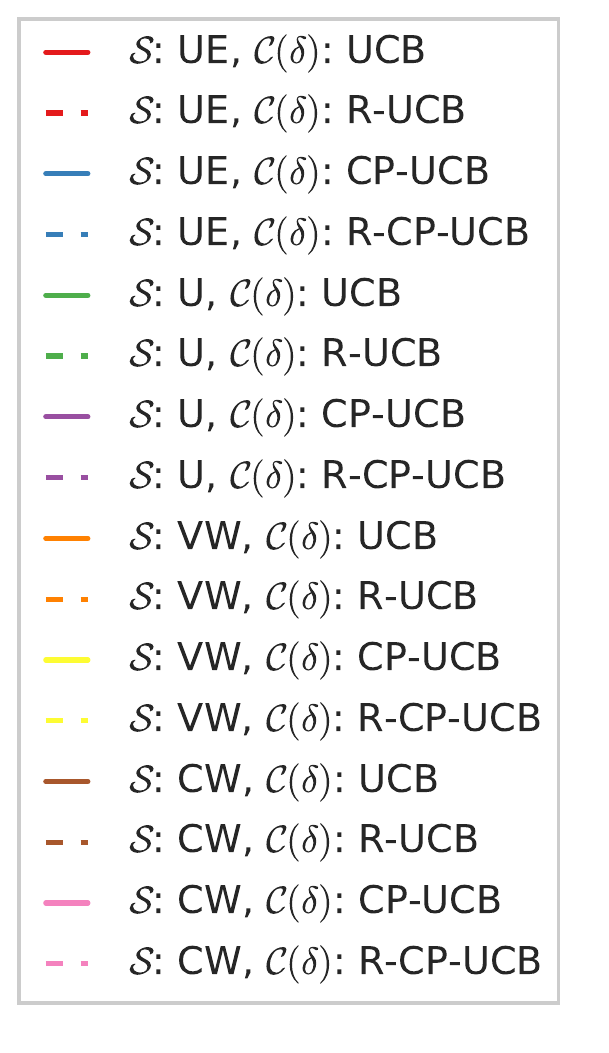}
    }
    \caption{\responsegraphucb performance metrics versus error tolerance $\delta$ for all games. First and second rows, respectively, show the \# of interactions required and response graph edge errors.}
    \label{fig:sweep_sampling_and_confidence_full}
\end{figure}

\clearpage % keep this
\subsection{Exploiting knowledge of symmetry in games}\label{sec:symmetric_responsegraphucb}
\begin{figure}[h]
    \begin{subfigure}{.3\textwidth}
        \centering
        \includegraphics[keepaspectratio,width=\textwidth]{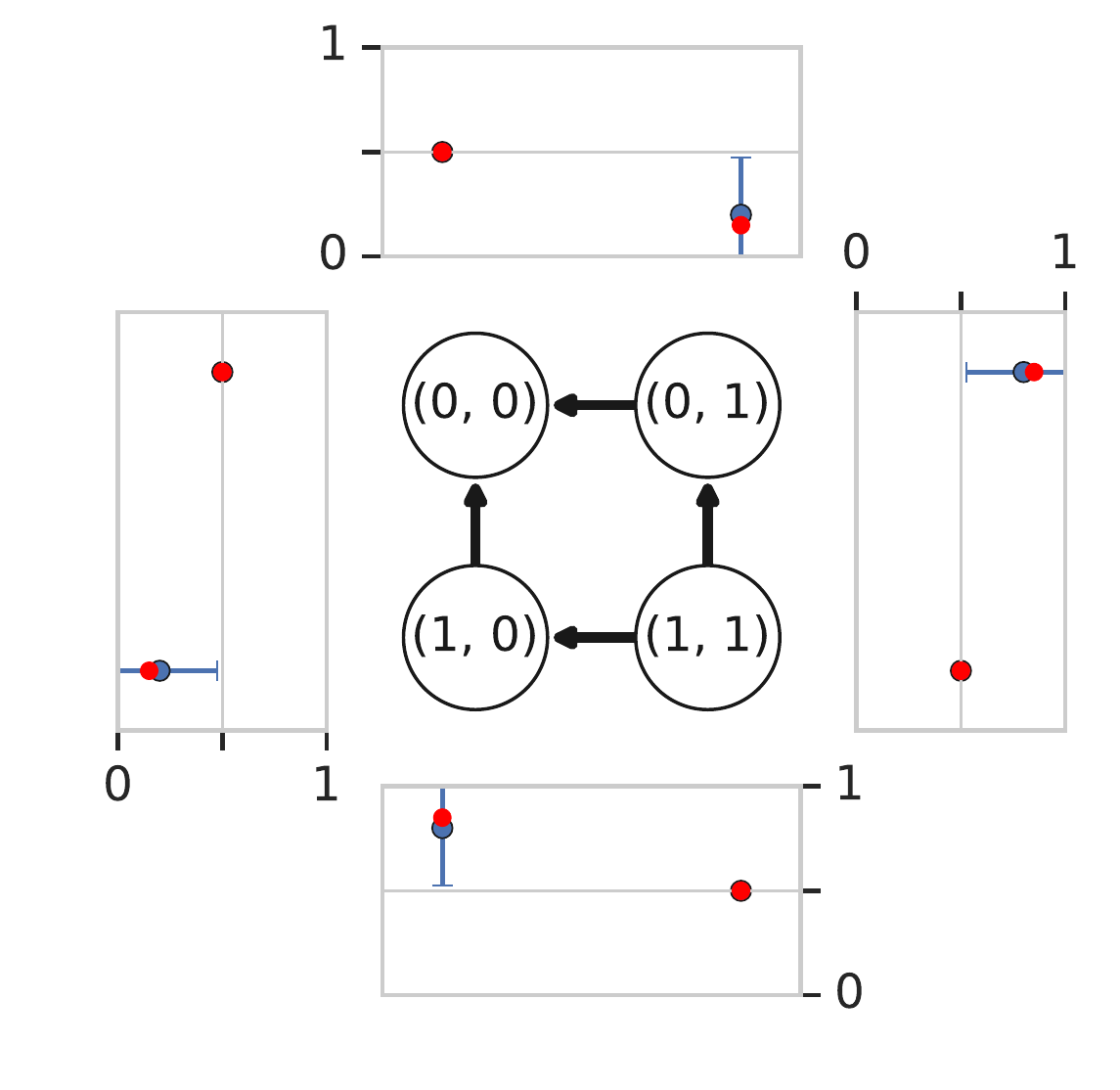}
        \caption{Reconstructed response graph.}
        \label{fig:ucb-ue-result-symmetric}
    \end{subfigure}
    \hfill%
    \begin{subfigure}{.6\textwidth}
        \centering
        \includegraphics[keepaspectratio,width=0.8\textwidth]{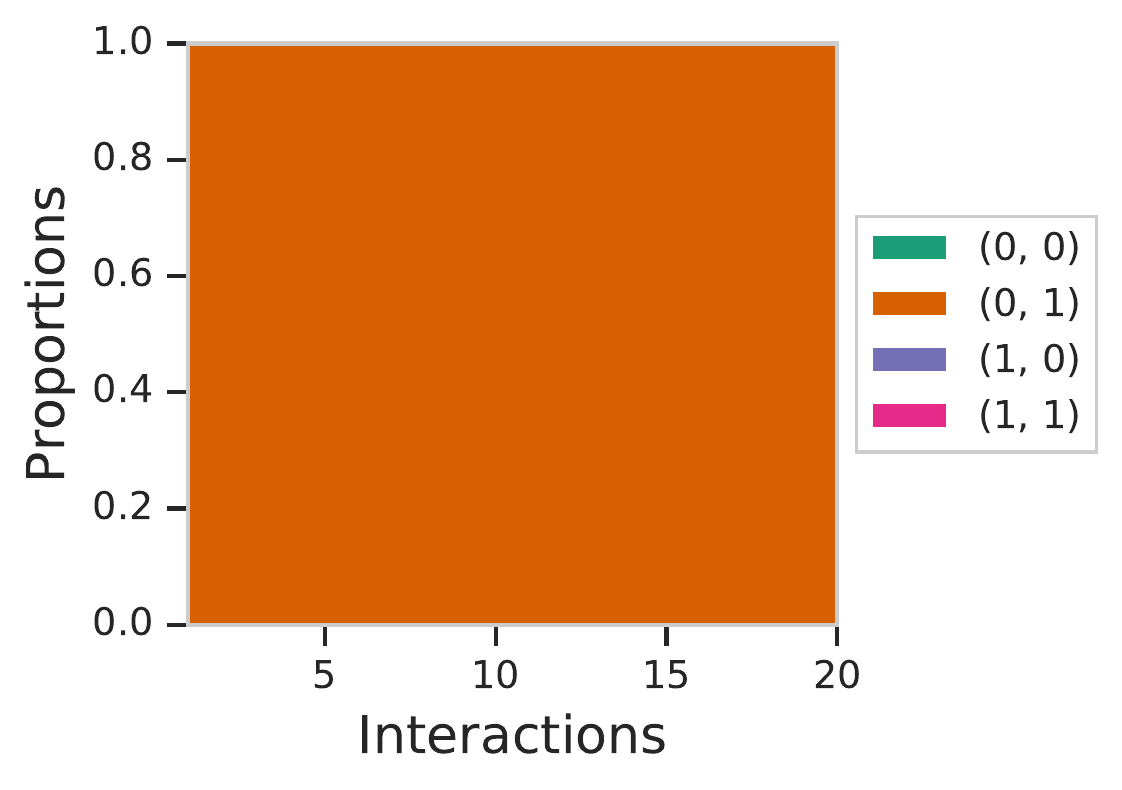}
        \caption{Strategy-wise sample counts.}
        \label{fig:ucb-ue-result-queries-symmetric}
    \end{subfigure}
    \caption{\responsegraphucb($\delta=0.1$, $\mathcal{S}=$UE, $\mathcal{C}=$UCB) evaluated on the game with payoff tables shown in  \cref{table:ucb-ue-payoff-table}, with knowledge of game symmetry exploited to reduce the total number of samples needed from 244 to 20 and sampling conducted for only a single strategy profile, $(0,1)$.}
    \label{fig:ucb-ue-symmetric}
\end{figure}

\paragraph{Symmetric games.} Let $\text{Sym}_K$ denote the symmetric group of degree $K$ over all players. 
A game is said to be symmetric if for any permutation $\rho \in \text{Sym}_K$, strategy profile $(s^1, \ldots, s^K) \in S$ and index $k \in [K]$, we have $\mathbf{M}^k(s^1,\ldots,s^K) = \mathbf{M}^{\rho(k)}(s^{\rho(1)}, \ldots, s^{\rho(K)})$. 

\paragraph{Exploiting symmetry in \responsegraphucb.} Knowledge of the symmetric constant-sum nature of a game (e.g., \cref{table:ucb-ue-payoff-table}) can significantly reduce sample complexity in \responsegraphucb:
this knowledge implies that payoffs for all symmetric strategy profiles are known a priori (e.g., payoffs for $(0,0)$ and $(1,1)$ are 0.5 in this example);
moreover, each observed outcome for a strategy profile $(s^1, \ldots, s^K)$ yields a `free' observation of $\mathbf{M}^{\rho(k)}(s^{\rho(1)}, \ldots, s^{\rho(K)})$ for all permutations $\rho \in \text{Sym}_K$, strategy profiles $(s^1, \ldots, s^K) \in S$, and players $k \in [K]$.
For the example in \cref{table:ucb-ue-payoff-table}, the symmetry-exploiting variant of the algorithm is able to reconstruct the true underlying response graph using only 20 samples of a single strategy profile $(0,1)$.

In \cref{fig:ucb-ue-symmetric}, we evaluate \responsegraphucb on the game shown in \cref{table:ucb-ue-payoff-table}, this time exploiting  the knowledge of game symmetry as discussed in \cref{sec:ucb-ue}.
Note that \cref{fig:ucb-ue-result-symmetric,fig:ucb-ue-result-queries-symmetric} should be compared, respectively, with \cref{fig:ucb-ue-result,fig:ucb-ue-result-queries} in the main paper. 
Confidence bounds corresponding to the symmetry-exploiting sampler (\cref{fig:ucb-ue-result-symmetric}) are guaranteed to be tighter than the non-exploiting sampler (\cref{fig:ucb-ue-result}), and so typically we can expect the former to require fewer interactions to arrive at a ranking conclusion with the same confidence as the latter (under the condition that the payoffs really are symmetric, as is the case in win/loss two-player games). 
This is observed in this particular example, where \cref{fig:ucb-ue-result} took 244 interactions to solve, while \cref{fig:ucb-ue-result-symmetric} took only 20 samples of a single strategy profile $(0,1)$ to correctly reconstruct the response graph.

\subsection{Kendall's distance for partial rankings}\label{sec:ranking_metrics}

We use Kendall's distance for partial rankings \citep{fagin2006comparing} when comparing two rankings, $r$ and $\hat{r}$ (e.g., as done in \cref{fig:ranking_error_comparisons})

Consider a pair of partial strategy rankings $r$ and $\hat{r}$ (i.e., wherein tied rankings are allowed).
Define a fixed parameter $p$.
The Kendall distance with penalty parameter $p$ is defined,
\begin{align*}
    K(r,\hat{r};p) = \sum_{\{i,j\} \in [|S|]} \bar{K}_{i,j}(r,\hat{r};p), 
\end{align*}
where $\bar{K}_{i,j}(r,\hat{r};p)$ is:
\begin{itemize}[leftmargin=.2in]
    \item $0$ when {$i,j$ are in distinct buckets in both $r,\hat{r}$, but in the same order (e.g., $r_i>r_j$ and $\hat{r}_i>\hat{r}_j$)}
    \item $1$ when {$i,j$ are in distinct buckets in both $r,\hat{r}$, but in the reverse order (e.g., $r_i>r_j$ and $\hat{r}_i<\hat{r}_j$)}
    \item $0$ when {$i,j$ are in the same bucket in both $r$ and $\hat{r}$}
    \item $p$ when {$i,j$ are in the same bucket in one of $r$ or $\hat{r}$, but different buckets in the other.}
\end{itemize}
It can be shown that Kendall's distance is a metric when $p\in[0.5,1]$. 
We use $p=0.5$ in our experiments.

%\putbib

%\end{bibunit}

\subsection{Preliminary experiments on collaborative filtering-based approaches}\label{sec:collab_filtering}

The pairing of bandit algorithms and \alpharank seems a natural means of computing rankings in settings where, e.g., one has a limited budget for adaptively sampling match outcomes.
Our use of bandit algorithms also leads to analysis which is flexible enough to be able to deal with $K$-player general-sum games.
However, approaches such as collaborative filtering may also fare well in their own right.
We conduct a preliminary analysis of this in here, specifically for the case of two-player win-loss games.

For such games, the meta-payoff table is given by a matrix $\mathbf{M}$ with all entries lying in $(0,1)$ (encoding loss as payoff $0$ and win as payoff $1$). Taking a matrix completion approach, we might attempt to reconstruct a low-rank approximation of the payoff table from an incomplete list of (possible noisy) payoffs, and then run \alpharank on the reconstructed payoffs. Possible candidates for the low-rank structure include: (i) the payoff matrix itself; (ii) the \emph{logit matrix} $\mathbf{L}_{ij} = \log(\mathbf{M}_{ij}/(1 - \mathbf{M}_{ij}))$; and (iii) the \emph{odds matrix} $\mathbf{O}_{ij} = \exp(\mathbf{L}_{ij})$. In particular, \citet{balduzzi2018re} make an argument for the (approximate) low-rank structure of the logit matrix in many applications of interest.

\begin{figure}[h!]
    \centering
    \includegraphics[keepaspectratio,width=\textwidth]{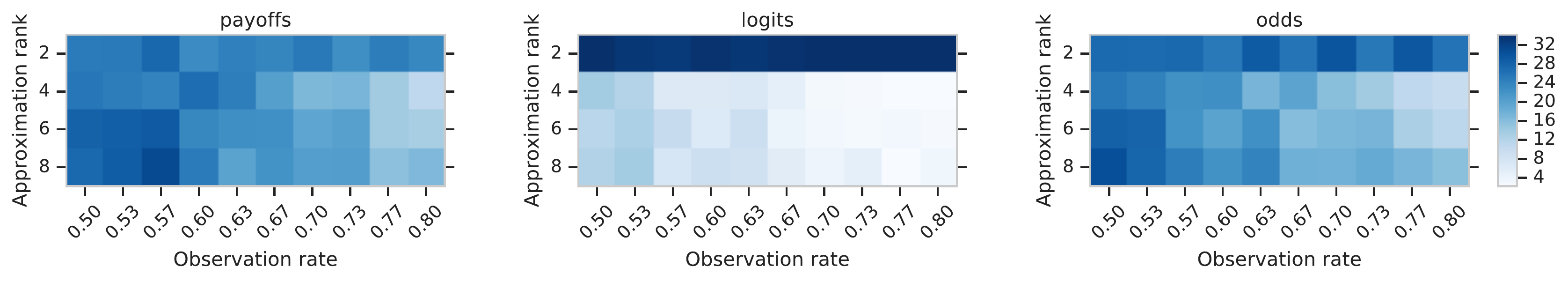}
    \caption{Ranking errors (Kendall's distance w.r.t. ground truth) from completion of, respectively, the sparse payoffs, logits, and odds matrices for Soccer dataset. 20 trials per combo of assumed matrix ranks and observation rates/density.}
    \label{fig:2dAlphaRank_ranking_err_vs_rank_and_obsrate_kendall_partial_soccer}
\end{figure}

We conduct preliminary experiments on this in \cref{fig:2dAlphaRank_ranking_err_vs_rank_and_obsrate_kendall_partial_soccer}, implementing matrix completion calculations via Alternating Minimization \citep{jain2013low}. 
We compare here the resulting \alpharank errors for the three reconstruction approaches for the Soccer meta-game.
We sweep across the observation rates of payoff matrix entries and the matrix rank assumed in the reconstruction. 
Interestingly,  conducting low-rank approximation on the logits (as opposed to the odds) matrix generally yields the lowest ranking error.
Overall, the bandit-based approach may be more suitable when one can afford to play all strategy profiles at least once, whereas matrix completion is perhaps more so when this is not feasible.
These results, we believe, warrant additional study of the performance of related alternative approaches in future work.

\end{appendices}
\end{document}